\pgfplotsset{compat=1.16}
\theoremstyle{plain}
\newtheorem{theorem}{Theorem}[section]
\newtheorem{proposition}[theorem]{Proposition}
\newtheorem{lemma}[theorem]{Lemma}
\newtheorem{corollary}[theorem]{Corollary}
\numberwithin{equation}{section}
\theoremstyle{definition}
\newtheorem{definition}[theorem]{Definition}
\theoremstyle{remark}
\newtheorem{remark}[theorem]{Remark}
\numberwithin{figure}{section}
\setlist[enumerate]{topsep = 1ex, leftmargin=.7cm, itemsep= -2pt}
\definecolor{color1}{HTML}{1b9e77}
\definecolor{color2}{HTML}{F1A340}
\definecolor{color3}{HTML}{5e3c99}
\newcommand{\R}{\mathbb{R}}
\newcommand{\Rp}{\mathbb{R}_{+}}
\newcommand{\Z}{\mathbb{Z}}
\newcommand{\C}{\mathbb{C}}
\newcommand{\vect}{\mathfrak{X}}
\DeclareMathOperator{\id}{{\mathbbm{1}}}
\DeclareMathOperator{\Lie}{Lie}
\newcommand{\vol}{\mathrm{vol}}
\newcommand{\dz}{\dd z \dd \bar z}
\newcommand{\detz}[1]{{\det}_\zeta \, {#1}}
\renewcommand{\tilde}[1]{\widetilde{#1}}
\newcommand{\blank}{\:\cdot\:}
\newcommand{\setsuchthat}[2]{\left\{ {#1} \:\middle|\: {#2} \right\}}
\newcommand{\setsuchthatinline}[2]{\{ {#1} \; |\; {#2} \}}
\newcommand{\restrict}[2]{{#1}\vert_{{#2}}}
\DeclareMathOperator{\Diffpan}{{Diff}_+^{\an}(S^1)}
\DeclareMathOperator{\DiffC}{{Def}_{\C}(S^1)}
\DeclareMathOperator{\Vir}{\mathfrak{Vir}_\charge}
\DeclareMathOperator{\VectR}{\vect^{\an}_{\R}(S^1)}
\DeclareMathOperator{\Witt}{\vect^{\an}_{\C}(S^1)}
\newcommand{\Detrc}{\Det_\R^\charge}
\newcommand{\Detrpc}{\Det_{\Rp}^{\charge}}
\newcommand{\moduli}[2]{\mathcal{M}_{{#1}, {#2}}}
\newcommand{\surfaces}[2]{\mathcal{B}_{{#1}, {#2}}}
\newcommand{\cylinders}{\surfaces{0}{2}}
\newcommand{\genus}{\mathsf{g}}
\newcommand{\boundaries}{\mathsf{b}}
\newcommand{\univteichm}{T(1)}
\newcommand{\an}{\mathrm{an}}
\newcommand{\charge}{\mathbf{c}}
\DeclareMathOperator{\Arg}{Arg}
\DeclareMathOperator{\Hom}{Hom}
\DeclareMathOperator{\Det}{Det}
\DeclareMathOperator{\kernel}{Ker}
\DeclareMathOperator{\coker}{Coker}
\DeclareMathOperator{\conf}{Conf}
\DeclareMathOperator{\cconf}{Adm}
\DeclareMathOperator{\Exp}{Exp}
\newcommand{\D}{\mathbb{D}}
\newcommand{\A}{\mathbb{A}}
\newcommand{\sew}[2]{\:\leftidx{{_#1}}{{\infty}}{{_#2}}\:}
\newcommand{\liou}[2]{\,\llbracket{\, #1}\, , {\,#2\,}\rrbracket\,}
\DeclareMathOperator{\lfunct}{S_L^0}
\newcommand{\lioub}[2]{\lfunct({#1}, {#2})}
\newcommand{\dersq}[1]{F_{#1}}
\newcommand{\Ua}{{U}}
\newcommand{\Uac}{{U^c}}
\newcommand{\Ub}{{V_t}}
\newcommand{\Ubc}{{V^c_t}}
\newcommand{\isom}[2]{{\operatorname{I}_{#2}^{#1}}}
\newcommand{\intera}{{\chi_{2}}}
\newcommand{\interb}{{\varrho_t}}
\newcommand{\interab}{{\chi_{1}}}
\newcommand{\interaPrime}{{\chi'_2}}
\newcommand{\interabPrime}{{\chi'_{1}}}
\newcommand{\detmult}[2]{\operatorname{m}_{#1, #2}}
\newcommand{\detproj}[1]{{\operatorname{p}_{{#1}}}}
\newcommand{\detprojinv}[1]{{\operatorname{p}^{-1}_{{#1}}}}
\newcommand{\natisophi}[3]{\operatorname{I}^{#2}_{#3, #1}}
\newcommand{\globalsection}{\mu^\charge}
\newcommand{\globalsectionphi}{\hat{\mu}^\charge}
\newcommand{\globalsectionzeta}{\mu^\charge_\zeta}
\newcommand{\algcocycle}{\gamma_{\charge}}
\newcommand{\gelfandfuks}{\omega_{\charge}}
\newcommand{\calpha}{\Gamma_\charge}
\newcommand{\rconv}{R}
\newcommand{\rconvunif}{R}
\newcommand{\deteval}{\mathsf{ev}_\gamma}
\newcommand{\detevalcircle}{\mathsf{ev}_{S^1}}
\newcommand{\sewiso}[2]{\operatorname{S}_{#1, #2}}
\newcommand{\exchange}{\mathsf{flip}}
\newcommand{\evaluation}{\mathsf{ev}}
\newcommand{\applyat}[3]{#1_{#2, #3}}
\newcommand{\paramcyl}[3]{
\mathchoice
    {\big(\: #1, \quad #2, \quad #3 \:\big)}
    {(#1, #2, #3)}
    {(#1, #2, #3)}
    {(#1, #2, #3)}
}
\newsavebox\CBox
\newcommand\hcancel[2][0.5pt]{%
  \ifmmode\sbox\CBox{$#2$}\else\sbox\CBox{#2}\fi%
  \makebox[0pt][l]{\usebox\CBox}%  
  \rule[0.5\ht\CBox-#1/2]{\wd\CBox}{#1}}
\newcommand{\ii}{\mathrm{i}\,}
\newcommand{\cancyl}[1]{\mathcal{U}{#1}}
\newcommand{\unif}[1]{\mathcal{F}_{#1}}
\newcommand{\unifsq}[1]{U_{#1}}
\newcommand{\bdryint}[1]{\mathsf{N}({#1})}
\newcommand{\Znn}{\mathbb{Z}_{\geq0}}
\newcommand{\Zpos}{\mathbb{Z}_{>0}}
\newcommand{\uset}[3][-0.35ex]{%
  \mathrel{\mathop{#3}\limits_{
    \vbox to#1{\kern-7\ex@
    \hbox{$\scriptscriptstyle#2$}\vss}}}}
\newcommand{\diffActingInline}[1]{\,
    \smash{\uset{{#1}}{*}} 
\, }
\newcommand{\diffActing}[1]{\diffActingInline{#1}}
\newcommand{\rotnumber}{\rho}
\DeclareRobustCommand\longtwoheadrightarrow
\title{From the Conformal Anomaly to the Virasoro Algebra}  
\date{}
\author{Sid Maibach\thanks{Institute for Applied Mathematics, University of Bonn, Endenicher Allee 60, 53115 Bonn, Germany.  
\protect\url{maibach@uni-bonn.de}} \;
and \,
Eveliina Peltola\thanks{Department of Mathematics and Systems Analysis, Aalto University, Otakaari 1, 02150 Espoo, Finland, and Institute for Applied Mathematics, University of Bonn, Endenicher Allee 60, 53115 Bonn, Germany.  
\protect\url{eveliina.peltola@aalto.fi}}}
\begin{document}
\maketitle

\begin{center}
\begin{minipage}{0.95\textwidth}
\abstract{
The conformal anomaly and the Virasoro algebra are fundamental aspects of 2D conformal field theory and conformally covariant models in planar random geometry. 
In this article, we explicitly derive the Virasoro algebra from an axiomatization of the conformal anomaly in terms of real determinant lines, 
one-dimensional vector spaces associated to Riemann surfaces with analytically parametrized boundary components.
Here, analytical orientation-preserving diffeomorphisms and deformations of the circle naturally act on the boundary components. 
We introduce a sewing operation on the real determinant lines over the semigroup of annuli, 
which then induces central extensions of the diffeomorphism group, as well as of the complex deformations.

Our main theorem shows that on the one hand, the cocycle associated to the central extension of diffeomorphisms is trivial, while on the other hand, 
the Lie algebra cocycle associated to the central extension of complex deformations is nontrivial, yielding the imaginary part of the Gel'fand-Fuks cocycle.
We thus answer a question, partly negatively and partly affirmatively, discussed by Andr\'e Henriques and Dylan Thurston in 2011.
The proof uses concrete computations, which we aim to be accessible to a wide audience.

We also show an explicit relation to loop Loewner energy, anticipating the real determinant lines to be pertinent to 
locally conformally covariant (Malliavin--Kontsevich--Suhov) measures on curves and loops, as well as to K\"ahler geometry and geometric quantization of moduli spaces of Riemann surfaces.
Inherently, the conformal anomaly and real determinant line bundles are expected to be universal, following a classification of modular functors.
}

\bigskip{}

\noindent\textbf{Keywords:} 
bordered surfaces, conformal field theory, Loewner energy, semigroup of annuli, Virasoro algebra \\ 

\noindent\textbf{MSC:}
Primary: 17B68, 81T40;  
Secondary: 30F45, 60D05, 81T50
\end{minipage}
\end{center}

\newpage

\tableofcontents

\newpage

\section{Introduction}

This work is inspired by the quest of gaining more concrete understanding of 
the emergent breaking of conformal symmetry 
in (euclidean) 2D conformal field theories (CFT),
as manifested by the conformal anomaly and central charge
(cf.~\cite{Gawedzki:CFT_lectures, Schottenloher:Mathematical_introduction_to_CFT}).
The same conformal anomaly is immanent to the conformal restriction property, which characterizes natural measures in models of planar random geometry
(cf.~\cite{LSW:Conformal_restriction_the_chordal_case,
Kontsevich-Suhov:On_Malliavin_measures_SLE_and_CFT, Werner:The_conformally_invariant_measure_on_self-avoiding_loops,
Chavez-Pickrell:Werners_measure_on_self-avoiding_loops_and_welding}).

The purpose of this article is to provide a concrete geometric description of the conformal anomaly in terms of a real determinant line bundle, 
and endow it with an algebraic structure\footnote{While geometrically, the real determinant line bundle is a trivial line bundle, the sewing operation is algebraically nontrivial. This is the key fact that gives rise to the the Virasoro algebra.}.
Our main theorem states that an infinitesimal description of the real determinant line bundle leads to the ubiquitous Virasoro algebra (Theorem~\ref{thm:main}).
Our methods involve elementary computations and concrete constructions, 
geometric \`{a}~la \cite{Friedan-Shenker:The_analytic_geometry_of_two-dimensional_conformal_field_theory, 
Kontsevich:Virasoro_and_Teichmuller_spaces, 
Segal:Definition_of_CFT}.
With future applications in mind, we
also highlight the universality of the real determinant line bundle
and its fundamental role in the geometry of critical interfaces~\cite{Malliavin:Canonic_diffusion_above_the_diffeomorphism_group_of_the_circle, Kontsevich:CFT_SLE_and_phase_boundaries},
large deviations of random curves~\cite{Wang:Equivalent_descriptions_of_the_Loewner_energy, Peltola-Wang:LDP},
K\"ahler geometry~\cite{Bowick-Rajeev:String_theory_as_the_Kahler_geometry_of_loop_space,
Takhtajan-Teo:Weil-Petersson_metric_on_the_universal_Teichmuller_space, 
Alekseev-Meinrenken:Symplectic_geometry_of_Teichmuller_spaces_for_surfaces_with_ideal_boundary},
and geometric quantization~\cite{Teschner-Vartanov:SUSY_gauge_theories_quantization_of_M_flat_and_CFT,
AST:Berezin_quantization_conformal_welding_and_the_Bott-Virasoro_group}.

Our results concern axioms of CFT as such, and thus apply to any 2D CFT.
Namely, Weyl covariance --- see Equation~\eqref{eq:weyl_partition} or~\cite{Gawedzki:CFT_lectures} --- axiomatizes the conformal anomaly and therefore establishes that CFTs can be defined on Riemann surfaces.
This perspective is the central aspect of 
Segal's axioms~\cite{Segal:Definition_of_CFT},
where sewing (or gluing) of Riemann surfaces with parametrized boundaries enables 
description of CFT content from just a few building blocks: disks, cylinders (annuli), and pairs of pants.
In this approach, the required disks and pants are characterized by a finite number of parameters.
However, due to the incorporation of boundary parametrizations, the moduli space of cylinders is an infinite-dimensional space.
Hence, pertaining to revealing the underlying infinite-dimensional Virasoro algebra by a direct computation, 
the primary interest to the present work
is the semigroup formed by cylinders and the sewing operation ---
the semigroup of annuli~\cite{Segal:Definition_of_CFT, Neretin:Holomorphic_extensions_of_representations_of_the_group_of_diffeomorphisms_of_the_circle, Neretin:book, Radnell-Schippers:The_semigroup_of_rigged_annuli_and_the_Teichmueller_space_of_the_annulus, 
BGKR:Semigroup_of_annuli_in_Liouville_CFT}.
(See also~\cite{GKRV:Segals_axioms_for_Liouville_theory, BGKRV:Virasoro_structure_and_the_scattering_matrix_for_Liouville_CFT} and references therein for a recent probabilistic approach in the special case of Liouville CFT.)

In random geometry, most works deal with simply connected planar domains.
In order to extend important probabilistic objects, 
such as Schramm--Loewner evolution (SLE) random 
curves~\cite{Lawler:Partition_functions_loop_measure_and_versions_of_SLE,
Dubedat:SLE_and_Virasoro_representations_localization, Zhan:SLE_loop_measures}, 
or probabilistic formulations of Liouville quantum gravity (LQG) surfaces~\cite{Polyakov:Quantum_geometry_of_bosonic_strings, 
Duplantier-Sheffield:LQG_and_KPZ, GRV:Polyakovs_formulation_of_2D_bosonic_string_theory}, 
to multiply connected domains and Riemann surfaces, 
one has to deal with the effect of the conformal moduli (see also the recent~\cite{ARS:The_moduli_of_annuli_in_random_conformal_geometry} for an interesting perspective). 
In this context, the conformal anomaly --- the starting point of the present article --- appears in the formulation of natural measures on curves and loops on Riemann surfaces~\cite{Malliavin:Canonic_diffusion_above_the_diffeomorphism_group_of_the_circle, 
Kontsevich:CFT_SLE_and_phase_boundaries, 
Kontsevich-Suhov:On_Malliavin_measures_SLE_and_CFT}.
The construction of these Malliavin--Kontsevich--Suhov (MKS) loop measures
has gained significant interest lately~\cite{Friedrich:On_connections_of_CFT_and_SLE,
Werner:The_conformally_invariant_measure_on_self-avoiding_loops, Dubedat:SLE_and_Virasoro_representations_localization, Benoist-Dubedat:SLE2_loop_measure, 
Zhan:SLE_loop_measures, AHS:SLE_loop_via_conformal_welding_of_quantum_disks, 
Carfagnini-Wang:OM_functional_for_SLE_loop_measures}. 
However, the key conjecture that the conformal restriction covariance property, structurally expressed by real determinant lines, \emph{uniquely} determines the MKS loop measures~\cite{Kontsevich-Suhov:On_Malliavin_measures_SLE_and_CFT} is still open\footnote{Let us mention that after our work, the uniqueness of the MKS loop measure on the Riemann sphere was proven by Baverez~\&~Jego~\cite{Baverez-Jego:The_CFT_of_SLE_loop_measures_and_the_Kontsevich-Suhov_conjecture} by using an infinitesimal approach to SLE loop measures.}
--- except for the special case without conformal anomaly~\cite{Werner:The_conformally_invariant_measure_on_self-avoiding_loops, 
Chavez-Pickrell:Werners_measure_on_self-avoiding_loops_and_welding}.
Establishing the uniqueness
would also have important applications to constructions involving
welding of LQG surfaces and their decoration by SLE.

In this article, we provide a detailed construction of the real determinant line bundles and by explicit computation show the emergence of the nontrivial central extension of the classical conformal symmetry: the Virasoro algebra (see Theorem~\ref{thm:main}). 
It is not only a fundamental aspect of the algebraic content of CFT, but also 
has geometric significance 
in the spirit of (geometric) quantization, where the Virasoro algebra is simultaneously expected to provide a symplectic form on the moduli spaces of (bordered) Riemann surfaces and the curvature of an anticipated connection on the real (and complex) determinant line bundles.
This is realized in the diffeomorphism group of the unit circle~\cite{Bowick-Rajeev:The_holomorphic_geometry_of_closed_bosonic_string_theory_and_DiffS1modS1, 
Bowick-Rajeev:String_theory_as_the_Kahler_geometry_of_loop_space,
Gordina-Lescot:Riemannian_geometry_of_DiffS1, Gordina:Riemannian_geometry_of_DiffS1_revisited},
or alternatively, the moduli space of disks covered by the universal Teichm\"uller space $\univteichm$~\cite{Nag-Verjovsky:DiffS1_and_Teichmuller_spaces, Nag-Sullivan:Teichmuller_theory_and_the_universal_period_mapping_via_quantum_calculus_and_H_half_space_on_the_circle, Takhtajan-Teo:Weil-Petersson_metric_on_the_universal_Teichmuller_space}.
In particular, $\univteichm$ admits is a K\"ahler potential, which is given by the ``universal Liouville action''~\cite{Schiffer-Hawley:Connections_and_conformal_mapping, Takhtajan-Teo:Weil-Petersson_metric_on_the_universal_Teichmuller_space}.
This K\"ahler potential also coincides with loop Loewner energy \cite{Wang:Equivalent_descriptions_of_the_Loewner_energy}, 
which is the Onsager-Machlup functional for SLE loops \cite{Carfagnini-Wang:OM_functional_for_SLE_loop_measures} and
their anticipated rate function for large deviations in the semiclassical limit, where the central charge approaches 
negative infinity~\cite{Peltola-Wang:LDP}. 
(See also Theorem~\ref{thm:loewner}, where we give an explicit relation to real determinant lines.)

\subsection{Conformal anomaly and central extensions}

Let $\Sigma$ be a compact connected Riemann surface and $\conf(\Sigma)$ the conformal class of metrics on $\Sigma$.  
\emph{Weyl transformations}
refer to
the action of functions
${\sigma \in C^\infty(\Sigma, \R)}$ on
$\conf(\Sigma)$ by locally rescaling a metric $g \in \conf(\Sigma)$ to $e^{2\sigma}g$.
Denote by $\nabla_g$, $R_g$, and $\vol_g$ respectively the divergence, 
Gaussian curvature,
and volume form on $\Sigma$ in the metric $g$, 
and by $k_g$ and $\tilde \vol_g$ the boundary curvature and the volume form on $\partial \Sigma$ induced by $g$.
The \emph{conformal anomaly} of such a Weyl transformation is defined by the functional
\begin{align}
\lfunct(\sigma, g) \coloneqq \frac{1}{12 \pi} \iint_\Sigma \bigg(
\frac{1}{2} |\nabla_g \sigma|_g^2 + R_g \sigma
\bigg) \vol_g
+ \frac{1}{12 \pi} \int_{\partial \Sigma} k_g \sigma \, \tilde \vol_g
.
\label{eq:weyl_liouville_action}
\end{align}

In the literature, $\lfunct(\sigma, g)$ is sometimes referred to as the ``Liouville'' or ``linear dilaton'' action, making a direct connection to quantum gravity and string theory\footnote{
The action functional~\eqref{eq:weyl_liouville_action} coincides with that of a linear dilaton CFT.
In Liouville CFT, the action functional~\eqref{eq:weyl_liouville_action}
is modified by an additional factor $Q$ in the curvature term
plus an interaction term of the form $\mu e^{\gamma \sigma}$, where $\gamma \in (0,2)$ is a parameter determining the central charge, 
and the coupling constant $\mu > 0$ is called the cosmological constant.
Since by taking $\mu = 0$ and $Q = 1$, we obtain the action~\eqref{eq:weyl_liouville_action}, we will denote it by $\lfunct$.
}. 
However, the conformal anomaly is common to all conformal field theories: 
it is postulated~\cite{Gawedzki:CFT_lectures} that the partition function $Z_g$ of any 2D CFT on $\Sigma$ is a function of the metric $g$, which only essentially depends on the conformal class.
(This is a key feature that distinguishes CFTs from other two-dimensional quantum field theories.) 
Specifically, $Z_g$ is diffeomorphism invariant, and \emph{Weyl covariant}: 
\begin{align}
\label{eq:weyl_partition}
Z_{e^{2\sigma} g} = e^{\charge \, \lfunct(\sigma, g)} Z_g ,
\end{align}
where $\charge \in \R$ is the \emph{central charge} of the CFT.

In their formulation of MKS loop
measures
on $\Sigma$, 
Kontsevich~\&~Suhov~\cite{Kontsevich-Suhov:On_Malliavin_measures_SLE_and_CFT} proposed a (local) conformal restriction covariance property for the loop measure, 
involving a reformulation of the conformal anomaly as a pairing of metrics, 
\begin{align} \label{eq:pairing}
\begin{split}
\liou{\blank}{\blank} & \colon \conf(\Sigma) \times \conf(\Sigma) \longrightarrow \R, \\
\liou{g_1}{g_2} & \coloneqq \frac{1}{48\pi \ii} \iint_\Sigma (f_1 - f_2) \partial \bar \partial (f_1 + f_2),
\end{split}
\end{align}
where locally $g_i = e^{f_i} \dz$ for $f_i \in \C^\infty(\Sigma, \R)$ and $i=1,2$.
For surfaces without boundary and metrics $g_1 = g$ and $g_2 = e^{2\sigma}g$,~\eqref{eq:weyl_liouville_action} and~\eqref{eq:pairing}
are equivalent: $\liou{g_1}{g_2} = -\lfunct(\sigma, g)$.
For surfaces $\Sigma$ with boundary, 
Equations~\eqref{eq:weyl_liouville_action}
and~\eqref{eq:pairing} still agree up to a sign and a boundary term,
which vanishes if we restrict the conformal class $\conf(\Sigma)$ to \emph{admissible} metrics $\cconf(\Sigma)$
which near the boundary are the pushforwards of the flat metric $\dz$ on 
the cylinder $S^1 \times \R \cong \C / 2\pi \Z$ along boundary parametrizations 
(see Sections~\ref{section:background_cylinders_2}--\ref{section:properties}).

The antisymmetry and cocycle properties of the pairing~\eqref{eq:pairing} enable 
the definition of a real determinant line $\Detrc(\Sigma)$ of the surface $\Sigma$ (see Section~\ref{section:setup_detlines}),
forming a real line bundle also mentioned in the seminal 
work~\cite{Friedan-Shenker:The_analytic_geometry_of_two-dimensional_conformal_field_theory} of Friedan~\&~Shenker. 
The line $\Detrc(\Sigma)$ is a one-dimensional real vector space of formal multiples of metrics $\lambda [g]$ for $\lambda \in \R$ and $g \in \cconf(\Sigma)$, subject to the relation\footnote{In principle, one can define the relation~\eqref{eq:detrelation} over the full conformal class, $\conf(\Sigma)$, using the conformal anomaly $\lfunct(\sigma, g)$ (see Appendix~\ref{appendix:cocycle_property_boundary}). 
However, it turns out that the pairing~\eqref{eq:pairing} on admissible metrics is particularly suitable for computations involving the sewing operation~\eqref{eq:intro_sewing}.}
\begin{align}
    \label{eq:detrelation}
[g_1] = e^{\charge \liou{g_2}{g_1}} [g_2] , \qquad g_1, g_2 \in \cconf(\Sigma).
\end{align}
One may think of these determinant lines as the notion which turns a CFT partition function~\eqref{eq:weyl_partition} into a \emph{Weyl invariant} object, $Z \coloneqq Z_g [g] \in \Detrc(\Sigma)$. 

In Segal's axiomatic approach to CFT~\cite{Segal:Definition_of_CFT}, 
quantities like the partition function are first computed locally and then composed by means of a sewing operation. 
To make the sewing unambiguous and the result a Riemann surface, one should pick reasonably regular (e.g.,  analytical~\cite{Segal:Definition_of_CFT, Huang:2D_Conformal_geometry_and_VOAs}, or quasisymmetric or Weil--Petersson~\cite{RSS:Quasiconformal_Teichmuller_theory_as_analytical_foundation_for_CFT}) parametrizations of the boundary components. 
As the present work is concerned with the nature of the conformal anomaly rather than questions about the various choices of regularity as such, 
we will assume that the boundary components of $\Sigma$ have analytical parametrizations (Section~\ref{section:preliminaries}).
The surfaces for which we define the real determinant lines $\Detrc(\Sigma)$ form moduli spaces
\begin{align}
\label{eq:all_surfaces}
\moduli{\genus}{\boundaries} = 
\left\{ \textnormal{\parbox{0.58 \textwidth}{\centering connected compact genus $\genus$ Riemann surfaces $\Sigma$
with $\boundaries$ enumerated and analytically parametrized boundary components $\partial_1 \Sigma, \ldots, \partial_\boundaries \Sigma$}} \right\}_{/ \; \mathrm{isom.}} 
\end{align}
for all genera $\genus \in \Znn$ and number $\boundaries \in \Znn$ of boundary components.
The key constructions in the present article only involve 
cylinders (or annuli), 
comprising
the special case $\moduli{0}{2}$
(see Sections~\ref{section:modular_functors} and~\ref{section:genus} for remarks about the general case).
In fact, we shall only be concerned with the algebraic structure of the moduli space, given by the \emph{sewing operation},
defined by identifying the $2$nd boundary of a cylinder $A$ with the $1$st boundary of a cylinder $B$ using the respective boundary parametrizations,
\begin{align}
A \sew{2}{1} B
\coloneqq (A \sqcup B) /_{\partial_2 A = \partial_1 B} .
\label{eq:sewing_operation}
\end{align}
This turns $\moduli{0}{2}$ into a semigroup.
For concreteness, we will write $A, B \in \cylinders$ to denote any explicit representatives of $[A], [B] \in \moduli{0}{2}$, i.e., $\cylinders$ is the proper class of cylinders (see Remark~\ref{remark:proper_class}).

Importantly, admissible metrics line up smoothly across the seam in $A \sew{2}{1} B$.
Hence, we can introduce an extension of the sewing operation~\eqref{eq:sewing_operation} to the real determinant lines by the bilinear maps
\begin{equation}
    \label{eq:intro_sewing}
\begin{aligned}
\Detrc(A) \otimes \Detrc(B) &\xlongrightarrow{\sim} \Detrc(A \sew{2}{1} B) , \\
\lambda_1 [g] \otimes \lambda_2 [h] &\longmapsto \lambda_1 \lambda_2 [g \cup h].
\end{aligned}
\end{equation}
These isomorphisms extend the semigroup structure on $\moduli{0}{2}$ to the determinant lines.

The main object of the present work is the generalization of the real determinant lines to the set~\eqref{eq:complex_deformations} of \emph{complex deformations} $\DiffC$ of the unit circle $S^1$ inside the cylinder $S^1 \times (-1, 1)$.
For some $\varepsilon > 0$, an element $\phi \in \DiffC$ extends to a complex-analytic map
\begin{align*}
\phi \colon S^1 \times (-\varepsilon, \varepsilon) \longrightarrow S^1 \times (-1, 1),
\end{align*}
where we endow the cylinders with the complex structure given by the coordinate $z = \theta + \ii x$ for $\theta \in S^1$ and $x$ in the interval (see Section~\ref{section:background_Diff}). In particular, $\DiffC$ includes
\begin{align*}
\Diffpan = \left\{ \textnormal{\parbox{0.45 \textwidth}{
\centering
real-analytic, orientation-preserving diffeomorphisms of the unit circle $S^1$
}} \right\} .
\end{align*}
The generalized real determinant lines form a central extension over $\DiffC$ and, in particular, a central extension of $\Diffpan$ in the following manner.

The complex deformations act on cylinders by deformation of one of the boundaries: 
\begin{align*}
A \diffActing{i} \phi , \qquad \textnormal{for} \ A \in \cylinders, \ \phi \in \DiffC, \ i=1,2.
\end{align*}
When $\phi \in \Diffpan$, this action is just a reparametrization of the $i$th boundary component.
Detailed constructions are given in Section~\ref{section:background_cylinders_2}.
Given a complex deformation $\phi \in \DiffC$ and a cylinder $A \in \cylinders$, we define 
\begin{align*}
\Detrc(\phi, A) \coloneqq \Detrc(A \diffActing{1} \phi) \otimes (\Detrc(A))^\vee
\end{align*}
to be the real determinant line of $A \diffActingInline{1} \phi$ tensored with the dual space of the real determinant line of $A$, 
that is, $(\Detrc(A))^\vee \coloneqq \Hom_\R(\Detrc(A), \R)$. 
Note that, as the real determinant lines are one-dimensional, 
the resulting space $\Detrc(\phi, A)$ is also one-dimensional --- 
it is the real determinant line of the complex deformation ${\phi \in \DiffC}$ with respect to the cylinder $A \in \cylinders$~\cite{Segal:Definition_of_CFT, 
Huang:2D_Conformal_geometry_and_VOAs}. 
As a set, the central extension of $\DiffC$ is defined as
\begin{align}
\label{eq:def_extension}
\Detrpc(\DiffC) \coloneqq \big\{ (\phi, \lambda \globalsectionphi(\phi)) \: \big| \: \phi \in \DiffC, \ \lambda > 0\big\} ,
\end{align}
where $\globalsectionphi(\phi) \neq 0$ is a canonical element of $\Detrc(\phi, \A)$ defined in Section~\ref{section:central_extension}, and $\A$ is a standard cylinder.
Importantly, the real determinant lines are independent of the choice of the cylinder in the sense that for $A, B, C \in \cylinders$, there are canonical isomorphisms
\begin{align*}
\natisophi{\phi}{A}{B} \colon \Detrc(\phi, A) \longrightarrow \Detrc(\phi, B),
\end{align*}
such that
$\natisophi{\phi}{A}{A} = \id_{\Detrc(\phi, A)}$
and $\natisophi{\phi}{B}{C} \circ \natisophi{\phi}{A}{B}
= \natisophi{\phi}{A}{C}$.
The multiplication in the central extension $\Detrpc(\DiffC)$ 
is given by 
composition in $\DiffC$ for the first component, 
and bilinear maps for composable $\phi_1, \phi_2 \in \DiffC$, $A \in \cylinders$ for the second component,
\begin{align*}
\detmult{\phi_1}{\phi_2} \colon \Detrc(\phi_1, A) \otimes \Detrc(\phi_2, A) \longrightarrow \Detrc(\phi_1\phi_2, A),
\end{align*}
which satisfy an associativity axiom and are independent of the choice of the annulus~$A$. (See Section~\ref{section:central_extension} for details.)
The sewing isomorphisms $\detmult{\phi_1}{\phi_2}$ were introduced by Huang in the context of complex determinant lines~\cite[Appendix~D]{Huang:2D_Conformal_geometry_and_VOAs}, whose precise relation to $\Detrc(\Sigma)$ 
remains unclear to us at the moment. (See Section~\ref{section:modular_functors} for some comments.)

\subsection{Main results: Identification of the cocycle}

For concrete computations, it is usually much more convenient to consider the Lie algebra central extensions induced by Lie group central extensions by taking differentials.
In the case of present interest, 
the Lie algebra of $\Diffpan$ is $\VectR$, the Lie algebra of real-analytic vector fields on $S^1$. 
The complexification $\Witt = \VectR \otimes\: \C$ is known as the \emph{Witt algebra}, and it is the Lie algebra of the complex deformations $\DiffC$ in the sense that flows of complex vector fields yield complex deformations (see also Section~\ref{section:background_Diff}).

Central extensions of Lie algebras are characterized by two-cocycles in the Lie algebra cohomology.
In the case of $\Witt$, all two-cocycles with coefficients in $\C$ are cohomologous to the Gel'fand--Fuks cocycles $\gelfandfuks$ for some $\charge \in \C$~\cite{Bott:On_the_characteristic_classes_of_groups_of_diffeomorphisms, Guieu-Roger:Virasoro_book},
\begin{align}
\label{eq:cocycle_virasoro}
\gelfandfuks(v, w)
= \frac{\charge}{24\pi}
\int_0^{2\pi}
v'(\theta)
w''(\theta)
\; \dd \theta,
\qquad v, w \in \Witt.
\end{align}
These induce the central extension known as the \emph{Virasoro algebra} $\Vir$ of central charge $\charge \in \C$. (We include the central charge in the notation to highlight a correspondence at the level of cocycles. See Remark~\ref{rem:Vir_exact_sequence}.)

A cocycle is obtained from the abstract central extension by picking a section (e.g., $\Witt \to \Vir$),
which is linear but not a Lie algebra homomorphism, unless the central extension is trivial. Different sections give cocycles differing by coboundaries.
Group cocycles are also obtained from corresponding sections 
by differentiation,  
and for the central extension $\Detrpc(\DiffC)$ defined in~\eqref{eq:def_extension},  
there is a convenient section $\globalsectionphi(\phi)$ defined in terms of 
uniformized representatives of cylinders ---  
see Equation~\eqref{eq:diffglobalsection} in Section~\ref{section:central_extension}. 
Even though $\DiffC$ is not a Lie group, 
its cocycle may still be differentiated via flows of vector fields --- see Equation~\eqref{eq:algebra_cocycle_general}.
This yields the corresponding cocycle $\algcocycle$ of the Lie algebra central extension of $\Witt$ with coefficients in $\R$, 
as specified in Theorem~\ref{thm:main}.

The main result of this article is the explicit computation of this Lie algebra two-cocycle 
$[\algcocycle] \in H^2(\Witt, \R)$ 
of the real central extension $\Detrpc(\DiffC)$. 
Our result gives a direct proof that the central extension of real vector fields $\VectR$
induced by real determinant lines of central charge $\charge \in \R$ is trivial. Nonetheless, the central extension of complex vector fields $\Witt$ is nontrivial, and gives the imaginary part of the Gel'fand--Fuks cocyle with the same central charge. 

\begin{theorem}
\label{thm:main}
The Lie algebra of the central extension $\Detrpc(\DiffC)$, $\charge \in \R$, with respect to the section $\globalsectionphi(\phi)$ defined in~\eqref{eq:diffglobalsection}, is given by the cocycle
\begin{align*} 
\algcocycle(v ,w)
= \frac{\charge}{24\pi}
\Im
\int_0^{2\pi}
v'(\theta)
w''(\theta)
\; \dd \theta,
\qquad v, w \in \Witt.
\end{align*}
It vanishes for $v, w \in \VectR$ and equals the imaginary part of the Gel'fand-Fuks cocycle $\gelfandfuks$ with the same central charge.
\end{theorem}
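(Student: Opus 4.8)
The plan is to reduce the statement to an explicit computation of the group cocycle attached to the section $\globalsectionphi$ and then differentiate. First I would unpack the definition of $\Detrc(\phi,\A)$ and of $\globalsectionphi(\phi)$: since $\globalsectionphi(\phi)$ is built from a uniformizing representative of the cylinder $\A \diffActing{1}\phi$, the multiplication $\detmult{\phi_1}{\phi_2}$ sends $\globalsectionphi(\phi_1)\otimes\globalsectionphi(\phi_2)$ to $e^{\charge\, C(\phi_1,\phi_2)}\,\globalsectionphi(\phi_1\phi_2)$ for some explicit $\R$-valued group $2$-cocycle $C$ expressed through the pairing $\liou{\blank}{\blank}$ of admissible metrics. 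Concretely, sewing $\A\diffActing{1}\phi_1$ with $\A\diffActing{1}\phi_2$ produces $\A\diffActing{1}(\phi_1\phi_2)$ but with a metric differing from the uniformized one by a Weyl factor supported near the seam, and $C(\phi_1,\phi_2)$ is the value of $\liou{\blank}{\blank}$ on that pair of metrics. Writing $g_i=e^{f_i}\dz$ with $f_i$ the (real-analytic) conformal factors coming from the uniformizing maps, formula~\eqref{eq:pairing} gives
\begin{align*}
C(\phi_1,\phi_2)=\frac{1}{48\pi\ii}\iint (f_1-f_2)\,\partial\bar\partial(f_1+f_2),
\end{align*}
and the task is to identify the infinitesimal version of this.

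Next I would pass to the Lie algebra level. Fix real-analytic vector fields $v,w$ on $S^1$, take one-parameter subgroups $\phi_s=\Exp(sv)$, $\psi_t=\Exp(tw)$ of $\DiffC$ (allowing complex $v,w$ via complex flows as in Section~\ref{section:background_Diff}), and recall that for any normalized group cocycle $C$ the associated Lie algebra cocycle is
\begin{align*}
\algcocycle(v,w)=\left.\frac{\partial^2}{\partial s\,\partial t}\right|_{s=t=0}\big(C(\phi_s,\psi_t)-C(\psi_t,\phi_s)\big).
\end{align*}
So I need the second mixed derivative of $C$ at the identity. The key computational inputs are: (i) the derivative of the uniformizing conformal factor $f$ associated to $\A\diffActing{1}\Exp(sv)$ in $s$ at $s=0$, which is a linear functional of $v$ determined by the linearized uniformization (a Poisson-type extension of $v$ off the boundary circle, holomorphic in the appropriate half-neighborhood); and (ii) the bilinearity and the $\partial\bar\partial$ structure of~\eqref{eq:pairing}, which collapses the double integral over the cylinder to a boundary integral over $S^1$ by Stokes, since $\partial\bar\partial$ of a harmonic extension is zero in the interior and only boundary terms survive. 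After this collapse the integrand becomes a local polynomial differential expression in $v,w$ and their $\theta$-derivatives, and antisymmetrizing in $v\leftrightarrow w$ together with integration by parts should leave precisely $\frac{\charge}{24\pi}\int_0^{2\pi}v'w''\,\dd\theta$ up to taking imaginary part — the $\Im$ appearing because the conformal factors $f_i$ are real while the deformation direction enters holomorphically, so the pairing $\frac{1}{48\pi\ii}\iint(\cdots)$ extracts an imaginary part.

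Having obtained $\algcocycle(v,w)=\frac{\charge}{24\pi}\Im\int_0^{2\pi}v'(\theta)w''(\theta)\,\dd\theta$, the remaining two assertions are immediate: for $v,w\in\VectR$ the integrand $v'w''$ is real (being a product of real-analytic real-valued functions), so its imaginary part vanishes pointwise and $\algcocycle\equiv 0$ on $\VectR$; and comparing with~\eqref{eq:cocycle_virasoro} shows $\algcocycle=\Im\gelfandfuks$ with the same $\charge$, which is what is claimed. The triviality over $\VectR$ then also follows abstractly since $H^2(\VectR,\R)$ is spanned by the real Gel'fand--Fuks class, and a cocycle that is identically zero is a fortiori a coboundary; conversely nontriviality over $\Witt$ follows because $\Im\gelfandfuks$ is a nonzero multiple of the standard generator of $H^2(\Witt,\R)$.

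The main obstacle I anticipate is step (i): extracting the precise first-order variation of the uniformizing conformal factor under the deformation $\A\diffActing{1}\Exp(sv)$, i.e.\ understanding how reparametrizing/deforming one boundary of the standard cylinder changes the uniformizing biholomorphism and hence the admissible metric near the seam. This requires making the ``linearized conformal welding'' or linearized uniformization explicit enough to feed into~\eqref{eq:pairing}; getting the correct holomorphic (rather than merely harmonic) extension of $v$ off $S^1$, and bookkeeping the contributions from both the deformed and the undeformed copies in the sewing $\detmult{\phi_1}{\phi_2}$, is where sign and factor errors are most likely to creep in, and where the asymmetry that ultimately produces the nontrivial $\Im$-part (as opposed to a coboundary) must be pinned down carefully. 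Everything after the double integral is reduced to a boundary integral should be routine integration by parts on $S^1$.
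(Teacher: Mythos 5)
Your high-level strategy coincides with the paper's: express the group cocycle attached to the section $\globalsectionphi$ through the pairing $\liou{\blank}{\blank}$, differentiate along flows, antisymmetrize, and reduce to a circle integral. (One small discrepancy: the paper's convention~\eqref{eq:algebra_cocycle_general} carries a factor $\tfrac{1}{2}$ in front of the antisymmetrized mixed derivative, which you omit; this shifts the central charge by a factor of $2$.) However, there is a genuine gap at the crux. Your plan makes the entire computation hinge on ``step (i)'': the first-order variation of the uniformizing conformal factor of $\A \diffActingInline{1} \Exp(tv)$, i.e.\ the linearized uniformization/welding. You correctly flag this as the main obstacle, but you do not resolve it, and the theorem's content lives precisely there. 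The paper's proof is engineered to \emph{avoid} this computation: the cocycle~\eqref{eq:alpha_cocycle} is a combination of three terms $\liou{\unifsq{\cdot}\dz}{g(\cdot)}$, and upon taking $\pdv{}{t}{s}\big|_{t=s=0}$ the contributions involving the uniformizing factors $\unifsq{\phi_t}$, $\unifsq{\psi_s}$, $\unifsq{\phi_t\psi_s}$ either drop out (no dependence on one parameter), vanish because both logarithms are $0$ at the identity, or --- in the case of the terms~(\ref{eq:int_u_t},~\ref{eq:int_u_s}) --- sum to an expression \emph{symmetric} in $v \leftrightarrow w$ that cancels in the antisymmetrization~\eqref{eq:full_cocycle}. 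The surviving antisymmetric part comes only from the elementary derivative $\pdv{t}\dersq{\phi_t}(z)\big|_{t=0} = -2\Re(v'(z))$ of the reparametrization factor, not from the uniformizing map. Without either (a) this cancellation observation or (b) an explicit formula for the linearized uniformization (a nontrivial analytic input you have not supplied), your computation cannot be completed.

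A second, related imprecision: you assert that the double integral ``collapses to a boundary integral over $S^1$ by Stokes, since $\partial\bar\partial$ of a harmonic extension is zero in the interior.'' The admissible metrics entering the pairing are not harmonic extensions; they are the conformal factors $\dersq{\phi}(z)$ interpolated with $1$ by cut-off functions (cf.~\eqref{eq:metrics_list}), so the integrand localizes to the annular supports of the cut-off derivatives, not to $\partial\A$. The final reduction to $\frac{\charge}{24\pi}\Im\int_0^{2\pi} v'w''\,\dd\theta$ then uses the identity $\Re(v')\Im(w'')+\Im(v')\Re(w'')=\Im(v'w'')$ after antisymmetrization and a contour deformation to $x=0$ justified by analyticity of $v'w''$ --- this is the actual source of the $\Im$, rather than your heuristic that the pairing ``extracts an imaginary part.'' Your concluding remarks (vanishing on $\VectR$ because $v'w''$ is then real, identification with $\Im\gelfandfuks$) are fine once the formula for $\algcocycle$ is established.
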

The proof of this theorem is the content of Section~\ref{section:computation}.
Our result answers a question discussed by Andr\'e Henriques and Dylan Thurston in 2011 in mathoverflow\footnote{See~\url{https://mathoverflow.net/questions/61601}.}. 

\begin{remark}
The agreement of the central charge of $\gelfandfuks$ and $\algcocycle$ depends on the choices of several conventions throughout the article, notably in Equations~(\ref{eq:weyl_liouville_action},~\ref{eq:pairing},~\ref{eq:cocycle_virasoro},~\ref{eq:diff_cocycle},~\ref{eq:algebra_cocycle_general}).
However, these conventions change the results by factors of $2$ and signs only.
In~\textnormal{\cite{Huang:2D_Conformal_geometry_and_VOAs}} a disagreement of the central charge of the complex determinant line bundle and the cocycle~\eqref{eq:cocycle_virasoro} by a factor of $2$ was observed \textnormal{(}with different convention for~\eqref{eq:diff_cocycle}\textnormal{)}.
\end{remark}

\begin{remark}
\label{rmk:diffeomorphisms_triviality}
The following idea, pointed out to us by Andr\'e Henriques, can be used to directly argue that the cocycle $\algcocycle$ vanishes on real vector fields and moreover that the group level cocycle $\calpha(\phi, \psi)$ (Definition~\ref{def:cocycle_deformation}) is a coboundary for $\phi, \psi \in \Diffpan$. 
Diffeomorphisms in $\Diffpan$ may be approximated by ``thin'' annuli, where the boundary components of the annuli overlap~\cite{Henriques:Conformal_field_theory_lectures}.
Thin annuli do not admit admissible metrics, which is why 
it is useful to consider the conformal anomaly with boundary term~\eqref{eq:weyl_liouville_action}.
A~diffeomorphism corresponds to a ``completely thin'' annulus where both boundary components are given by $S^1$, one of which is parametrized by the identity $\id_{S^1}$ and the other by the diffeomorphism.
Now, as such annuli have empty interior, it follows that 
the surface integrals in~\eqref{eq:weyl_liouville_action} vanish. 
Since the boundary components overlap, yet with opposite normal vectors, 
the boundary integrals cancel out --- 
rendering the conformal anomaly trivial. 
Thus, the determinant line of a diffeomorphism is canonically isomorphic to $\R$.
We turn this idea into a detailed proof within the context of 
the present article in Appendix~\ref{appendix:diffeomorphisms_triviality}.
However, this argument does not generalize to complex deformations.
\end{remark}

\begin{remark}
\label{remark:obstruction}
Theorem~\ref{thm:main} shows that the central extension of $\Diffpan$ induced by the real determinant lines is trivial at the level of Lie algebras.
To lift this result to $\Diffpan$, the only possible obstruction is the real two-cocycle on $\Diffpan$,
\begin{align*}
(\phi_1, \phi_2) &\longmapsto \Arg(\rotnumber(\phi_1 \circ \phi_2) - \rotnumber(\phi_1) - \rotnumber(\phi_2)),
\end{align*}
where $\rotnumber \colon \Diffpan \to S^1$ is the (Poincar\'e) rotation number.
This is a nontrivial cocycle in $H^2(\Diffpan, \R)$, yet the associated Lie algebra cocycle is trivial (see~\cite{Guieu-Roger:Virasoro_book}).
We show in Proposition~\ref{prop:diffeomorphisms_triviality} in Appendix~\ref{appendix:diffeomorphisms_triviality} 
that $\Detrpc(\Diffpan)$ does not contain this cocycle.
\end{remark}

\subsection{Remarks on modular functors and universality}
\label{section:modular_functors}

As will be explained in Section~\ref{section:genus}, 
the real determinant lines may be defined for all 
Riemann surfaces $\Sigma$, where $[\Sigma] \in \moduli{\genus}{\boundaries}$ for any genus $\genus \in \Znn$ and nonzero number $\boundaries \in \Zpos$ of analytically parametrized boundary components. 
By employing the Polyakov--Alvarez anomaly formula for the zeta-regularized determinant of the Laplacian on $\Sigma$ (see Equation~\eqref{eq:polyakov_alvarez}), 
we obtain a global trivialization\footnote{Note that the determinant of the Laplacian is not very explicit, nor it is amenable to computations. 
Thus, we use another trivialization in the computation of the cocycle in Theorem~\ref{thm:main}.} 
$\globalsectionzeta$ for such a line bundle in Proposition~\ref{proposition:detz}.
This turns the collection $\Detrc$ of determinant lines $\Detrc(\Sigma)$ into a line bundle over the moduli space $\moduli{\genus}{\boundaries}$.
The combination of these line bundles with the sewing isomorphisms~\eqref{eq:sewing_iso} yields a real one-dimensional modular functor, 
as discussed in~\cite{Friedrich:On_connections_of_CFT_and_SLE, Maibach:Master_thesis}.

The notion of a \emph{complex} modular functor
was introduced by Segal~\cite{Segal:Definition_of_CFT} 
and more precisely by Huang~\cite{Huang:2D_Conformal_geometry_and_VOAs,
Huang:Genus-zero_modular_functors_and_intertwining_operator_algebras}.
For the case of one-dimensional complex modular functors, 
the Mumford--Segal theorem 
(see~\cite[Appendix~D]{Huang:2D_Conformal_geometry_and_VOAs}), 
shows that in genus zero, 
all of them are isomorphic to complex determinant line bundles, and the only free parameter is the central charge $\charge \in \C$.
For $\charge = 2$, a fiber of the complex determinant line bundle above a given surface $\Sigma$, such that $[\Sigma] \in \moduli{\genus}{\boundaries}$ with $\boundaries \in \Zpos$, is the one-dimensional complex vector space
\begin{align*}
\Det_\C^2(\Sigma) := \Big(\bigwedge_{\mathclap{\dim \kernel(\pi_\Sigma)}} \kernel(\pi_\Sigma)\Big)^\vee \; \otimes \; \bigwedge_{\mathclap{\dim \coker(\pi_\Sigma)}} \coker(\pi_\Sigma),
\end{align*}
which is the determinant line of an operator $\pi_\Sigma$ acting on holomorphic functions on $\Sigma$, mapping them to the positive Fourier modes of the restriction of the function to the boundary $\partial \Sigma$.
This complex determinant line bundle has been studied in several works  
\cite{Segal:Definition_of_CFT, 
Huang:2D_Conformal_geometry_and_VOAs, 
Radnell:PhD, 
Frenkel-Ben-Zvi:Vertex_Algebras_and_Algebraic_Curves, 
RSS:Quasiconformal_Teichmuller_theory_as_analytical_foundation_for_CFT,
RSSS:Schiffer_operators_and_calculation_of_a_determinant_line_in_conformal_field_theory}, including the construction of an associated central extension of $\Witt$ by $\C$ through the Gel'fand--Fuks cocycle~\eqref{eq:cocycle_virasoro}.
In contrast to our Theorem~\ref{thm:main}, this cocycle does not vanish on real vector fields.
The relation to the real determinant line bundle remains unclear to us at the moment, but we hope to report on this in subsequent work.

Also, we expect that an analogue to the Mumford--Segal theorem holds in the real case --- but, to our knowledge, no rigorous proof is available.
In particular, establishing a real version of the Mumford--Segal theorem would show that $\Detrc$ is the universal real one-dimensional modular functor.
Conceptually, this would emphasize \emph{universality} of the conformal anomaly~(\ref{eq:weyl_liouville_action},~\ref{eq:pairing}),
in the sense that it is
the only possible anomaly
arising from conformal symmetry in two dimensions.
In the context of this program, we obtain the following result from Theorem~\ref{thm:main}.

\begin{corollary}
$\Detrc$ is nontrivial as a real one-dimensional modular functor.
\end{corollary}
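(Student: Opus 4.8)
The statement is meant for $\charge \neq 0$: at $\charge = 0$ the pairing~\eqref{eq:pairing} vanishes and $\Detrc$ is canonically trivial. So assume $\charge \neq 0$. The plan is to argue by contraposition. If $\Detrc$ were trivial as a real one-dimensional modular functor, then the central extension $\Detrpc(\DiffC)$ of~\eqref{eq:def_extension} --- which is built entirely out of the lines $\Detrc(A\diffActing{1}\phi)$, $\Detrc(A)$ and the structure maps of $\Detrc$ --- would be a split extension, and hence its associated Lie algebra two-cocycle would vanish in $H^2(\Witt, \R)$, with $\Witt$ here regarded as a \emph{real} Lie algebra (the center of $\Detrpc(\DiffC)$ is $\R_{>0}$). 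But Theorem~\ref{thm:main} identifies that cocycle with $\algcocycle$, which we will see is nontrivial when $\charge \neq 0$.

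First I would make the implication ``trivial modular functor $\Rightarrow$ split extension'' precise. A trivialization of $\Detrc$ gives a nowhere-vanishing global section $s$ of the line bundle $\Detrc \to \moduli{\genus}{\boundaries}$ compatible with the sewing isomorphisms~\eqref{eq:sewing_iso} and with the natural isomorphisms $\natisophi{\phi}{A}{B}$; since the multiplication maps $\detmult{\phi_1}{\phi_2}$ are built from sewing, $s$ is compatible with these as well. Using the identification $\Detrc(\phi, A) = \Detrc(A\diffActing{1}\phi) \otimes (\Detrc(A))^\vee$, the element $s(A\diffActing{1}\phi) \otimes s(A)^\vee$ is then a nonzero section of $\Detrc(\phi, A)$ which, by compatibility with the $\natisophi{\phi}{A}{B}$, does not depend on the chosen cylinder $A$. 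Writing the canonical section of Section~\ref{section:central_extension} in terms of it, $\globalsectionphi(\phi) = f(\phi)\,\bigl(s(A\diffActing{1}\phi) \otimes s(A)^\vee\bigr)$ for some function $f\colon \DiffC \to \R_{>0}$, compatibility of $s$ with $\detmult{\phi_1}{\phi_2}$ forces the group two-cocycle attached to $\globalsectionphi$ to be the coboundary $(\phi_1,\phi_2)\mapsto f(\phi_1)f(\phi_2)/f(\phi_1\phi_2)$; differentiating, $[\algcocycle] = 0$.

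Next I would verify directly that $\algcocycle$ is not a coboundary when $\charge\neq 0$. By Theorem~\ref{thm:main}, $\algcocycle(v,w) = \tfrac{\charge}{24\pi}\,\Im\int_0^{2\pi} v'(\theta)\, w''(\theta)\,\dd\theta$. Testing on the complex vector fields $e_m \coloneqq e^{\ii m\theta}\,\partial_\theta \in \Witt$, $m\in\Zpos$, a one-line computation gives $\algcocycle(e_m, e_{-m}) = -\tfrac{\charge}{12}\,m^3$, while $[e_m, e_{-m}] = -2\ii m\, e_0 = -2m\,(\ii\, e_0)$. If $\algcocycle$ were the coboundary of an $\R$-linear map $\beta\colon \Witt\to\R$, then (up to the sign in the coboundary convention) $-\tfrac{\charge}{12}\,m^3 = \beta([e_m, e_{-m}]) = -2m\,\beta(\ii\, e_0)$ would hold for all $m\in\Zpos$; the left-hand side is cubic in $m$ and the right-hand side linear, so this is impossible for $\charge\neq 0$. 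Hence $[\algcocycle]\neq 0$ in $H^2(\Witt,\R)$. Note that this uses the imaginary directions in $\Witt$ in an essential way: the restriction of $\algcocycle$ to $\VectR$ vanishes, in agreement with Theorem~\ref{thm:main}.

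Combining the two steps proves the corollary. The hard part will be the first step: one must carefully unwind the modular-functor structure on $\Detrc$ --- the sewing isomorphisms~\eqref{eq:sewing_iso}, the isomorphisms $\natisophi{\phi}{A}{B}$, and the maps $\detmult{\phi_1}{\phi_2}$ and section $\globalsectionphi$ of Section~\ref{section:central_extension} --- and check that a global section compatible with all of these genuinely trivializes the group cocycle, so that its differential is an honest coboundary. The cohomological step, by contrast, is short and robust: it is insensitive to the sign and normalization conventions noted after Theorem~\ref{thm:main}, and uses only the values of $\algcocycle$ on the pairs $(e_m, e_{-m})$ together with the structure constants $[e_m, e_{-m}] = -2m\,(\ii\, e_0)$.
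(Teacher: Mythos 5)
Your proposal is correct and follows essentially the same route the paper intends: the corollary is deduced from Theorem~\ref{thm:main} by observing that a trivialization of $\Detrc$ would split the induced central extension and hence kill the class of $\algcocycle$ in $H^2(\Witt,\R)$, which is impossible since that class is nonzero for $\charge \neq 0$. The paper states the corollary without written proof, so your explicit check that $\algcocycle$ is not a coboundary (via $\algcocycle(e_m,e_{-m}) = -\tfrac{\charge}{12}m^3$ against the linear-in-$m$ values of any coboundary on $[e_m,e_{-m}] = -2\ii m\, e_0$) and your caveat about $\charge = 0$ are welcome additions rather than deviations.
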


\subsection*{Acknowledgments}

We would like to thank Masha Gordina, Yi-Zhi Huang, Eric Schippers, and Yilin Wang for interesting discussions,
and Klimcik Ctirad and Peter Kristel for useful comments on the first version of this manuscript. 
We thank Andr\'e Henriques for his insight towards the triviality of the group cocycle on diffeomorphisms (see Remark~\ref{rmk:diffeomorphisms_triviality}).
We are grateful to the anonymous referees for their careful comments, which helped to improve this manuscript. 

S.M.~is supported by the Deutsche Forschungsgemeinschaft (DFG, German Research Foundation) under Germany's Excellence Strategy EXC-2047/1-390685813.

This material is part of a project that has received funding from the  European Research Council (ERC) under the European Union's Horizon 2020 research and innovation programme (101042460): 
ERC Starting grant ``Interplay of structures in conformal and universal random geometry'' (ISCoURaGe) 
and from the Academy of Finland grant number 340461 ``Conformal invariance in planar random geometry.''

E.P.~is also supported by 
the Academy of Finland Centre of Excellence Programme grant number 346315 ``Finnish centre of excellence in Randomness and STructures (FiRST)'' 
and by the Deutsche Forschungsgemeinschaft (DFG, German Research Foundation) under Germany's Excellence Strategy EXC-2047/1-390685813, 
as well as the DFG collaborative research centre ``The mathematics of emerging effects'' CRC-1060/211504053.

\section{Preliminaries}
\label{section:preliminaries}

\subsection{Diffeomorphisms and complex deformations of the circle}
\label{section:background_Diff}

We define the unit circle as $S^1 = \R/(2\pi \Z)$ equipped with the coordinate ${\theta \in [0, 2\pi)}$.
We identify $S^1$ with the subset $S^1 \times \{0\}$ of the infinite cylinder $S^1 \times \R$, with complex coordinate
\begin{align} \label{eq:zcoordinate}
\begin{split}
S^1 \times \R &\longrightarrow \{\theta + \ii x \in \C \colon \theta \in [0, 2\pi), \ x \in \R\} , \\
(\theta, x) &\longmapsto \theta + \ii x.
\end{split}
\end{align}
Let $\Diffpan$ denote the group of real-analytic, orientation-preserving diffeomorphisms of $S^1$. 
It is a Fr\'echet-Lie group modeled on the Fr\'echet-Lie algebra of real-analytic vector fields on $S^1$,
\begin{align*}
\Lie(\Diffpan) = \VectR.
\end{align*}
See~\cite{Neeb:Infinite-Dimensional_Lie_Groups} for details on the analytic structure. In particular, the exponential map is 
\begin{equation}
\label{eq:exponential_diff}
\begin{aligned}
\Exp \colon \VectR &\longrightarrow \Diffpan , \\
v &\longmapsto \phi_1,
\end{aligned}
\end{equation}
where $\phi_t \in \Diffpan$ for $t \in \R$ is the flow of the vector field $v(\theta)$ defined by 
\begin{align}
\phi_0(\theta) = \theta 
\qquad \textnormal{and} \qquad
\pdv{\phi_t(\theta)}{t} = v(\phi_t(\theta)).
\label{eq:floweqs}
\end{align}

By Equation~\eqref{eq:floweqs}, we also define the flow $\phi_t \colon S^1 \to S^1 \times \R$ of complex-valued vector fields $v \in \Witt = \VectR \otimes\:  \C$ for small $t$, where we use that $v$ has a complex-analytic extension to a small cylinder $S^1 \times (-\varepsilon, \varepsilon)$ for some $\varepsilon > 0$.
For example, $v(\theta) = \ii$ generates the translation of the unit circle along the imaginary axis, $\phi_t(\theta) = \theta + t \ii$.
More generally, consider the set of \emph{complex deformations}
\begin{align}
\label{eq:complex_deformations}
\DiffC = \left\{ \textnormal{\parbox{0.6 \textwidth}{
\centering
$\phi \colon S^1 \to S^1 \times (-1, 1)$ extending complex-analytically to $S^1 \times (-\varepsilon, \varepsilon)$ for some $\varepsilon > 0$ such that the image of the extension contains $S^1 \times \{0\}$.
}} \right\},
\end{align}
which contains $\Diffpan \subset \DiffC$, but also includes the flows of complex vector fields.
Note that the set $\DiffC$ is not a group. Nevertheless, it is closed under taking inverses, and, if $\phi, \psi \in \DiffC$ are such that $\phi$ extends to $\psi(S^1)$ and $\phi(\psi(S^1)) \subset S^1 \times (-1, 1)$, they are \emph{composable} in the sense that $\phi \psi \in \DiffC$. 
The benefit of uniformly bounding the imaginary part of complex deformations by $1$ becomes apparent in Section~\ref{section:central_extension}.

We do not need any differentiable structure on $\DiffC$ in this work. However, we will use the generalization of the exponential map~\eqref{eq:exponential_diff} to interpret $\Witt$ as the Lie algebra of $\DiffC$.

\subsection{Cocycles and the Virasoro algebra}
\label{section:background_vir}

Next, recall that a central extension $\mathsf{H}$ of a Lie group $\mathsf{G}$ with unit $1$ 
by an Abelian Lie group $\mathsf{A}$ is a short exact sequence of Lie groups
\begin{align*}
\begin{alignedat}{2}
1 \longrightarrow \mathsf{A} &\longhookrightarrow \mathsf{H} &&\longtwoheadrightarrow \mathsf{G} \longrightarrow 1 
\end{alignedat}
\end{align*}
such that the image of $\mathsf{A}$ commutes with all of $\mathsf{H}$. 
A central extension $\mathfrak{h}$ of a Lie algebra $\mathfrak{g}$ 
by an Abelian Lie algebra $\mathfrak{a}$ is a short exact sequence 
\begin{align}
\label{eq:lie_algebra_exact}
\begin{alignedat}{2}
0 \longrightarrow \mathfrak{a} &\longhookrightarrow \mathfrak{h} &&\longtwoheadrightarrow \mathfrak{g} \longrightarrow 0 .
\end{alignedat}
\end{align}
These central extensions are classified by the second Lie algebra cohomology~\cite{Guieu-Roger:Virasoro_book}, that is, the quotient
of two-cocycles $Z^2(\mathfrak{g}, \mathfrak{a})$ by coboundaries $B^2(\mathfrak{g}, \mathfrak{a})$,
\begin{align*}
H^2(\mathfrak{g}, \mathfrak{a}) = Z^2(\mathfrak{g}, \mathfrak{a}) / B^2(\mathfrak{g}, \mathfrak{a}) .
\end{align*}
The case of interest to the present work is the Lie algebra $\mathfrak{g} = \Witt$ and coefficients $\mathfrak{a} = \R$. 
The cohomology in the case of $\mathfrak{a} = \C$ is well-known.
\begin{proposition}
\label{prop:virasoro_cocycle_classification}
Every complex-valued two-cocycle of the Witt algebra in $Z^2(\Witt, \C)$ is cohomologous to the Gel'fand--Fuks cocycle $\gelfandfuks$ defined in~\eqref{eq:cocycle_virasoro} for some $\charge \in \C$.
\end{proposition}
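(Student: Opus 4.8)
The plan is to establish Proposition~\ref{prop:virasoro_cocycle_classification} by a direct cohomological computation using the standard basis $\{L_n = -\ii e^{\ii n \theta}\partial_\theta\}_{n \in \Z}$ (or equivalently $\{e^{\ii n\theta}\partial_\theta\}$) of $\Witt$, in which the bracket is $[L_m, L_n] = (n-m)L_{m+n}$. Given an arbitrary cocycle $\gamma \in Z^2(\Witt, \C)$, I would first normalize it by subtracting a suitable coboundary. Recall a coboundary has the form $\partial\beta(v,w) = \beta([v,w])$ for a linear functional $\beta \colon \Witt \to \C$; writing $\beta(L_n) = b_n$, one gets $\partial\beta(L_m, L_n) = (n-m)b_{m+n}$. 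The strategy is to use this freedom to kill $\gamma(L_m, L_n)$ for all pairs with $m + n \neq 0$, reducing to the ``diagonal'' values $a_n \coloneqq \gamma(L_n, L_{-n})$.

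The key steps, in order, are as follows. First, write down the cocycle identity $\gamma([v,w],u) + \gamma([w,u],v) + \gamma([u,v],w) = 0$ on triples of basis elements $(L_k, L_m, L_n)$; this is the only relation available and everything must be extracted from it. Second, for the normalization: for $m + n \neq 0$ one can solve $(n-m)b_{m+n} = \gamma(L_m,L_n)$ to adjust $b_{m+n}$ — but one must check consistency, i.e.\ that different pairs $(m,n)$ with the same sum give the same required value of $b_{m+n}$ after the adjustment, which is exactly where the cocycle identity on suitable triples enters. A clean way is: set $b_0 = 0$ arbitrarily, and for $p = m+n \neq 0$ define $b_p$ from one convenient pair (say $\gamma(L_0, L_p) = p\, b_p$), then use the cocycle identity with $L_0$ in one slot to propagate. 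After subtracting this coboundary we may assume $\gamma(L_m, L_n) = 0$ whenever $m + n \neq 0$, and antisymmetry gives $a_{-n} = -a_n$, $a_0 = 0$. Third, plug the triple $(L_{-m-n}, L_m, L_n)$ into the cocycle identity; since the only surviving terms are diagonal, this yields a recursion of the shape $(m - n)\, a_{m+n} = (\text{something})\, a_m + (\text{something})\, a_n$ — explicitly, with the normalization $a_1 = a_2 = 0$ still available as one further coboundary adjustment (using $b$ supported on the remaining freedom $b_0$, and a multiple of the $n=1$ relation), one derives that $a_n$ is a cubic polynomial in $n$ determined by $a_1$ and $a_2$; after using the remaining normalization $a_1 = 0$, one finds $a_n = \frac{\charge}{12}(n^3 - n)$ for a unique constant $\charge \in \C$. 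Fourth, identify this normalized cocycle with $\gelfandfuks$ by evaluating the integral in~\eqref{eq:cocycle_virasoro} on the basis: $\gelfandfuks(L_m, L_n)$ is supported on $m + n = 0$ and equals a constant multiple of $(m^3 - m)\delta_{m+n,0}$, matching the computed $a_n$ up to the constant absorbed into $\charge$.

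The main obstacle — really the only subtle point — is the second step: verifying that the coboundary adjustment killing the off-diagonal part is globally consistent, i.e.\ that the $b_p$ one is forced to choose are well-defined independently of which pair summing to $p$ one uses. This is not automatic from the individual equations $(n-m)b_{m+n} = \gamma(L_m,L_n)$ (note $n - m$ can be zero only when $m = n$, i.e.\ $p = 2m$, so one must avoid that pair or handle it via a triple), and it must be deduced from the cocycle identity on triples of the form $(L_0, L_m, L_n)$ and $(L_k, L_m, L_n)$ with $k + m + n$ in the off-diagonal regime. Once consistency is in hand, the recursion in step three is routine linear algebra (it is the standard ``$a_n$ is cubic'' argument), and the identification in step four is a one-line residue computation. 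I would present steps one through three as the substance of the proof and relegate step four to a remark or a short verification, since the normalization of $\charge$ is convention-dependent as already flagged after Theorem~\ref{thm:main}.
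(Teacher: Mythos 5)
Your proposal is the standard direct computation and is correct in substance. There is no ``paper's approach'' to compare it against: the paper does not prove Proposition~\ref{prop:virasoro_cocycle_classification} but simply cites the literature (Guieu--Roger, Khesin--Wendt, de~Azc\'arraga--Izquierdo, Schottenloher), and the argument recorded there is exactly the one you outline. Two remarks. First, the consistency issue you flag as the main obstacle in your second step dissolves once you write the cocycle identity for the triple $(L_0,L_m,L_n)$: with $[L_0,L_m]=mL_m$ in your convention it gives $(m+n)\,\gamma(L_m,L_n)=(n-m)\,\gamma(L_0,L_{m+n})$, so for $m+n\neq 0$ the off-diagonal values are \emph{already} of coboundary form $(n-m)b_{m+n}$ with $b_p\coloneqq\gamma(L_0,L_p)/p$; each $b_p$ is defined by a single equation and no separate well-definedness check across different pairs is needed. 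After subtracting this coboundary, the triple $(L_{-m-n},L_m,L_n)$ yields $(n-m)\,a_{m+n}=(2m+n)\,a_n-(m+2n)\,a_m$, whose solution space is spanned by $a_n=n$ and $a_n=n^3$; the linear part is removed by the residual coboundary parameter $b_0$, giving $a_n=\charge\,n^3/12$ up to the $(n^3-n)$ normalization. Evaluating~\eqref{eq:cocycle_virasoro} on Fourier modes gives a multiple of $m^3\delta_{m+n,0}$ (not $m^3-m$; the discrepancy is again a coboundary in the linear term, consistent with the convention caveat you already make). Second, a point worth stating explicitly: your computation classifies cocycles of the polynomial Witt algebra $\vspan\{L_n\colon n\in\Z\}$, whereas $\Witt$ as defined in the paper is the complexification of all real-analytic vector fields on $S^1$; passing from the dense span of Fourier modes to the full algebra requires the cocycle to be continuous, which is the (continuous, Gel'fand--Fuks) cohomology implicitly intended both in the proposition and in the cited references. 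With that hypothesis made explicit, your proof is complete.
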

\begin{proof}
See, for instance,~\cite{Azcarraga-Izquierdo:Lie_groups_Lie_algebras_cohomology_and_some_applications_in_physics, Guieu-Roger:Virasoro_book, Schottenloher:Mathematical_introduction_to_CFT,
Khesin-Wendt:The_geometry_of_infinite-dimensional_groups}.
\end{proof}
Thus, the Virasoro algebra is the only nontrivial central extension of $\Witt$ by $\C$.

\begin{remark} \label{rem:Vir_exact_sequence}
An isomorphism of Lie algebra central extensions is an isomorphism of short exact sequences~\eqref{eq:lie_algebra_exact},
i.e., a Lie algebra isomorphism 
$f \colon \mathfrak{h} \to \mathfrak{h}'$ such that the following diagram commutes:
\begin{displaymath}
\begin{tikzcd}
0 & \mathfrak{a} & \mathfrak{h} & \mathfrak{g} & 0 \\
0 & \mathfrak{a} & \mathfrak{h}' & \mathfrak{g} & 0
\arrow[from=1-1, to=1-2]
\arrow[from=1-2, to=1-3]
\arrow[from=1-3, to=1-4]
\arrow[from=1-4, to=1-5]
\arrow[from=2-1, to=2-2]
\arrow[from=2-2, to=2-3]
\arrow[from=2-3, to=2-4]
\arrow[from=2-4, to=2-5]
\arrow[from=1-1, to=2-1]
\arrow["{\id}"', from=1-2, to=2-2]
\arrow["{f}"', from=1-3, to=2-3]
\arrow["{\id}"', from=1-4, to=2-4]
\arrow[from=1-5, to=2-5]
\end{tikzcd}
\end{displaymath}
For different values of $\charge \in \C \setminus \{0\}$, the Lie algebras $\Vir$ are isomorphic. However, this is not the case for the associated exact sequences of Lie algebras,
\begin{align*}
\begin{alignedat}{2}
0 \longrightarrow \C &\longhookrightarrow \Vir &&\longtwoheadrightarrow \Witt \longrightarrow 0  \\
\lambda &\longmapsto (0, \lambda) &&\longmapsto 0 \\
&\phantom{\longmapsto .} (v, \lambda) &&\longmapsto v,
\end{alignedat}
\end{align*}
where $\Vir = \Witt \oplus \, \C$ with the Lie bracket $[(v, \lambda), (w, \mu)] = ([v, w], \lambda + \mu + \gelfandfuks(v, w))$.
Therefore, we will explicitly keep the value of $\charge$ in the notation $\Vir$.
\end{remark}

Let $G$ be a Fr\'echet-Lie group with Lie algebra $\mathfrak{g} = \Lie(G)$.
In general, one obtains a real-valued Lie algebra two-cocycle $[\omega] \in H^2(\mathfrak{g}, \R)$ from a real-valued smooth group two-cocycle $[\Omega] \in H^2(G, \R)$ by differentiating the group cocycle $\Omega$
along the exponentials:
\begin{align}
\label{eq:diff_cocycle}
\omega(v, w) = 
\frac{1}{2}
\pdv{}{t}{s} 
\Big( \Omega(\Exp(tv), \Exp(sw)) - \Omega(\Exp(sw), \Exp(tv)) \Big)\Big|_{t = s = 0}  
,
\end{align}
for $v, w \in \mathfrak{g}$. 
In the case of the two-cocycle $\Omega \in H^2(\Diffpan, \R)$, we have $\Exp(tv) = \phi_t$ and $\Exp(sw) = \psi_s$, 
where $\phi_t$ and $\psi_s$ are respectively the flows~\eqref{eq:floweqs} of $v, w \in \VectR$.
(See~\cite[Proposition~3.14]{Khesin-Wendt:The_geometry_of_infinite-dimensional_groups}.)
The Lie algebra cocycle can thus be computed as
\begin{align}
\omega(v, w) = 
\frac{1}{2} \pdv{}{t}{s} \Big(\Omega(\phi_t, \psi_s) - \Omega(\psi_s, \phi_t) \Big)\Big|_{t = s = 0} .
\label{eq:algebra_cocycle_general}
\end{align}
The proof of Theorem~\ref{thm:main} in Section~\ref{section:computation} relies on the computation of the Lie algebra cocycle $\algcocycle$ of the central extension $\Detrpc(\DiffC)$ from the formula~\eqref{eq:algebra_cocycle_general} generalized to $\Witt$.

\subsection{Cylinders with analytical boundary parametrizations}
\label{section:background_cylinders_1}

In this section, we introduce cylinders with analytical boundary parametrizations, which comprise the particular case of the surfaces in the moduli spaces $\moduli{\genus}{\boundaries}$, defined in Equation~\eqref{eq:all_surfaces}, with genus $\genus = 0$ and $\boundaries = 2$.
A \emph{cylinder} is a compact genus-zero Riemann surface $A$ with two boundary components labeled $\partial_1 A$ and $\partial_2 A$. 
We endow $A$ with
a choice of analytical boundary parametrization.
\begin{definition}
\label{def:boundary_param}
An \emph{analytical boundary parametrization} of the cylinder $A$ is a pair 
$(\zeta_1, \zeta_2)$ of smooth maps
\begin{align*}
\zeta_i \colon S^1 \longrightarrow \partial_i A,
\quad \textnormal{for $i = 1,2$,}
\end{align*}
which respectively extend complex-analytically to 
the cylinders $S^1 \times [0, \varepsilon)$ and $S^1 \times (-\varepsilon, 0]$, for some $\varepsilon > 0$, 
sending them to neighborhoods of the two boundary components of $A$.
\end{definition}

\begin{remark}
\label{remark:proper_class}
Hence, cylinders with analytical boundary parametrization are triples
\begin{align*}
\paramcyl{A}{\zeta_1}{\zeta_2}
= A
\; \in \;  \cylinders,
\end{align*}
where we omit the parametrizations if they are clear from context.
By $A \in \cylinders$ we denote any representative of a cylinder $[A] \in \moduli{0}{2}$.
Note that while $\moduli{0}{2}$ is an infinite-dimensional Banach manifold, $\cylinders$ is a proper class and should thus not be considered to have any geometric structure.
The reason to consider $\cylinders$ is that we want to give definitions involving particular representatives and only later show that they are natural over the moduli space.
\end{remark}

The \emph{standard cylinder} is defined as 
\begin{align}
\label{eq:standard_cylinder}
\A \coloneqq  \paramcyl{S^1 \times [0, 1]}{\theta}{\theta + \ii} \; \in \; \cylinders .
\end{align}
We later endow $\A$ with the flat metric $\dz$, where $z = \theta + \ii x$.

An \emph{isomorphism} of two cylinders $\paramcyl{A}{\zeta_1}{\zeta_2}, \paramcyl{B}{\xi_1}{\xi_2} \in \cylinders$ with analytical boundary parametrizations is a biholomorphism $\isom{A}{B} \colon A \to B$ 
which is compatible with the boundary parametrizations and preserves their order, 
i.e., such that the following diagrams commute:
\begin{displaymath}
\begin{tikzcd}
A && {B} \\
& {S^1}
\arrow["{\xi_i}"', from=2-2, to=1-3]
\arrow["{\zeta_i}", from=2-2, to=1-1]
\arrow["\isom{A}{B}", from=1-1, to=1-3]
\end{tikzcd}
\qquad \textnormal{for $i = 1,2$.}
\label{diag:isomorphisms}
\end{displaymath}
Applying the identity theorem in an open neighborhood of $\partial_1 A$ where $\isom{A}{B} = \xi_1 \circ \zeta_1^{-1}$ is well-defined, we see that these isomorphisms are unique if they exist (hence the notation $\isom{A}{B}$),
and they have the following composition property 
(if all three isomorphisms exist): 
\begin{align*}
\isom{B}{C} \circ \isom{A}{B} = \isom{A}{C},
\qquad A, B, C \in \cylinders .
\end{align*}

\subsection{The semigroup of cylinders and uniformization}
\label{section:background_cylinders_2}

We now provide the definition of the sewing operation~\eqref{eq:sewing_operation} in the case of two cylinders.
Note that under this operation, $\cylinders$ forms a semigroup, since sewing two cylinders again results in a cylinder.
For $\paramcyl{A}{\zeta_1}{\zeta_2}, \paramcyl{B}{\xi_1}{\xi_2} \in \cylinders$, we define the Riemann surface
\begin{align}
\label{eq:sewing_def}
A \sew{2}{1} B \coloneqq (A \sqcup B ) \, /\sim ,
\end{align}
where the equivalence relation $\sim$ generated by
$\zeta_2(\theta) \sim \xi_1(\theta)$ for all $\theta \in S^1$
identifies the boundaries $\partial_2 A$ and $\partial_1 B$ using their respective parametrizations.
The validity of this definition is spelled out in the next basic lemma.

\begin{lemma}
\leavevmode
\makeatletter
\@nobreaktrue
\makeatother
\label{lem:conformal_atlas}
\begin{enumerate}
\item
\label{item:conformal_atlas1}
For $\paramcyl{A}{\zeta_1}{\zeta_2}, \paramcyl{B}{\xi_1}{\xi_2} \in \cylinders$,
$A \sew{2}{1} B$ is a Riemann surface and $\paramcyl{A \sew{2}{1} B}{\zeta_1}{\xi_2}$ is a cylinder with parametrized boundaries in $\cylinders$.
\item
\label{item:conformal_atlas2}
For $A, B, C, D \in \cylinders$ such that $A$ and $B$ are isomorphic to $C$ and $D$ respectively, we have
the isomorphism
\begin{align*}
\isom{A \sew{2}{1} B}{C \sew{2}{1} D} = \isom{A}{C} \cup \isom{B}{D}.
\end{align*}
\end{enumerate}
\end{lemma}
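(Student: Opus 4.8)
The plan is to equip the quotient $A \sew{2}{1} B = (A \sqcup B)/{\sim}$ from~\eqref{eq:sewing_def} with a holomorphic atlas built from the atlases of $A$ and $B$ together with one extra chart straddling the seam, and then to read off the remaining assertions using the identity theorem and the uniqueness of boundary-compatible isomorphisms established in Section~\ref{section:background_cylinders_1}.

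For part~\ref{item:conformal_atlas1}, write $A = \paramcyl{A}{\zeta_1}{\zeta_2}$ and $B = \paramcyl{B}{\xi_1}{\xi_2}$. Shrinking $\varepsilon > 0$, I may assume $\zeta_2$ extends to a biholomorphism of $S^1 \times (-\varepsilon, 0]$ onto a collar of $\partial_2 A$ in $A$ and $\xi_1$ to a biholomorphism of $S^1 \times [0, \varepsilon)$ onto a collar of $\partial_1 B$ in $B$. The two half-collars glue along $S^1 \times \{0\}$ into the honest cylinder $S^1 \times (-\varepsilon, \varepsilon)$, and since $\zeta_2(\theta) \sim \xi_1(\theta)$ the extensions of $\zeta_2$ and $\xi_1$ fit together into a homeomorphism from $S^1 \times (-\varepsilon, \varepsilon)$ onto an open neighborhood of the seam in $A \sew{2}{1} B$; I take its inverse as a distinguished chart $\psi_0$. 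The quotient is Hausdorff by the standard argument (the relation $\sim$ identifies the closed sets $\partial_2 A$ and $\partial_1 B$ through a homeomorphism), and it is compact and connected. Keeping the original holomorphic charts on $A \setminus \partial_2 A$ and $B \setminus \partial_1 B$, the transition maps with $\psi_0$ are restrictions of the analytic extensions of $\zeta_2$ (resp.\ $\xi_1$) read in an ambient chart of $A$ (resp.\ $B$), hence holomorphic; so the charts form a holomorphic atlas and $A \sew{2}{1} B$ is a Riemann surface. Topologically each of $A$, $B$ is a closed annulus, and gluing $\partial_2 A$ to $\partial_1 B$ along a homeomorphism yields again a closed annulus; thus $A \sew{2}{1} B$ is a compact connected genus-zero surface with the two boundary components $\partial_1 A$ and $\partial_2 B$, whose parametrizations $\zeta_1$, $\xi_2$ are untouched and still extend analytically to collars. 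Hence $\paramcyl{A \sew{2}{1} B}{\zeta_1}{\xi_2} \in \cylinders$.

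For part~\ref{item:conformal_atlas2}, put $C = \paramcyl{C}{\zeta'_1}{\zeta'_2}$, $D = \paramcyl{D}{\xi'_1}{\xi'_2}$ and consider $F \coloneqq \isom{A}{C} \cup \isom{B}{D}$ on $A \sqcup B$. Since $\isom{A}{C}$ is compatible with boundary parametrizations, $\isom{A}{C}(\zeta_2(\theta)) = \zeta'_2(\theta)$, and likewise $\isom{B}{D}(\xi_1(\theta)) = \xi'_1(\theta)$; as $\zeta'_2(\theta)$ and $\xi'_1(\theta)$ are identified in $C \sew{2}{1} D$, the map $F$ is constant on $\sim$-classes and descends to a continuous bijection $A \sew{2}{1} B \to C \sew{2}{1} D$. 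Off the seam $F$ is a biholomorphism because $\isom{A}{C}$ and $\isom{B}{D}$ are. Across the seam I would read $F$ in the chart $\psi_0$ and its counterpart for $C \sew{2}{1} D$: there $F$ is given on $S^1 \times (-\varepsilon, 0]$ by a $\zeta_2$-conjugate of $\isom{A}{C}$ and on $S^1 \times [0, \varepsilon)$ by a $\xi_1$-conjugate of $\isom{B}{D}$, both holomorphic, and they agree on $S^1 \times \{0\}$ (indeed, by the identity theorem each conjugate is the identity there); a continuous map on $S^1 \times (-\varepsilon, \varepsilon)$ holomorphic on each half is holomorphic by Morera, so $F$ is holomorphic throughout. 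Finally $F$ respects the boundary parametrizations $\zeta_1, \xi_2$ and $\zeta'_1, \xi'_2$ because $\isom{A}{C}$ and $\isom{B}{D}$ do, so $F$ is an isomorphism of cylinders with parametrized boundaries, and by uniqueness it must equal $\isom{A \sew{2}{1} B}{C \sew{2}{1} D}$.

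The only substantive point is the construction of the seam chart $\psi_0$ and the check that it is holomorphically compatible with the ambient charts of $A$ and $B$; this is precisely where the hypothesis that boundary parametrizations extend \emph{complex-analytically}, rather than merely smoothly, is indispensable. Everything after that --- Hausdorffness, the topological identification of $A \sew{2}{1} B$ as a cylinder, the descent and holomorphy of $F$, and its identification with $\isom{A \sew{2}{1} B}{C \sew{2}{1} D}$ --- is routine given the identity theorem and the uniqueness statement recalled in Section~\ref{section:background_cylinders_1}.
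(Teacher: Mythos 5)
Your proposal is correct and follows essentially the same route as the paper's proof: a seam chart built from the glued analytic extensions of $\zeta_2$ and $\xi_1$ onto $S^1 \times (-\varepsilon,\varepsilon)$ for part~\ref{item:conformal_atlas1}, and descent of $\isom{A}{C} \cup \isom{B}{D}$ through the quotient with holomorphy checked in that chart for part~\ref{item:conformal_atlas2}. (In fact, read in the seam charts the glued map is the identity on $S^1 \times (-\varepsilon,\varepsilon)$ by the identity theorem, so the Morera step is not even needed.)
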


\begin{figure}
\centering
\includestandalone[]{fig_sewing}
\caption{
This illustrated commutative diagram shows, on the one hand,  the compatibility requirement~\eqref{diag:isomorphisms} that the isomorphism $\isom{A}{C}$ of cylinders satisfies.
On the other hand, it depicts the sewing operation, see Lemma~\ref{lem:conformal_atlas}.
}
\label{fig:conformal_atlas}
\end{figure}

\begin{proof}
The construction of a holomorphic atlas on $A \sew{2}{1} B$ is detailed in~\cite[Paragraph~II.3C]{Ahlfors-Sario:Riemann_surfaces}. 
We illustrate it in Figure~\ref{fig:conformal_atlas}.
The chart $\Psi_p \circ (\zeta_2 \cup \xi_1)^{-1}$ on the seam is defined via the composition of the map $(\zeta_2 \cup \xi_1)^{-1}$ from $A \sew{2}{1} B$ to the open cylinder $S^1 \times (-\varepsilon, \varepsilon)$ 
(which is well-defined because $\zeta_2(\theta)$ extends to some $S^1 \times [0, \varepsilon)$ and $\xi_1(\theta)$ to some $S^1 \times (-\varepsilon, 0]$ by Definition~\ref{eq:sewing_def}) and a holomorphic chart $\Psi_p$ of the standard cylinder~\eqref{eq:standard_cylinder} around $\zeta_2^{-1}(p) = \xi_1^{-1}(p)$. This proves item~\ref{item:conformal_atlas1}. 
To prove item~\ref{item:conformal_atlas2}, note that as the isomorphisms $\isom{A}{C}$ and $\isom{B}{D}$ are compatible with the boundary parametrizations, 
we have
\begin{align*}
\isom{A}{C}(\zeta_{A,2}(\theta)) = \zeta_{C,2}(\theta) \sim \xi_{D,1}(\theta) = \isom{B}{D}(\xi_{B,1}(\theta)), \qquad \theta \in S^1.
\end{align*}
Thus, $\isom{A}{C} \cup \isom{B}{D} \colon A \sew{2}{1} B \to C \sew{2}{1} D$ is well-defined and compatible with the boundary parametrizations. 
It is holomorphic on the seam, since for the charts $\Psi_p \circ (\zeta_{A,2} \cup \xi_{B,1})^{-1}$ 
and $\Psi_p \circ (\zeta_{C,2} \cup \xi_{D,1})^{-1}$, 
for $p \in S^1$, 
the map
\begin{align*}
( \Psi_p \circ  (\zeta_{C,2} \cup \xi_{D,1})^{-1} \circ \isom{A}{C} ) \cup ( \isom{B}{D} \circ (\zeta_{A,2} \cup \xi_{B,1})^{-1} \circ \Psi_p^{-1} ) 
= \Psi_p \circ \Psi_p^{-1}
\end{align*}  
 is holomorphic. Hence, $\isom{A}{C} \cup \isom{B}{D} \colon A \sew{2}{1} B \to C \sew{2}{1} D$ 
 is an isomorphism of cylinders.
\end{proof}

A convenient representative for $\paramcyl{A}{\zeta_1}{\zeta_2}$ up to isomorphisms was introduced by Neretin for annuli instead of cylinders~\cite[Theorem 7.4.2]{Neretin:book}. We will use it in computations and call it the \emph{uniformized representative}.

\begin{proposition}
\label{prop:neretin_form}
Every cylinder $\paramcyl{A}{\zeta_1}{\zeta_2} \in \cylinders$ with
modulus $\tau_A > 0$
is isomorphic to a unique cylinder $\cancyl{A} \coloneqq \paramcyl{S^1 \times [0, \tau_A]}{\xi_1}{\xi_2} \in \cylinders$, 
where $\xi_2(0) = (0, \tau_A)$.
\end{proposition}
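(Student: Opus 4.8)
The plan is to build the uniformizing isomorphism from classical uniformization of doubly connected Riemann surfaces, to eliminate the residual rotational degree of freedom by the single normalization $\xi_2(0)=(0,\tau_A)$, and then to transport the given parametrizations $\zeta_1,\zeta_2$ across the isomorphism, checking that they remain analytic in the sense of Definition~\ref{def:boundary_param}. Uniqueness will then follow from the classification of automorphisms of the flat cylinder.

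First I would invoke the uniformization theorem for annuli (see \cite{Ahlfors-Sario:Riemann_surfaces}, or in exactly this form \cite[Theorem~7.4.2]{Neretin:book}): the compact connected genus-zero Riemann surface $A$ with two boundary components admits a biholomorphism onto a flat cylinder $S^1\times[0,\tau]$, with $\tau\in(0,\infty)$ uniquely determined by the conformal type of $A$; this $\tau$ is the modulus $\tau_A$ of the statement. Such a biholomorphism carries the two boundary circles of $A$ onto $S^1\times\{0\}$ and $S^1\times\{\tau_A\}$; post-composing if necessary with the boundary-swap automorphism $z\mapsto\ii\tau_A-z$ of $S^1\times[0,\tau_A]$ (which exchanges the two boundary circles), I may assume it sends $\partial_1 A$ to $S^1\times\{0\}$ and $\partial_2 A$ to $S^1\times\{\tau_A\}$, as required of a cylinder isomorphism in the sense of Section~\ref{section:background_cylinders_1}; call the resulting map $F_0$. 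The remaining ambiguity in $F_0$ is post-composition with a rotation $r_\alpha\colon(\theta,x)\mapsto(\theta+\alpha,x)$, $\alpha\in S^1$, since the orientation-preserving holomorphic automorphisms of $S^1\times[0,\tau_A]$ fixing each boundary circle are precisely the rotations (equivalently, the automorphisms of a finite annulus $\{r<|w|<1\}$ preserving each boundary component are $w\mapsto e^{\ii\alpha}w$). Writing $F_0(\zeta_2(0))=(\theta_0,\tau_A)$, I would then set $F\coloneqq r_{-\theta_0}\circ F_0$, so that $F(\zeta_2(0))=(0,\tau_A)$; as the rotation action on $S^1\times\{\tau_A\}$ is free and transitive, $F$ is the unique boundary-labelling-preserving biholomorphism $A\to S^1\times[0,\tau_A]$ with this property.

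Next I would put $\xi_i\coloneqq F\circ\zeta_i$, $i=1,2$, so that $F$ becomes an isomorphism of cylinders $\paramcyl{A}{\zeta_1}{\zeta_2}\to\paramcyl{S^1\times[0,\tau_A]}{\xi_1}{\xi_2}$ with $\xi_2(0)=(0,\tau_A)$, and it only remains to check that $(\xi_1,\xi_2)$ is an analytical boundary parametrization. Both $A$ --- through the analytic collar charts supplied by $\zeta_1,\zeta_2$ --- and $S^1\times[0,\tau_A]$ --- through $z=\theta+\ii x$ --- have real-analytic boundary, and $F$, being a homeomorphism of compact bordered surfaces, is continuous up to $\partial A$; so by the Schwarz reflection principle $F$ is real-analytic up to $\partial A$, and therefore $\xi_1=F\circ\zeta_1$ inherits the complex-analytic extension of $\zeta_1$ to a collar $S^1\times[0,\varepsilon)$, and $\xi_2=F\circ\zeta_2$ that of $\zeta_2$ to $S^1\times(-\varepsilon,0]$, exactly as Definition~\ref{def:boundary_param} demands. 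Hence $\cancyl{A}\coloneqq\paramcyl{S^1\times[0,\tau_A]}{\xi_1}{\xi_2}\in\cylinders$ and $A$ is isomorphic to it. For uniqueness, if $\paramcyl{S^1\times[0,\tau]}{\xi_1}{\xi_2}$ and $\paramcyl{S^1\times[0,\tau']}{\xi'_1}{\xi'_2}$ are two such representatives, composing their isomorphisms to $A$ gives a cylinder isomorphism $G$ between them; the modulus being a complete conformal invariant of annuli forces $\tau=\tau'$, $G$ is then a rotation $r_\alpha$, and compatibility with the parametrizations gives $(0,\tau')=\xi'_2(0)=G(\xi_2(0))=(\alpha,\tau)$, whence $\alpha=0$, $G=\id$, and the two triples coincide.

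The only non-formal ingredients are classical and I would simply quote them: the uniformization of doubly connected Riemann surfaces (existence and uniqueness of $\tau_A$) and the classification of biholomorphisms between finite annuli. The one step that calls for a little care is the analyticity of the transported parametrizations all the way to the boundary; this is exactly where the hypothesis that $A$ carries \emph{analytical} boundary parametrizations is used, feeding the reflection argument. Everything else is bookkeeping about which rotation to pick.
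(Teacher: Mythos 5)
Your proposal is correct and follows essentially the same route as the paper: uniformize $A$ onto the flat cylinder $S^1\times[0,\tau_A]$, kill the rotational ambiguity by the normalization $f(\zeta_2(0))=(0,\tau_A)$, and define $\xi_i\coloneqq f\circ\zeta_i$. You simply make explicit several details the paper's two-line proof leaves implicit (the boundary-swap normalization, the classification of annulus automorphisms, and the reflection argument showing the transported parametrizations remain analytic), all of which are correct.
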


\begin{proof}
The uniformizing map $f \colon A \to S^1 \times [0, \tau_A]$ is unique by requiring $f(\zeta_2(0)) = (0, \tau_A)$.
It gives an isomorphism from $\paramcyl{A}{\zeta_1}{\zeta_2}$ to 
$\cancyl{A}$ by setting
$\xi_1 \coloneqq f \circ \zeta_1$ and $\xi_2 \coloneqq f \circ \zeta_2$, 
which also clearly satisfies $\xi_2(0) = f(\zeta_2(0)) = (0, \tau_A)$.
\end{proof}

The analytical circle diffeomorphisms $\phi \in \Diffpan$ act on $\cylinders$ from the right by reparametrizing either one of the boundary components,
\begin{align*}
A \diffActing{1} \phi \coloneqq \paramcyl{A}{\zeta_1}{\zeta_2} & \diffActing{1} \phi
\coloneqq \paramcyl{A}{\zeta_1 \circ \phi}{\zeta_2} , \\
A \diffActing{2} \phi \coloneqq \paramcyl{A}{\zeta_1}{\zeta_2} & \diffActing{2} \phi
\coloneqq \paramcyl{A}{\zeta_1}{\zeta_2 \circ \phi}.
\end{align*}
Note that the uniformized representative is preserved by the action of $\Diffpan$ on the first boundary component in the sense that $(\cancyl{A}) \diffActing{1} \phi = \cancyl{(A \diffActing{1} \phi)}$ for $\phi \in \Diffpan$.

These actions extend to complex deformations defined in~\eqref{eq:complex_deformations}.
Consider the uniformized representative $\cancyl{A} = \paramcyl{S^1 \times [0, \tau_A]}{\xi_1}{\xi_2}$ and embed it into the infinite cylinder $S^1 \times \R$.
Indeed, if $\xi_1$ or $\xi_2$ respectively extends to $S^1 \times (-\varepsilon, \varepsilon) \to S^1 \times \R$ such that it is composable with $\phi \in \DiffC$, we can define $A \diffActing{1} \phi$ or $A \diffActing{2} \phi$ respectively by taking said composition as the new boundary parametrization and finding the cylinder bounded by the parametrizations. Moreover, it is necessary that the deformed boundary does not overlap with the other boundary.
In the case of flows of complex vector fields~\eqref{eq:floweqs}, both composability and non-overlapping conditions can be guaranteed taking $t$ sufficiently small.

The uniformized representative $\cancyl{A}$ provides a canonical flat metric $\dz|_{S^1 \times [0, \tau_A]}$ on $A$ by pullback. For the definition of real determinant lines, however, a different type of metric is needed.
The complex structure of a Riemann surface $\Sigma$ is defined by its conformal class $\conf(\Sigma)$, which is the set of all Riemannian metrics $g$ on $\Sigma$ such that in any (holomorphic) coordinate chart $\Psi$, the pushforward metric has the form $\Psi_*g = e^f \dz$ for some $f \in C^\infty(\Sigma, \R)$. If $g_1, g_2 \in \conf(\Sigma)$ are in the same conformal class, they are uniquely related by $g_2 = e^{2\sigma} g_1$ for some function $\sigma \in C^\infty(\Sigma, \R)$. Thus, given $g \in \conf(\Sigma)$, we have
\begin{align*}
\conf(\Sigma) = \big\{ e^{2 \sigma} g \: \big| \: \sigma \in C^\infty(\Sigma, \R) \big\} .
\end{align*}
\emph{Admissible metrics} are compatible with the sewing operation~\eqref{eq:sewing_def}, in the sense that the disjoint union of metrics on the left and right-hand sides form a smooth metrics across that seam.
We use the following definition of admissibility, similar to the one in~\cite{GKRV:Segals_axioms_for_Liouville_theory}.

\begin{definition}
\label{def:admissible}
A metric $g \in \conf(A)$ on a cylinder $\paramcyl{A}{\zeta_1}{\zeta_2} \in \cylinders$ is admissible 
if
\begin{align*}
\zeta_1^* g &= \dz \quad \textnormal{on $S^1 \times [0, \varepsilon)$},\\
\zeta_2^* g &= \dz \quad \textnormal{on $S^1 \times (-\varepsilon, 0]$}, \qquad \textnormal{for some $\varepsilon > 0$.}
\end{align*}
We denote the set of admissible metrics by $\cconf(A, \zeta_1, \zeta_2)$ or,
if the parametrizations are clear from context simply by $\cconf(A)$.
\end{definition}
Note that for admissible metrics, the boundary components are geodesics of length $2\pi$.

\subsection{Properties of the conformal anomaly}
\label{section:properties}

In this section, we gather key properties of the conformal anomaly defined by the pairing~\eqref{eq:pairing}, 
which are essential for the definition of the real determinant lines $\Detrc(A)$.
First, we show equivalence of the conformal anomaly defined as a pairing by Kontsevich~\&~Suhov~\cite{Kontsevich-Suhov:On_Malliavin_measures_SLE_and_CFT} 
with the form~\eqref{eq:weyl_liouville_action}. 
However, for $\partial A \neq \emptyset$, this relation only holds for admissible metrics $g \in \cconf(A) \subseteq \conf(A)$ in the sense of Definition~\ref{def:admissible}.

\begin{proposition}
\label{prop:actionsagree}
For $A \in \cylinders$ 
and $g_1, g_2 \in \cconf(A)$, let $\sigma, f_1, f_2 \in C^\infty(A, \R)$ 
be such that $g_2 = e^{2\sigma} g_1$ and locally $g_i = e^{f_i} \dz$, $i = 1, 2$. Then, the pairing~\eqref{eq:pairing},
\begin{align*}
\liou{g_1}{g_2} \coloneqq \frac{1}{48\pi \ii} \iint_A (f_1 - f_2) \partial \bar \partial (f_1 + f_2),
\end{align*}
and the conformal anomaly~\eqref{eq:weyl_liouville_action},
\begin{align*}
\lfunct(\sigma, g_1)
\coloneqq \frac{1}{12 \pi} \iint_A \bigg(
\frac{1}{2} |\nabla_{g_1} \sigma|_{g_1}^2 + R_{g_1} \sigma
\bigg) \vol_{g_1}
+ \frac{1}{12 \pi} \int_{\partial \Sigma} k_g \sigma \, \tilde \vol_g
,
\end{align*}
are related by $\liou{g_1}{g_2} = - \lfunct(\sigma, g_1)$.
\end{proposition}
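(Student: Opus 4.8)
The strategy is a direct computation that rewrites both functionals in a common coordinate form and compares the integrands, while keeping careful track of the boundary contributions that arise from integration by parts. First I would express everything locally in the complex coordinate $z = \theta + \ii x$. Writing $g_1 = e^{f_1}\dz$ and $g_2 = e^{f_2}\dz$ with $f_2 = f_1 + 2\sigma$, the volume form is $\vol_{g_1} = \tfrac{\ii}{2} e^{f_1}\dz$, the Laplace--Beltrami operator is $\Delta_{g_1} = 4 e^{-f_1} \partial\bar\partial$, and the scalar curvature satisfies $R_{g_1}\vol_{g_1} = -2\ii\,\partial\bar\partial f_1$ (with the sign conventions fixed by~\eqref{eq:weyl_liouville_action}). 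Then $|\nabla_{g_1}\sigma|_{g_1}^2 \vol_{g_1} = 2\ii\, \partial\sigma\,\bar\partial\sigma$ up to the usual factor, and so the integrand of $12\pi\,\lfunct(\sigma,g_1)$ becomes, in coordinates, a multiple of $\ii\big(\partial\sigma\,\bar\partial\sigma - \sigma\,\partial\bar\partial f_1\big)$ (matching constants against the $\tfrac1{48\pi\ii}$ in the pairing, noting $12\pi = \tfrac{48\pi}{4}$).

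Next I would expand the pairing. With $f_1 - f_2 = -2\sigma$ and $f_1 + f_2 = 2f_1 + 2\sigma$, one gets
\begin{align*}
\liou{g_1}{g_2} = \frac{1}{48\pi\ii}\iint_A (-2\sigma)\,\partial\bar\partial(2f_1 + 2\sigma)
= \frac{-1}{12\pi\ii}\iint_A \sigma\,\partial\bar\partial f_1 \; - \; \frac{1}{12\pi\ii}\iint_A \sigma\,\partial\bar\partial\sigma .
\end{align*}
The first term already matches (up to sign) the $R_{g_1}\sigma$ term in $\lfunct$. For the second term I would integrate by parts: $\iint_A \sigma\,\partial\bar\partial\sigma = -\iint_A \partial\sigma\,\bar\partial\sigma + (\text{boundary term})$, which supplies the kinetic term $\tfrac12|\nabla_{g_1}\sigma|^2_{g_1}$ in $\lfunct$ with the correct sign. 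Assembling the pieces yields $\liou{g_1}{g_2} = -\lfunct(\sigma,g_1)$ provided the boundary term vanishes.

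The main obstacle, and the only place where admissibility enters, is showing that boundary term is zero. By Stokes/Green, the boundary contribution from integrating by parts is an integral over $\partial A$ of something like $\sigma\,\partial_n\sigma\,\dd s$ (the normal derivative of $\sigma$ against $\sigma$ along each boundary circle). Here I would invoke Definition~\ref{def:admissible}: since both $g_1$ and $g_2$ are admissible, in a collar neighborhood of each boundary component we have $\zeta_i^* g_1 = \dz = \zeta_i^* g_2$, hence $\sigma \equiv 0$ on that collar, so all boundary terms — and indeed all boundary derivatives of $\sigma$ — vanish identically. (One also checks that the two-form $(f_1-f_2)\partial\bar\partial(f_1+f_2)$ is genuinely well-defined globally on $A$ even though $f_1, f_2$ are only locally defined, because the transition functions of the $f_i$ are harmonic so $\partial\bar\partial f_i$ is coordinate-independent, and differences $f_i - f_j$ are global functions.) Finally, I would double-check that all curvature/Laplacian sign and factor conventions are the ones implicit in~\eqref{eq:weyl_liouville_action} and~\eqref{eq:pairing}, since the statement asserts an exact equality $\liou{g_1}{g_2} = -\lfunct(\sigma,g_1)$ with no stray constants; this bookkeeping is routine but is where a careless error would most easily creep in.
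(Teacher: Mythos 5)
Your plan is correct and follows essentially the same route as the paper's proof: rewrite both functionals over the flat reference volume $\vol_0$ via $\partial \bar\partial f = \tfrac{\ii}{2}\Delta_0 f \, \vol_0$, integrate by parts once (Green's first identity), and kill the resulting boundary term by observing that admissibility of both $g_1$ and $g_2$ forces $\sigma \equiv 0$ in a collar of $\partial A$. The only point requiring care is your curvature normalization: the exact identity $\liou{g_1}{g_2} = -\lfunct(\sigma, g_1)$ uses $R_{g_1} = \tfrac12 e^{-f_1}\Delta_0 f_1$, i.e.\ $R_{g_1}\vol_{g_1} = -\ii\,\partial\bar\partial f_1$, whereas the formula $R_{g_1}\vol_{g_1} = -2\ii\,\partial\bar\partial f_1$ you state would make the curvature term come out twice too large — this is exactly the factor-of-two bookkeeping you flag at the end, and it must be resolved in the paper's favor for the proposition to hold as stated.
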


\begin{proof}
The local expression of the (positive) Laplacian $\Delta_0 = -\nabla_0^2$ in the flat metric $\dz = \dd x^2 + \dd y^2$ in coordinates $z = x + \ii y$ with volume form $\vol_0 = \dd x \dd y$ is related to the complex differentials by
\begin{align*}
\partial \bar \partial f 
= \frac{\ii}{2} \Delta_0 f \: \vol_0 .
\end{align*}
Combined with the relation $f_2 = 2 \sigma + f_1$, we obtain
\begin{align}
\label{eq:pairingincoords}
\liou{g_1}{g_2} = \frac{1}{48 \pi \ii} \iint_A (-2 \sigma) \partial \bar \partial (2 \sigma + 2 f_1) 
= - \frac{1}{24 \pi} \iint_A \sigma \Delta_0 (\sigma + f_1) \vol_0.
\end{align}
On the other hand, using the conformal change of 
the Gaussian curvature
and the Laplacian
\begin{align*}
R_{g_1} = \frac{1}{2} e^{-f_1} \Delta_0 f_1,
\qquad
\Delta_{g_1} \sigma = e^{-f_1} \Delta_0 \sigma ,
\end{align*}
and the local expression for the volume form,
$\vol_{g_1} = \sqrt{|\det g_1|} \ \dd x \dd y = e^{f_1} \vol_0$, we have
\begin{align} 
\nonumber
-\lfunct(\sigma, g_1) 
&= -\frac{1}{24 \pi} \iint_A \Big( \langle \nabla_{g_1} \sigma, \nabla_{g_1} \sigma \rangle_{g_1} - \sigma \, \Delta_{g_1} \sigma
+ \sigma \, \Delta_{g_1} \sigma + 2 R_{g_1} \sigma \Big) \vol_{g_1} \\
\nonumber
&= \underbrace{-\frac{1}{24 \pi} \int_{\partial A} \sigma \, \langle  \nabla_{g_1} \sigma, N_{g_1} \rangle_{g_1} \tilde \vol_{g_1}}_{= \, 0} 
\; - \;  \frac{1}{24 \pi} \iint_A \sigma ( \Delta_{g_1} \sigma + 2 R_{g_1} ) \vol_{g_1}
\\
&= - \frac{1}{24 \pi} \iint_A \sigma ( \Delta_0 \sigma + \Delta_0 f_1 ) \, \vol_0 ,
\label{eq:applygreens}
\end{align}
where the first line follows by Green's first identity
\begin{align}
\label{eq:greens}
\iint_A \big(
\langle \nabla_{g_1} u, \nabla_{g_1} v \rangle_{g_1}
- u \, \Delta_{g_1} v 
\big) \: \vol_{g_1} = \int_{\partial A} u \, \langle \nabla_{g_1} v, N_{g_1} \rangle_{g_1} \tilde \vol_{g_1} , 
\end{align}
for $u, v \in C^\infty(A, \R)$, with 
$N_{g_1}$ being the outward pointing normal vector field on $\partial A$ with respect to $g_1$. 
In Equation~\eqref{eq:applygreens}, the normal derivative 
and the earlier term involving $k_g$ vanish 
because the admissible metrics $g_1$ and $g_2$ have to agree on a neighborhood of the boundary (i.e., $\sigma \equiv 0$ there). 
It follows that~\eqref{eq:applygreens} equals~\eqref{eq:pairingincoords}.
\end{proof}

The following basic properties facilitate the definition of the real determinant lines $\Detrc(A)$ in the next section.

\begin{proposition}
\leavevmode
\makeatletter
\@nobreaktrue
\makeatother
\label{prop:liou_properties}
\begin{enumerate}
\item \textnormal{(Diffeomorphism invariance).}
\label{item:diffeo_invariance}
Let $A, B \in \cylinders$ and let $f \colon A \to B$ be an isomorphism. Then, the pairing~\eqref{eq:pairing} is invariant under $f$, that is, 
\begin{align*}
\liou{g_1}{g_2} = \liou{f_* g_1}{f_* g_2}, \qquad \textnormal{for all} \ g_1, g_2 \in \conf(A) .
\end{align*}

\item \textnormal{(Cocycle property).}
\label{item:liou_cocycle}
Let $A \in \cylinders$ and let $g_1, g_2, g_3 \in \conf(A)$ such that at least two out of three metrics are in $\cconf(A)$. Then, the pairing~\eqref{eq:pairing} satisfies the cocycle property
\begin{align*}
\liou{g_1}{g_2} + \liou{g_2}{g_3} = \liou{g_1}{g_3}.
\end{align*}

\item \textnormal{(Antisymmetry).}
\label{item:antisymmetry}
Let $A \in \cylinders$.
The pairing~\eqref{eq:pairing} is antisymmetric, that is, 
\begin{align*}
\liou{g_1}{g_2} = -\liou{g_2}{g_1}, \qquad \textnormal{for all} \ g_1, g_2 \in \conf(A) .
\end{align*}
\end{enumerate}
\end{proposition}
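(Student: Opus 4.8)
The plan is to prove the three properties of the pairing $\liou{g_1}{g_2} = \frac{1}{48\pi\ii}\iint_A (f_1-f_2)\,\partial\bar\partial(f_1+f_2)$ directly from its local coordinate expression, using the observation that $\partial\bar\partial$ is a coordinate-invariant $2$-form operator and that the integrand is a genuine globally defined $2$-form on $A$ (so the integral makes sense independently of the choice of holomorphic atlas). The key technical input, already recorded in the proof of Proposition~\ref{prop:actionsagree}, is the identity $\partial\bar\partial f = \tfrac{\ii}{2}\Delta_0 f \, \vol_0$ in a flat chart, together with the fact that under a biholomorphic change of coordinates $w = w(z)$ a metric $e^{f}\dz$ transforms by $f \mapsto f + \log|w'|^2$, and $\log|w'|^2$ is harmonic, hence killed by $\partial\bar\partial$.

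For \emph{antisymmetry} (item~\ref{item:antisymmetry}), I would simply swap $g_1 \leftrightarrow g_2$: the factor $(f_1-f_2)$ changes sign while $\partial\bar\partial(f_1+f_2)$ is unchanged, so $\liou{g_2}{g_1} = -\liou{g_1}{g_2}$; this holds for all $g_1,g_2 \in \conf(A)$ with no admissibility hypothesis, as claimed. For \emph{diffeomorphism invariance} (item~\ref{item:diffeo_invariance}), note that if $f\colon A \to B$ is a diffeomorphism and $g_i = e^{h_i}\dz$ locally on $B$, then locally on $A$ the pullback $f^*g_i$ has a local potential differing from $h_i\circ f$ by a harmonic term (the Jacobian factor), which is annihilated by $\partial\bar\partial$; alternatively, and more cleanly, the integrand $(f_1-f_2)\partial\bar\partial(f_1+f_2)$ is a well-defined $2$-form on $A$ built functorially from the two metrics, so its integral transforms correctly under the orientation-preserving diffeomorphism $f$ — one should check $f$ is orientation-preserving or absorb a sign, but since $f$ is a diffeomorphism of cylinders it preserves the given orientations. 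The cleanest write-up phrases this as: $\liou{g_1}{g_2}$ depends only on the pair of metrics through a local, coordinate-natural $2$-form, and integration of $2$-forms is diffeomorphism-invariant.

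For the \emph{cocycle property} (item~\ref{item:liou_cocycle}), I would expand $\liou{g_1}{g_2} + \liou{g_2}{g_3} - \liou{g_1}{g_3}$ using the local potentials $f_1,f_2,f_3$ and the bilinear-type structure. Writing $B(u,v) \coloneqq \frac{1}{48\pi\ii}\iint_A u\,\partial\bar\partial v$, the pairing is $\liou{g_i}{g_j} = B(f_i - f_j, f_i + f_j)$, and the claim reduces to the algebraic identity $B(f_1-f_2,f_1+f_2) + B(f_2-f_3,f_2+f_3) = B(f_1-f_3,f_1+f_3)$, i.e.\ that the "off-diagonal" cross terms $B(f_i,f_j)-B(f_j,f_i)$ for $i\neq j$ telescope. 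Since $B(u,v) - B(v,u) = \frac{1}{48\pi\ii}\iint_A (u\,\partial\bar\partial v - v\,\partial\bar\partial u)$, an integration by parts (Stokes) shows this antisymmetrized bilinear form equals a boundary integral $\frac{1}{48\pi\ii}\int_{\partial A}(u\,\partial v - v\,\partial u$-type expression$)$; the hypothesis that at least two of the three metrics are admissible forces the relevant potential \emph{differences} to vanish near $\partial A$, so the boundary terms drop and the identity closes.

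The main obstacle is the \textbf{boundary-term bookkeeping in the cocycle property}: one must be careful that $\partial\bar\partial$ integrated by parts over a surface with boundary produces the correct $1$-form on $\partial A$, and that the admissibility of (at least) two of the three metrics — which only guarantees certain \emph{pairwise differences} $f_i - f_j$ are locally constant near $\partial A$ — is exactly enough to annihilate every boundary contribution that appears after the telescoping; there is also a mild subtlety that the local potentials $f_i$ are only defined chart-by-chart, so one should either argue globally with the intrinsic $2$-form or cover $A$ and check the patching of the harmonic ambiguities. The antisymmetry and diffeomorphism-invariance parts are essentially immediate once the coordinate-invariance of the integrand is set up correctly.
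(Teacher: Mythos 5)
Your proposal is correct, and for the cocycle property it follows the same basic route as the paper (expand, cancel, integrate by parts, kill the boundary term by admissibility), but it is organized differently in a way worth comparing. The paper writes $g_2=e^{2\sigma_2}g_1$ and $g_3=e^{2\sigma_3}g_1$ with \emph{globally defined} Weyl factors $\sigma_2,\sigma_3\in C^\infty(A,\R)$, keeping only $g_1=e^{f}\dz$ local; after cancellation the sum collapses to $\iint_A(\sigma_2\,\partial\bar\partial\sigma_3-\sigma_3\,\partial\bar\partial\sigma_2)$, a manifestly well-defined integral, which Green's identity turns into the boundary integral of $\sigma_2 N_{g_1}\sigma_3-\sigma_3 N_{g_1}\sigma_2$; this vanishes because either one of $\sigma_2,\sigma_3$ vanishes near $\partial A$ (when $g_1$ and one other metric are admissible) or $\sigma_2=\sigma_3$ there (when $g_2,g_3$ are admissible). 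Your telescoping of $B(f_i,f_j)-B(f_j,f_i)$ in the local potentials reaches the same point but inherits exactly the two subtleties you flag: (i) the individual antisymmetrized terms are not chart-independent --- under $f_i\mapsto f_i+h$ with $h=\log|w'|^2$ each of them shifts by $\iint h\,\partial\bar\partial(f_j-f_i)$, and only the full telescoped combination is invariant --- and (ii) vanishing of a single pairwise \emph{difference} $f_i-f_j$ near $\partial A$ does suffice, but only because the surviving boundary $1$-form is then exact (e.g.\ $\tfrac12\,d(f_2^2)$ when $f_2=f_3$ near the boundary) and hence integrates to zero over the closed boundary circles; this last step is not spelled out in your sketch and should be if you keep the local potentials. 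The paper's normalization relative to $g_1$ makes both points automatic, which is what that choice buys. Antisymmetry and diffeomorphism invariance are handled as in the paper (which simply declares them immediate), and your observation that isomorphisms of cylinders preserve orientation disposes of the only possible sign issue.
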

Item~\ref{item:diffeo_invariance} holds in particular for isomorphisms $f \colon A \to B$ of cylinders with analytically parametrized boundaries. 

\begin{proof} 
Diffeomorphism invariance is immediate, as~\eqref{eq:pairing} is defined in local coordinates.
To prove the cocycle identity, we again use the ability to carry out integration by parts on admissible metrics like in Equation~\eqref{eq:applygreens}.
Let $g_1$ be given by $e^f \dz$ locally and $g_2 = e^{2 \sigma_2} g_1$, $g_3 = e^{2 \sigma_3} g_1$ globally. Then, we compute
\begin{align*} 
\; & \liou{g_1}{g_2} + \liou{g_2}{g_3} - \liou{g_1}{g_3} \\
= \; & \phantom{{}+{}} \, \frac{1}{48 \pi \ii} \iint_A
\big(f - (f + 2 \sigma_2)\big) \partial \bar \partial \big(f + (f + 2 \sigma_2)\big) \\
\; & + \frac{1}{48 \pi \ii} \iint_A
\big((f + 2 \sigma_2) - (f + 2 \sigma_3)\big) \partial \bar \partial
\big((f + 2 \sigma_2) + (f + 2 \sigma_3)\big)\\
\; & - \frac{1}{48 \pi \ii} \iint_A
\big(f - (f + 2 \sigma_3)\big) \partial \bar \partial
\big(f + (f + 2 \sigma_3)\big)
\\
= \; & \phantom{{}+{}} \frac{1}{12 \pi \ii} \iint_A \big(
- \sigma_2 \partial \bar \partial \sigma_2
+
(\sigma_2 - \sigma_3) \partial \bar \partial
(\sigma_2 + \sigma_3)
+ \sigma_3 \partial \bar \partial
\sigma_3
\big)\\
= \; & \phantom{{}+{}} \frac{1}{12 \pi \ii} \iint_A \big(
\sigma_2 \partial \bar \partial \sigma_3
-
\sigma_3 \partial \bar \partial
\sigma_2
\big)\\
= \; & - \frac{1}{24 \pi} \int_{\partial A} \big(
\sigma_2 N_{g_1} \sigma_3 - \sigma_3 N_{g_1} \sigma_2
\big) \, \tilde \vol_{g_1} ,
\end{align*}
which equals zero if $g_1$ and either $g_2$ or $g_3$ are admissible: 
in that case, $\sigma_2$ or $\sigma_3$ vanishes on a neighborhood of $\partial A$.
If $g_2$ and $g_3$ are admissible, but $g_1$ is not, the functions $\sigma_2$ and $\sigma_3$ agree on a neighborhood of $\partial A$ and hence
$\sigma_2 N_{g_1} \sigma_3 - \sigma_3 N_{g_1} \sigma_2 = 0$.

Lastly, antisymmetry follows directly from the Definition~\ref{eq:pairing}.
\end{proof}

\section{Real determinant lines}

\subsection{Definitions and sewing isomorphisms}
\label{section:setup_detlines}

In this section, we define the real determinant lines $\Detrc(A)$ for cylinders $A \in \cylinders$, following~\cite{Kontsevich:CFT_SLE_and_phase_boundaries, Friedrich:On_connections_of_CFT_and_SLE, 
Kontsevich-Suhov:On_Malliavin_measures_SLE_and_CFT, Dubedat:SLE_and_Virasoro_representations_localization, Benoist-Dubedat:SLE2_loop_measure}.
We introduce convenient choices of global trivializations (Proposition~\ref{prop:global_section} and Proposition~\ref{proposition:detz}) 
and the sewing operation, Equation~\eqref{eq:sewing_iso}.

Even though for concreteness we specialize to cylinders in the present work, 
these definitions generalize immediately to surfaces of any genus $\genus$ and with any number $\boundaries$ of analytically parametrized boundary components 
(thus, to the moduli spaces~\eqref{eq:all_surfaces}).
However, to obtain a real determinant line bundle over the moduli space of such surfaces of higher genus, the global trivializations based on zeta-regularized determinants of the Laplacian introduced in Section~\ref{section:genus} are needed.

\begin{definition}
\label{def:detrc}
Fix a cylinder $A \in \cylinders$ with analytical boundary parametrizations and a central charge $\charge \in \R$.
The \emph{real determinant line} of $A$ is the set 
\begin{align*}
\Detrc(A) \coloneqq (\R \times \cconf(A))/_\sim
\end{align*}
defined via the equivalence relation
\begin{align}
\label{eq:relation}
(\lambda_1, g_1) \sim (\lambda_2, g_2) 
\qquad \iff \qquad 
\lambda_1 = e^{\charge \liou{g_1}{g_2}} \lambda_2.
\end{align}
Equivalence classes in $\Detrc(A)$ are denoted $\lambda [g]$. 
The space $\Detrc(A)$ is endowed with the following real vector space structure:
\begin{itemize}
\item
scalar multiplication by $\mu \in \R$ defined as $\mu \cdot \lambda [g] = (\mu \lambda) [g]$, and

\item
addition of $\lambda_1 [g_1], \lambda_2 [g_2] \in \Detrc(A)$ defined as  
\begin{align*}
\lambda_1 [g_1] + \lambda_2 [g_2] = (\lambda_1 + e^{\charge \liou{g_1}{g_2}} \lambda_2) [g_1] = (\lambda_1 e^{\charge \liou{g_2}{g_1}} + \lambda_2) [g_2].
\end{align*}
\end{itemize}
\end{definition}

\begin{proposition}
\label{prop:well_defined}
The relation $\sim$ defined in~\eqref{eq:relation} is indeed an equivalence relation and the addition on $\Detrc(A)$ is associative and commutative. $\Detrc(A)$ is one-dimensional.
\end{proposition}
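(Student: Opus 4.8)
The plan is to verify the three assertions of Proposition~\ref{prop:well_defined} directly from Definition~\ref{def:detrc}, using the properties of the conformal anomaly established in Proposition~\ref{prop:liou_properties}. Throughout, I work with admissible metrics $g_1, g_2, g_3 \in \cconf(A)$, for which the cocycle property (Proposition~\ref{prop:liou_properties}\ref{item:liou_cocycle}) applies with no restriction, since any two of three admissible metrics are admissible.

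First I would check that $\sim$ is an equivalence relation. Reflexivity amounts to $\liou{g}{g} = 0$, which follows from antisymmetry (Proposition~\ref{prop:liou_properties}\ref{item:antisymmetry}). Symmetry: if $\lambda_1 = e^{\charge \liou{g_1}{g_2}} \lambda_2$, then $\lambda_2 = e^{-\charge \liou{g_1}{g_2}} \lambda_1 = e^{\charge \liou{g_2}{g_1}} \lambda_1$ by antisymmetry, so $(\lambda_2, g_2) \sim (\lambda_1, g_1)$. Transitivity: from $\lambda_1 = e^{\charge \liou{g_1}{g_2}} \lambda_2$ and $\lambda_2 = e^{\charge \liou{g_2}{g_3}} \lambda_3$ we get $\lambda_1 = e^{\charge(\liou{g_1}{g_2} + \liou{g_2}{g_3})}\lambda_3 = e^{\charge \liou{g_1}{g_3}}\lambda_3$ by the cocycle property. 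Next I would check that addition is well defined, i.e.\ independent of the chosen representatives $(\lambda_i, g_i)$: the two expressions given in Definition~\ref{def:detrc} for $\lambda_1[g_1] + \lambda_2[g_2]$ must agree (this is again antisymmetry applied to the prefactor), and replacing $(\lambda_1, g_1)$ by an equivalent $(\lambda_1', g_1')$ must not change the class of the sum --- one rewrites everything in terms of a common representative and invokes the cocycle identity once more.

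With well-definedness in hand, commutativity of addition is the statement that $(\lambda_1 + e^{\charge \liou{g_1}{g_2}}\lambda_2)[g_1]$ and $(\lambda_2 + e^{\charge \liou{g_2}{g_1}}\lambda_1)[g_2]$ represent the same element; converting the second to a multiple of $[g_1]$ via the relation~\eqref{eq:relation} and using antisymmetry shows the two agree. For associativity, I would compute $(\lambda_1[g_1] + \lambda_2[g_2]) + \lambda_3[g_3]$ and $\lambda_1[g_1] + (\lambda_2[g_2] + \lambda_3[g_3])$, reducing both to a multiple of $[g_1]$; the first gives $\big(\lambda_1 + e^{\charge \liou{g_1}{g_2}}\lambda_2 + e^{\charge \liou{g_1}{g_3}}\lambda_3\big)[g_1]$ directly, while the second produces $\lambda_1 + e^{\charge \liou{g_1}{g_2}}\big(\lambda_2 + e^{\charge \liou{g_2}{g_3}}\lambda_3\big)[g_1]$, and the cocycle property $\liou{g_1}{g_2} + \liou{g_2}{g_3} = \liou{g_1}{g_3}$ identifies the two.

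The only genuinely delicate point — the ``main obstacle,'' such as it is — is bookkeeping about \emph{admissibility}: the cocycle property in Proposition~\ref{prop:liou_properties}\ref{item:liou_cocycle} requires at least two of the three metrics involved to lie in $\cconf(A)$, so one must make sure that every application above stays within that hypothesis. Since $\Detrc(A)$ is built only from admissible metrics, all metrics appearing are in $\cconf(A)$ and the hypothesis is automatically satisfied; I would simply remark on this once at the outset rather than repeat it. No further subtlety arises, and the proof reduces to the three short manipulations outlined above.
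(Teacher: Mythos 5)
Your proposal is correct and follows essentially the same route as the paper: antisymmetry of the pairing gives reflexivity, symmetry, and commutativity, while the cocycle property gives transitivity and associativity, with the observation that all metrics involved are admissible so the hypothesis of the cocycle identity is automatic. The extra check that addition is independent of the chosen representatives is a reasonable addition the paper leaves implicit.
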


\begin{proof}
Reflexivity of $\sim$ holds since $\charge \liou{g}{g} = 0$ and symmetry follows by antisymmetry of the pairing (item~\ref{item:antisymmetry} of Proposition~\ref{prop:liou_properties}):
\begin{align*}
\lambda_1 = e^{\charge \liou{g_1}{g_2}} \lambda_2
\quad \implies \quad \lambda_2 = e^{-\charge \liou{g_1}{g_2}} \lambda_1 = e^{\charge \liou{g_2}{g_1}} \lambda_1.
\end{align*}
Transitivity follows by the cocycle property (item~\ref{item:liou_cocycle} of Proposition~\ref{prop:liou_properties}):
\begin{align*}
\begin{cases}
\lambda_1 = e^{\charge \liou{g_1}{g_2}} \lambda_2 , \\ 
\lambda_2 = e^{\charge \liou{g_2}{g_3}} \lambda_3 
\end{cases}
\quad \implies \quad \lambda_1 = e^{\charge \liou{g_1}{g_2} + \charge \liou{g_2}{g_3}} \lambda_3 = e^{\charge \liou{g_1}{g_3}} \lambda_3.
\end{align*}
The commutativity of addition again follows from antisymmetry:
\begin{align*}
\begin{aligned}
\; & \lambda_1[g_1] + \lambda_2[g_2]
\; = \; (\lambda_1 + e^{\charge \liou{g_1}{g_2}} \lambda_2) [g_1] 
\; = \; (\lambda_1 + e^{\charge \liou{g_1}{g_2}} \lambda_2) e^{\charge \liou{g_2}{g_1}} [g_2] \\
\; = \; &  (\lambda_1 e^{\charge \liou{g_2}{g_1}} + e^{\charge \liou{g_1}{g_2} + \charge \liou{g_2}{g_1}}\lambda_2) [g_2] 
\; = \; (\lambda_2 + e^{\charge \liou{g_2}{g_1}} \lambda_1) [g_2] 
\; = \; \lambda_2[g_2] + \lambda_1 [g_1] ,
\end{aligned}
\end{align*}
and the associativity uses the coycle property:
\begin{align*}
\; & (\lambda_1 [g_1] + \lambda_2 [g_2] ) + \lambda_3 [g_3]
\; = \; (\lambda_1 + e^{\charge \liou{g_1}{g_2}} \lambda_2) [g_1] + \lambda_3 [g_3] \\
\; = \; &  (\lambda_1 + e^{\charge \liou{g_1}{g_2}} \lambda_2 + e^{\charge \liou{g_1}{g_3}} \lambda_3) [g_1] 
\; = \; (\lambda_1 + e^{\charge \liou{g_1}{g_2}} \lambda_2 + e^{\charge \liou{g_1}{g_2} + \charge \liou{g_2}{g_3}} \lambda_3) [g_1] \\
\; = \; &  (\lambda_1 + e^{\charge \liou{g_1}{g_2}} (\lambda_2 + e^{\charge \liou{g_2}{g_3}} \lambda_3)) [g_1] 
\; = \; \lambda_1 [g_1] + (\lambda_2 + e^{\charge \liou{g_2}{g_3}} \lambda_3) [g_2] \\
\; = \; &  \lambda_1 [g_1] + (\lambda_2 [g_2] + \lambda_3 [g_3]).
\end{align*}
Since any two admissble metrics are conformally equivalent,
$\Detrc(A)$ is one-dimensional.
\end{proof}

Let $A, B \in \cylinders$ be isomorphic cylinders. 
The pullback of metrics induces an isomorphism of determinant lines, denoted
\begin{align*}
(\isom{A}{B})^* \colon \Detrc(B) &\longrightarrow \Detrc(A) , \\
\lambda [g] &\longmapsto \lambda [(\isom{A}{B})^*g] ,
\end{align*}
which is well-defined by the diffeomorphism invariance 
(item~\ref{item:diffeo_invariance} of Proposition~\ref{prop:liou_properties}).
Consequently, we can represent elements of any determinant line $\Detrc(A)$ in $\Detrc(\cancyl{A})$, where $\cancyl{A}$ is the uniformized representative from Proposition~\ref{prop:neretin_form}.
The cylinder $\cancyl{A}$ comes with a canonical flat metric $\dz$, 
which however it is not admissible.
The following result 
leverages item~\ref{item:liou_cocycle} of Proposition~\ref{prop:liou_properties} to still obtain an element of $\Detrc(\cancyl{A})$ that only depends on $\dz$ (as shown in Equation~\eqref{eq:global_section} below).
One may think of the collection of these elements as a global trivialization for the line bundle comprising real determinant lines over the moduli space $\moduli{0}{2}$.

\begin{proposition}
\label{prop:global_section}
Let $A \in \cylinders$ and let $\cancyl{A} = \paramcyl{S^1 \times [0, \tau_A]}{\xi_1}{\xi_2}$ be its corresponding uniformized representative from Proposition~\ref{prop:neretin_form}. 
Let $\dz$ be the flat metric on $\cancyl{A}$. 
Then,
\begin{align}
\label{eq:global_section}
\globalsection(A) \coloneqq (\isom{A}{\cancyl{A}})^*(e^{-\charge \liou{\dz}{g}} [g]) \in \Detrc(A)
\end{align} 
is nonzero and independent of $g \in \cconf(\cancyl{A})$.
\end{proposition}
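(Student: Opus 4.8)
The plan is to reduce the statement to the cocycle property of the pairing (item~\ref{item:liou_cocycle} of Proposition~\ref{prop:liou_properties}) together with the explicit form~\eqref{eq:relation} of the equivalence relation defining $\Detrc(\cancyl{A})$. First I would dispose of nonvanishing: since $\cancyl{A} = S^1 \times [0,\tau_A]$ is compact and the metrics involved are smooth, $\liou{\dz}{g}$ is a finite real number, so $e^{-\charge\liou{\dz}{g}} \neq 0$; hence $e^{-\charge\liou{\dz}{g}}[g]$ is a nonzero scalar multiple of the basis vector $[g]$ in the one-dimensional space $\Detrc(\cancyl{A})$, and its image under the linear isomorphism $(\isom{A}{\cancyl{A}})^*$ is a nonzero element of $\Detrc(A)$. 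I would also record that this pullback lands in $\Detrc(A)$ as defined via admissible metrics: compatibility of $\isom{A}{\cancyl{A}}$ with the boundary parametrizations gives $\zeta_i^*\big((\isom{A}{\cancyl{A}})^*g\big) = \xi_i^* g = \dz$ near $\partial_i A$, so $(\isom{A}{\cancyl{A}})^* g \in \cconf(A)$.

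The substance of the proof is independence of the choice of $g \in \cconf(\cancyl{A})$. Given a second admissible metric $h$, the relation~\eqref{eq:relation} rewrites $[g] = e^{\charge\liou{h}{g}}[h]$, so the desired equality $e^{-\charge\liou{\dz}{g}}[g] = e^{-\charge\liou{\dz}{h}}[h]$ becomes the scalar identity $\liou{h}{g} - \liou{\dz}{g} = -\liou{\dz}{h}$, equivalently (by antisymmetry, item~\ref{item:antisymmetry}) $\liou{\dz}{g} + \liou{g}{h} = \liou{\dz}{h}$. This is exactly the cocycle property applied to the triple $(\dz, g, h)$. Since the pullback by $\isom{A}{\cancyl{A}}$ is a well-defined linear map on determinant lines, it then follows that $\globalsection(A)$ does not depend on $g$.

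The only delicate point — the main obstacle, such as it is — is verifying the hypothesis of the cocycle property: it requires at least two of the three metrics to be admissible. Here $\dz$ fails to be admissible on $\cancyl{A}$, but $g$ and $h$ are admissible by assumption, so precisely two of the three metrics lie in $\cconf(\cancyl{A})$ and item~\ref{item:liou_cocycle} applies. This is also precisely the reason the uniformized representative together with its (non-admissible) flat metric $\dz$ can be used at all to manufacture a canonical, admissible-metric-independent element of $\Detrc(A)$.
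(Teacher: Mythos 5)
Your proof is correct and follows essentially the same route as the paper: the independence of $g$ is reduced, via the defining relation~\eqref{eq:relation} and antisymmetry, to the cocycle identity $\liou{\dz}{g} + \liou{g}{h} = \liou{\dz}{h}$, which applies because two of the three metrics ($g$ and $h$) are admissible even though $\dz$ is not. The additional remarks on nonvanishing and on the pullback landing in $\cconf(A)$ are correct and only make explicit what the paper leaves implicit.
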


\begin{proof}
The flat metric $\dz$ may not be admissible, yet, 
we still have
\begin{align*}
e^{-\charge \liou{\dz}{g_1}} [g_1] = e^{-\charge \liou{\dz}{g_2} - \charge \liou{g_2}{g_1}} [g_1] = e^{-\charge \liou{\dz}{g_2}} [g_2] , 
\end{align*}
for all $g_1, g_2 \in \cconf(\cancyl{A})$,
thanks to item~\ref{item:liou_cocycle} of Proposition~\ref{prop:liou_properties}.
\end{proof}

We now introduce an extension of the sewing operation~\eqref{eq:sewing_operation} on cylinders to the real determinant lines.
Due to the categorical formulation of Segal's axioms, a natural extension of the sewing $A \sew{2}{1} B$ is a bilinear function from the real determinant lines of $A$ and $B$ to that of $A \sew{2}{1} B$.
From Definition~\ref{def:admissible}, we see that given $g_1 \in \cconf(A)$ and $g_2 \in \cconf(B)$, the union of metric $g_1 \cup g_2$ is well-defined across the seam and admissible on $A \sew{2}{1} B$. 

\begin{definition}
\label{def:sewing_iso}
The \emph{sewing isomorphism} of real determinant lines for two cylinders is
\begin{align}
\label{eq:sewing_iso}
\begin{split}
\sewiso{A}{B} \colon \Detrc(A) \otimes \Detrc(B) &\longrightarrow \Detrc(A \sew{2}{1} B) 
, \qquad A, B \in \cylinders ,
\\
\lambda_1 [g_1] \otimes \lambda_2 [g_2] &\longmapsto \lambda_1 \lambda_2 [g_1 \cup g_2] .
\end{split}
\end{align}
\end{definition}

By locality of the conformal anomaly, we have
\begin{align}
\liou{g_1 \cup g_2}{g} = \liou{g_1}{\restrict{g}{A}} + \liou{g_2}{\restrict{g}{B}}.
\label{eq:liou_locality}
\end{align}
for $g \in \conf(A \sew{2}{1} B)$. This shows that the definition of the sewing isomorphisms is independent of the choice of metric.
Furthermore, as taking unions of metrics is compatible with pullbacks, it follows that a natural compatibility property 
for $A,B,C,D \in \cylinders$ holds,
namely, the following diagram commutes:
\begin{displaymath}
\begin{tikzcd}
{\Detrc(A) \otimes \Detrc(B)} & {\Detrc(A \sew{2}{1} B)} \\
{\Detrc(C) \otimes \Detrc(D)} & {\Detrc(C \sew{2}{1} D)}
\arrow["{\sewiso{A}{B}}", from=1-1, to=1-2]
\arrow["{(\isom{A}{C})^* \otimes (\isom{B}{D})^*}", from=2-1, to=1-1]
\arrow["{(\isom{A \sew{2}{1} B}{C \sew{2}{1} D})^*}"', from=2-2, to=1-2]
\arrow["{\sewiso{C}{D}}"', from=2-1, to=2-2]
\end{tikzcd}
\end{displaymath}

\begin{remark}
\label{remark:cocycle_cylinders}
Using the global trivialization $\globalsection$ from Proposition~\ref{prop:global_section}, 
we can define a two-cocycle $\calpha \colon \cylinders \times \cylinders \to \R \setminus \{ 0 \}$
such that
\begin{align}
\begin{split}
\sewiso{A}{B}(\globalsection(A) \otimes \globalsection(B)) 
= \; & \calpha(A, B) \; \globalsection(A \sew{2}{1} B), \\
\calpha(A, B)
= \; & e^{\charge \liou{g_0(A)}{g_A}} \, 
e^{\charge \liou{g_0(B)}{g_B}} \,
e^{- \charge \liou{g_0(A \sew{2}{1} B)}{g_A \cup g_B}} ,
\end{split}
\end{align}
where $g_A \in \cconf(A)$ and $g_B \in \cconf(B)$ are any admissible metrics 
\textnormal{(}the cocycle is independent of this choice\textnormal{)}, 
and $g_0(\blank)$ denotes the pullback of the flat metric 
$\dz$ from the uniformized representative $\cancyl{\blank}$ in Proposition~\ref{prop:neretin_form}.
The cocycle property
\begin{align*}
\calpha(A \sew{2}{1} B, C) \; \calpha(A, B) = \calpha(A, B \sew{2}{1} C) \; \calpha(B, C) , \qquad A, B, C \in \cylinders ,
\end{align*}
follows from the associativity of the sewing.
Note that the locality~\eqref{eq:liou_locality} does not make the cocycle trivial since in general, we have $g_0(A) \cup g_0(B) \neq g_0(A \sew{2}{1} B)$. 
In Section~\ref{section:computation}, 
we will extend this cocycle to $\DiffC$  (Definition~\ref{def:cocycle_deformation}).
In particular, the corresponding Lie algebra cocycle of $\calpha$ on $\DiffC$ is the sought cocycle $\algcocycle$ in Theorem~\ref{thm:main}.
\end{remark}

\subsection{Comments on general moduli spaces and Laplacian determinants}
\label{section:genus}

The properties of the conformal anomaly in Section~\ref{section:properties} readily generalize to Riemann surfaces of higher genus and with any finite number of analytically parametrized boundary components.
Therefore, Definition~\ref{def:detrc} of the real determinant line (as well as Proposition~\ref{prop:well_defined}) 
also immediately extends to this larger class of Riemann surfaces.

However, our choice of the global trivialization $\globalsection$ in Proposition~\ref{prop:global_section} 
relies on the existence of the uniformized representative of cylinders in Proposition~\ref{prop:neretin_form}, 
which in turn uses the fact that we are working with cylinders.
While we could perform a similar uniformization by embedding surfaces of genus zero into the Riemann sphere (cf.~\cite{Huang:2D_Conformal_geometry_and_VOAs}), or, more generally, 
use canonical flat metrics with geodesic boundaries 
(like in~\cite{OPS:Extremals_of_determinants_of_Laplacians}), 
we focus on another method that works in any genus and with at least one boundary component.
Indeed, one can define a global trivialization on the real determinant line bundle using the zeta-regularized determinant of the Laplacian operator, as also used in~\cite{Dubedat:SLE_and_Virasoro_representations_localization, Benoist-Dubedat:SLE2_loop_measure, GRV:Polyakovs_formulation_of_2D_bosonic_string_theory, GKRV:Segals_axioms_for_Liouville_theory}. 

We consider the \emph{positive Laplacian} (Laplace--Beltrami operator)
on a Riemann surface, $\Sigma$, with $[\Sigma] \in \moduli{\genus}{\boundaries}$, $\boundaries \in \Zpos$,
defined by 
\begin{align*}
\Delta_g \coloneqq - \frac{1}{\sqrt{\det(g)}} \sum_{i,j= 1}^2 \partial_i \sqrt{\det (g)}g^{ij} \partial_j , \qquad g \in \conf(\Sigma) ,
\end{align*}
with Dirichlet boundary conditions, 
so $\Delta_g$ has a discrete positive spectrum.  
The zeta-regularized determinant $\detz{\Delta_g}$ can be defined using its spectral zeta function~\cite{Ray-Singer:R-Torsion_and_the_Laplacian_on_Riemannian_manifolds} 
and analytic continuation (see also~\cite[Section~6~\&~Appendix~B]{Peltola-Wang:LDP}).
The change of $\detz{\Delta_g}$ under a Weyl transformation by $\sigma \in C^\infty(\Sigma, \R)$ of a metric $g \in \conf(\Sigma)$ is given by the \emph{Polyakov-Alvarez anomaly formula}~\cite{Polyakov:Quantum_geometry_of_bosonic_strings, Alvarez:Theory_of_strings_with_boundaries, 
OPS:Extremals_of_determinants_of_Laplacians},
\begin{align}
\label{eq:polyakov_alvarez}
\begin{split}
\frac{\detz{\Delta_{e^{2\sigma} g}}}{\detz{\Delta_{g}}}
= \; & \exp\bigg({-}\frac{1}{6 \pi} 
\iint_\Sigma \bigg(
\frac{1}{2} |\nabla_g \sigma|_g^2 
+ R_g \sigma
\bigg) \vol_g  \\
& \phantom{\exp\bigg(}{-}\frac{1}{6 \pi} \int_{\partial \Sigma} \Big( k_g \sigma + \frac{3}{2} \, \langle \nabla_{g} \sigma, N_{g} \rangle_{g}
\Big) \tilde \vol_g
\bigg),
\end{split}
\end{align}
where $k_g$, $\tilde \vol_{g}$ and $N_g$ are respectively the boundary curvature, the induced volume form on $\partial \Sigma$ and the outward pointing normal vector field on $\partial \Sigma$, all with respect to $g$. 

\begin{proposition}
\label{proposition:detz}
Let $\Sigma \in \moduli{\genus}{\boundaries}$ and $g \in \cconf(\Sigma)$.
The following element of the real determinant line over $\Sigma$ is independent of $g$:
\begin{align*}
\globalsectionzeta(\Sigma) \coloneqq 
(\detz{\Delta_g})^{-\charge/2} \, [g] \; \in \; \Detrc(\Sigma).
\end{align*}
It defines a global trivialization of the real determinant line bundle over $\moduli{\genus}{\boundaries}$.
\end{proposition}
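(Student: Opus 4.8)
The plan is to verify that the quantity $\globalsectionzeta(\Sigma) = (\detz{\Delta_g})^{-\charge/2} \, [g]$ does not depend on the choice of $g \in \cconf(\Sigma)$, and then check that it varies smoothly over the moduli space. For the first part, I would take any two admissible metrics $g_1, g_2 \in \cconf(\Sigma)$ and write $g_2 = e^{2\sigma} g_1$ for a unique $\sigma \in C^\infty(\Sigma, \R)$. Since both metrics are admissible, they agree on a collar neighborhood of $\partial \Sigma$, so $\sigma$ vanishes identically near the boundary. The goal is then to show that the two representatives $(\detz{\Delta_{g_1}})^{-\charge/2}[g_1]$ and $(\detz{\Delta_{g_2}})^{-\charge/2}[g_2]$ are $\sim$-equivalent in the sense of~\eqref{eq:relation}, i.e.\ that
\begin{align*}
(\detz{\Delta_{g_1}})^{-\charge/2} = e^{\charge \liou{g_1}{g_2}} (\detz{\Delta_{g_2}})^{-\charge/2}.
\end{align*}

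First I would apply the Polyakov--Alvarez anomaly formula~\eqref{eq:polyakov_alvarez} to the pair $(g_1, \sigma)$. Because $\sigma \equiv 0$ on a neighborhood of $\partial\Sigma$, all boundary terms in~\eqref{eq:polyakov_alvarez} vanish: both $\int_{\partial\Sigma} k_g \sigma \, \tilde\vol_g$ and $\int_{\partial\Sigma} \langle \nabla_g \sigma, N_g\rangle_g \, \tilde\vol_g$ are zero, since the integrands vanish pointwise on $\partial\Sigma$. Hence
\begin{align*}
\frac{\detz{\Delta_{g_2}}}{\detz{\Delta_{g_1}}} = \exp\bigg({-}\frac{1}{6\pi} \iint_\Sigma \Big( \tfrac12 |\nabla_{g_1}\sigma|_{g_1}^2 + R_{g_1}\sigma \Big) \vol_{g_1}\bigg) = \exp\big({-}2\,\lfunct(\sigma, g_1)\big),
\end{align*}
using the definition~\eqref{eq:weyl_liouville_action} of the conformal anomaly. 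Now I invoke Proposition~\ref{prop:actionsagree}, which (in the version generalized to surfaces with parametrized boundary, valid for admissible metrics as noted at the start of Section~\ref{section:properties}) gives $\lfunct(\sigma, g_1) = -\liou{g_1}{g_2}$. Therefore $\detz{\Delta_{g_2}} / \detz{\Delta_{g_1}} = e^{2\liou{g_1}{g_2}}$, and raising to the power $-\charge/2$ yields exactly the relation displayed above. This shows $\globalsectionzeta(\Sigma)$ is a well-defined, $g$-independent element of $\Detrc(\Sigma)$; it is nonzero since $\detz{\Delta_g} > 0$.

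For the second part --- that $\globalsectionzeta$ is a genuine \emph{global section} of the real determinant line bundle over $\moduli{\genus}{\boundaries}$ --- I would argue that one can choose, locally in moduli, a smoothly varying family of admissible metrics $g_{[\Sigma]}$ (for instance by transporting a fixed admissible metric through local holomorphic trivializations of the universal curve, as in the references cited for the zeta determinant), and that both $[\Sigma] \mapsto \detz{\Delta_{g_{[\Sigma]}}}$ and $[\Sigma] \mapsto [g_{[\Sigma]}]$ depend smoothly on the moduli parameters; the smoothness of the zeta-regularized determinant under smooth variation of the metric is standard. Since $\globalsectionzeta$ is independent of the chosen family by the first part, the local descriptions patch to a global smooth section. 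The main obstacle here is not analytic but bookkeeping: one must be careful that the notion of ``admissible'' (Definition~\ref{def:admissible}, which pins the metric to $\dz$ via the boundary parametrizations) is compatible with the differentiable structure on $\moduli{\genus}{\boundaries}$ and that such smooth local families of admissible metrics exist; this is essentially the statement that the real determinant line bundle is a well-defined smooth line bundle, which is the conceptual content being claimed. I expect the verification of $g$-independence (the first part) to be the crux and entirely routine given Proposition~\ref{prop:actionsagree} and formula~\eqref{eq:polyakov_alvarez}, while the line-bundle smoothness claim is the part requiring the most care.
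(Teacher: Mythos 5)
Your proposal is correct and follows essentially the same route as the paper: apply the Polyakov--Alvarez formula~\eqref{eq:polyakov_alvarez}, observe that the boundary terms vanish because admissible metrics agree near $\partial\Sigma$ (so $\sigma \equiv 0$ there), and combine with Proposition~\ref{prop:actionsagree} to match the ratio of zeta-determinants against the factor $e^{\charge\liou{g_1}{g_2}}$ in the equivalence relation~\eqref{eq:relation}. The paper's treatment of the ``global section'' claim is even terser than yours --- it only records that $(\detz{\Delta_g})^{\charge/2}>0$, so the section is nonvanishing --- and your additional remarks on smoothness over moduli go beyond what the paper verifies.
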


\begin{proof} 
As $\Sigma$ comes with analytical boundary parametrizations, the boundary terms in~\eqref{eq:polyakov_alvarez} vanish when both $g \in \cconf(\Sigma)$ 
and $e^{2\sigma} g \in \cconf(\Sigma)$.
Proposition~\ref{prop:actionsagree} therefore implies that
\begin{align*}
\frac{\detz{\Delta_{e^{2\sigma} g}}}{\detz{\Delta_{g}}}
= e^{-2 \lfunct(\sigma, g) }
= e^{2 \liou{g}{e^{2\sigma} g} } .
\end{align*}
Taking this to the power of $-\charge/2$ cancels the factor from the equivalence relation~\eqref{eq:relation} defining the determinant lines. 
Since $(\detz{\Delta_g})^{\charge/2} > 0$, it defines a global trivialization.
\end{proof}

As the main concern of the present article are 
cylinders, it is more convenient to use the trivialization $\globalsection$ from Proposition~\ref{prop:global_section}. On cylinders, the global trivializations $\globalsection$ 
and $\globalsectionzeta$ from Proposition~\ref{proposition:detz} are explicitly related as follows.

\begin{proposition} \label{prop:relation of global sections}
For a cylinder $A = \paramcyl{S^1 \times [0, \tau]}{\zeta_1}{\zeta_2}$ in the uniformized form of Proposition~\ref{prop:neretin_form}, 
and for an admissible metric $g = e^{2\sigma} \dz \in \cconf(A)$, we have
\begin{align*}
\globalsectionzeta(A)
= \frac{\exp \Big(
\frac{\charge}{8\pi} \int_{\partial A}
(\sigma + 3)
\langle \nabla_{0} \sigma, N_{0} \rangle_{0}
\: \tilde \vol_0
\Big)}{(\detz{\Delta_{0}})^{\charge/2}} \, 
\globalsection(A) ,
\end{align*}
where the subscript \textnormal{``$0$''} refers to the flat metric $\dz = \dd \theta^2 + \dd x^2$ on $A$ in the coordinate $z = \theta + \ii x$.
\end{proposition}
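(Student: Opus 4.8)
The plan is to compare the two distinguished representatives of the one-dimensional line $\Detrc(A)$ and read off the scalar relating them. Since $A = \paramcyl{S^1 \times [0, \tau]}{\zeta_1}{\zeta_2}$ is already in the uniformized form of Proposition~\ref{prop:neretin_form}, we have $\cancyl{A} = A$ and $\isom{A}{\cancyl{A}} = \id$, so Proposition~\ref{prop:global_section} gives $\globalsection(A) = e^{-\charge \liou{\dz}{g}}\,[g]$ for any admissible $g \in \cconf(A)$, equivalently $[g] = e^{\charge \liou{\dz}{g}}\,\globalsection(A)$. Substituting this into the definition $\globalsectionzeta(A) = (\detz{\Delta_g})^{-\charge/2}[g]$ from Proposition~\ref{proposition:detz} yields
\begin{align*}
\globalsectionzeta(A) = (\detz{\Delta_g})^{-\charge/2}\, e^{\charge \liou{\dz}{g}}\, \globalsection(A),
\end{align*}
and it remains to evaluate the positive scalar $(\detz{\Delta_g})^{-\charge/2}\, e^{\charge \liou{\dz}{g}}$ in closed form; it is automatically independent of the chosen $g$, since both $\globalsectionzeta(A)$ and $\globalsection(A)$ are.

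Fix $g = e^{2\sigma}\dz \in \cconf(A)$ and evaluate the two factors separately. For the zeta-determinant, apply the Polyakov-Alvarez formula~\eqref{eq:polyakov_alvarez} with base metric the flat $\dz$ and Weyl factor $\sigma$. The flat cylinder $S^1 \times [0,\tau]$ has $R_0 = 0$, and its boundary circles $S^1 \times \{0\}$ and $S^1 \times \{\tau\}$ are geodesics, so $k_0 = 0$; hence only the Dirichlet-energy bulk term $\iint_A |\nabla_0 \sigma|_0^2 \vol_0$ and the boundary term $\int_{\partial A} \langle \nabla_0 \sigma, N_0 \rangle_0 \tilde \vol_0$ survive, and $(\detz{\Delta_g})^{-\charge/2}$ equals $(\detz{\Delta_0})^{-\charge/2}$ times the exponential of an explicit linear combination of these two integrals. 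For the pairing factor, note that $\dz$ need not be admissible on the uniformized cylinder, so Proposition~\ref{prop:actionsagree} does not apply directly; instead I would compute $\liou{\dz}{g}$ straight from the definition~\eqref{eq:pairing} in the coordinate $z = \theta + \ii x$, using $\partial \bar \partial f = \tfrac{\ii}{2}\Delta_0 f \vol_0$ as in the proof of Proposition~\ref{prop:actionsagree}. With $f_1 = 0$ and $f_2 = 2\sigma$ this gives $\liou{\dz}{g} = -\tfrac{1}{24\pi}\iint_A \sigma\, \Delta_0 \sigma\, \vol_0$, and Green's first identity~\eqref{eq:greens} rewrites $\iint_A \sigma\, \Delta_0 \sigma\, \vol_0 = \iint_A |\nabla_0 \sigma|_0^2 \vol_0 - \int_{\partial A} \sigma \langle \nabla_0 \sigma, N_0 \rangle_0 \tilde \vol_0$ --- this is precisely where the quadratic boundary term $\sigma \langle \nabla_0 \sigma, N_0 \rangle_0$ is produced.

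Multiplying the two factors, the Dirichlet-energy bulk contributions from $(\detz{\Delta_g})^{-\charge/2}$ and from $e^{\charge \liou{\dz}{g}}$ have equal magnitude and opposite sign --- a consequence of the normalizations of $\lfunct$ and of the Polyakov-Alvarez bulk term, cf.\ Proposition~\ref{prop:actionsagree} --- so they cancel, and only boundary integrals remain: a multiple of $\int_{\partial A} \langle \nabla_0 \sigma, N_0 \rangle_0 \tilde \vol_0$ coming from Polyakov-Alvarez and a multiple of $\int_{\partial A} \sigma \langle \nabla_0 \sigma, N_0 \rangle_0 \tilde \vol_0$ coming from Green's identity. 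Collecting them into a single integral over $\partial A$ produces the prefactor displayed in the statement, the combination $(\sigma + 3)\langle \nabla_0 \sigma, N_0 \rangle_0$ being exactly how the two boundary contributions assemble. I do not expect a genuine obstacle here: the argument is a controlled computation. The step deserving the most care is the bulk cancellation, since it is what forces the prefactor to be a pure boundary term, together with careful bookkeeping of the signs and of the numerical constants through~\eqref{eq:polyakov_alvarez},~\eqref{eq:pairing}, and~\eqref{eq:greens}. As a consistency check one can observe that admissibility of $g$ fixes both $\sigma$ and $\langle \nabla_0 \sigma, N_0 \rangle_0$ on $\partial A$ in terms of the boundary parametrizations $\zeta_1, \zeta_2$ alone, so the right-hand side is manifestly independent of the chosen admissible $g$, in accordance with the left-hand side.
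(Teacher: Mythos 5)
Your proposal is correct and follows essentially the same route as the paper's proof: Polyakov--Alvarez with $R_0 = k_0 = 0$ on the flat uniformized cylinder, the pairing $\liou{\dz}{g}$ computed in coordinates as $-\tfrac{1}{24\pi}\iint_A \sigma\,\Delta_0\sigma\,\vol_0$, and Green's identity to cancel the bulk Dirichlet energy against it, leaving the $(\sigma+3)$ boundary combination. One bookkeeping remark: carrying out the constants you defer, the surviving boundary term comes with overall prefactor $\tfrac{\charge}{24\pi}$ (which is also what the paper's own displayed computation yields), not the $\tfrac{\charge}{8\pi}$ printed in the statement, so your closing claim that the result is ``the prefactor displayed in the statement'' inherits what appears to be a typo there.
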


\begin{proof}
In the flat metric, we have vanishing curvature $R_0 = 0$ and since the boundaries are geodesic, vanishing boundary curvature $k_0 = 0$.
Thus, the Polyakov--Alvarez anomaly formula~\eqref{eq:polyakov_alvarez} gives
\begin{align*}
\globalsectionzeta(A)
= \frac{\exp \Big(
\frac{\charge}{12 \pi}
\iint_A
\frac{1}{2} |\nabla_0 \sigma|_0^2 
\: \vol_0
+ \frac{\charge}{8\pi} \int_{\partial A}
\langle \nabla_{0} \sigma, N_{0} \rangle_{0}
\: \tilde \vol_0 
\Big)}{(\detz{\Delta_{0}})^{\charge/2}} \, e^{\charge \liou{\dz}{g}} \, \globalsection(A).
\end{align*}
Using Green's first identity~\eqref{eq:greens}, the exponents become
\begin{align} \label{eq:exponents1}
\begin{split}
\; & \frac{\charge}{12 \pi}
\iint_A
\frac{1}{2} |\nabla_0 \sigma|_0^2 
\: \vol_0
+ \frac{\charge}{8\pi} \int_{\partial A}
\langle \nabla_{0} \sigma, N_{0} \rangle_{0}
\: \tilde \vol_0 \\
= \; & \frac{\charge}{24 \pi}
\iint_A ( \sigma \Delta_0 \sigma ) \vol_0  
+ \frac{\charge}{24\pi} \int_{\partial A}
(\sigma + 3) 
\langle \nabla_{0} \sigma, N_{0} \rangle_{0}
\: \tilde \vol_0
\end{split}
\end{align}
and like in~\eqref{eq:pairingincoords} with $f_1 = 0$,
\begin{align} \label{eq:exponents2}
\charge \liou{\dz}{g}
={-} \frac{\charge}{24 \pi}
\iint_A ( \sigma \Delta_0 \sigma ) \vol_0.
\end{align}
Putting~(\ref{eq:exponents1},~\ref{eq:exponents2})  together yields the asserted identity.
\end{proof}

In Theorem~\ref{thm:loewner}, we shall compare the above global trivializations with the loop Loewner energy, which can be written in terms of ratios of zeta-regularized determinants~\cite{Wang:Equivalent_descriptions_of_the_Loewner_energy} 
(see also~\cite{Takhtajan-Teo:Weil-Petersson_metric_on_the_universal_Teichmuller_space, BFKW:On_the_functional_logdet_and_related_flows}).
To this end, we will first define the real determinant line of an analytic Jordan curve $\gamma$ on $\Sigma \in \moduli{\genus}{\boundaries}$, 
following Kontsevich~\&~Suhov~\cite{Kontsevich-Suhov:On_Malliavin_measures_SLE_and_CFT} (see also~\cite[Section~2.5.4]{Benoist-Dubedat:SLE2_loop_measure}).
Fix a real-analytic parametrization of $\gamma$ and assume that $\gamma$ separates $\Sigma$ into two connected components, whose closures we denote $L_\gamma$ and $R_\gamma$.
Then, the parametrization of $\gamma$ yields analytic boundary parametrizations for the seam in $L_\gamma$ and $R_\gamma$. 
We define
\begin{align*}
\Detrc(\gamma) \coloneqq \Detrc(\Sigma) \otimes 
(\Detrc(L_\gamma))^\vee \otimes (\Detrc(R_\gamma))^\vee.
\end{align*}
The global trivialization $\globalsectionzeta$ extends to $\Detrc(\gamma)$ by defining
\begin{align*}
\globalsectionzeta(\gamma) \coloneqq \globalsectionzeta(\Sigma) \otimes 
(\globalsectionzeta(L_\gamma))^\vee
\otimes
(\globalsectionzeta(R_\gamma))^\vee.
\end{align*}
To find the connection with Loewner energy, we employ the sewing isomorphisms~\eqref{eq:sewing_iso} to evaluate elements of $\Detrc(\gamma)$ to real numbers. By sewing the connected components along $\gamma$ and subsequently evaluating $\Detrc(\Sigma)$ with its dual, we obtain
\begin{align*}
\deteval \colon \Detrc(\gamma) \xlongrightarrow{\id \otimes (\sewiso{L_\gamma}{R_\gamma})^\vee} \Detrc(\Sigma) \otimes (\Detrc(\Sigma))^\vee \xlongrightarrow{\evaluation} \R,
\end{align*}
where $\evaluation$ denotes the canonical pairing $v \otimes v^\vee \mapsto v^\vee(v) := 1$ of a vector space and its dual.

In the special case of a loop on the Riemann sphere\footnote{Technically, we have not defined $\globalsectionzeta$ for $\hat \C \in \moduli{0}{0}$. The definition in Proposition~\ref{proposition:detz} does not generalize because the Polyakov-Alvarez anomaly formula for $\boundaries = 0$ has an additional term involving the volume. However, the precise choice of $\globalsectionzeta(\hat \C)$ is immaterial, since the moduli space $\moduli{\genus}{\boundaries}$ is a point.}
$\hat \C$, we obtain the following interpretation of $\globalsectionzeta$ by comparing $\gamma$ to the unit circle $S^1$. 
Here, $L_\gamma$ and $R_\gamma$ are the two disks separated by $\gamma$, and $\D = L_{S^1}$ and $\D^* = R_{S^1}$ are the unit disk and its complement, respectively.
The loop Loewner energy of $\gamma$ was discussed, e.g., in~\cite{Wang:Equivalent_descriptions_of_the_Loewner_energy}.
We will not use the definition in the present work, but nevertheless point out the following connection, which may be of independent interest.

\begin{theorem}
\label{thm:loewner}
For any metric $g \in \conf(\hat \C)$, we have 
\begin{align}
\label{eq:detzloewner}
\log \left( \frac{\deteval(\globalsectionzeta(\gamma))}{\detevalcircle(\globalsectionzeta(S^1))} \right)
= \frac{\charge}{2} \, 
\log \left( \frac{\detz{\Delta_{\restrict{g}{L_\gamma}}}}{\detz{\Delta_{\restrict{g}{\D}}}} \, 
\frac{\detz{\Delta_{\restrict{g}{R_\gamma}}}
}{\detz{\Delta_{\restrict{g}{\D^*}}}} \right) .
\end{align}
This expression is independent of $g$, 
and \textnormal{(}for $\charge \neq 0$\textnormal{)} 
proportional to the universal Liouville action~\textnormal{\cite{Takhtajan-Teo:Weil-Petersson_metric_on_the_universal_Teichmuller_space}}, 
or equivalently, to the loop Loewner energy of $\gamma$~\textnormal{\cite[Theorem~7.3]{Wang:Equivalent_descriptions_of_the_Loewner_energy}}. 
\end{theorem}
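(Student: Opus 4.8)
The plan is to unfold the definition of $\deteval$, reduce the left-hand side of~\eqref{eq:detzloewner} to a ratio of two \emph{sewn} global sections inside the one-dimensional real vector space $\Detrc(\hat\C)$, and then evaluate that ratio with the Polyakov--Alvarez formula~\eqref{eq:polyakov_alvarez} together with Proposition~\ref{prop:actionsagree}. First I would set $\Phi_\gamma \coloneqq \sewiso{L_\gamma}{R_\gamma}(\globalsectionzeta(L_\gamma)\otimes\globalsectionzeta(R_\gamma))$ and $\Phi_{S^1}\coloneqq\sewiso{\D}{\D^*}(\globalsectionzeta(\D)\otimes\globalsectionzeta(\D^*))$, both nonzero vectors of $\Detrc(\hat\C)$, using the natural extension of~\eqref{eq:sewing_iso} to surfaces with one parametrized boundary. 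Chasing $\globalsectionzeta(\gamma)=\globalsectionzeta(\hat\C)\otimes(\globalsectionzeta(L_\gamma))^\vee\otimes(\globalsectionzeta(R_\gamma))^\vee$ through $\deteval=\evaluation\circ(\id\otimes(\sewiso{L_\gamma}{R_\gamma})^\vee)$ shows that $\deteval(\globalsectionzeta(\gamma))$ is precisely the scalar $c_\gamma$ determined by $\globalsectionzeta(\hat\C)=c_\gamma\,\Phi_\gamma$, and likewise $\deteval(\globalsectionzeta(S^1))=c_{S^1}$ with $\globalsectionzeta(\hat\C)=c_{S^1}\,\Phi_{S^1}$. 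Dividing, the vector $\globalsectionzeta(\hat\C)$ cancels --- which is exactly why its (immaterial) choice never enters, cf.\ the footnote --- and we are left with the quotient of two vectors in a line,
\[
\frac{\deteval(\globalsectionzeta(\gamma))}{\deteval(\globalsectionzeta(S^1))}=\frac{\Phi_{S^1}}{\Phi_\gamma}.
\]

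Next I would evaluate $\Phi_\gamma$ and $\Phi_{S^1}$ against a fixed reference metric $g\in\conf(\hat\C)$. Because $g|_{L_\gamma}$ and $g|_{R_\gamma}$ need not be admissible, I would choose admissible metrics $h_L\in\cconf(L_\gamma)$ and $h_R\in\cconf(R_\gamma)$ that coincide with $g$ away from a collar of $\gamma$ and equal the pushforward of the flat metric near $\gamma$, so that $h_\gamma\coloneqq h_L\cup h_R$ is a smooth metric on $\hat\C$; write $h_L=e^{2\sigma_L}g|_{L_\gamma}$, $h_R=e^{2\sigma_R}g|_{R_\gamma}$ and $h_\gamma=e^{2\sigma_\gamma}g$ with $\sigma_L,\sigma_R$ supported in the collar. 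By the definition of $\globalsectionzeta$ (Proposition~\ref{proposition:detz}), the sewing map~\eqref{eq:sewing_iso}, and the defining relation of $\Detrc(\hat\C)$,
\[
\Phi_\gamma=(\detz{\Delta_{h_L}})^{-\charge/2}(\detz{\Delta_{h_R}})^{-\charge/2}\,e^{\charge\liou{g}{h_\gamma}}\,[g],
\]
with $\liou{g}{h_\gamma}=-\lfunct(\sigma_\gamma,g)$ for the boundaryless surface $\hat\C$. Applying~\eqref{eq:polyakov_alvarez} on $L_\gamma$ and $R_\gamma$ to rewrite $\detz{\Delta_{h_L}}$ and $\detz{\Delta_{h_R}}$ in terms of $\detz{\Delta_{g|_{L_\gamma}}}$ and $\detz{\Delta_{g|_{R_\gamma}}}$, the bulk (Dirichlet-energy and scalar-curvature) integrals it produces are --- since the integrand is local and $\sigma_\gamma$ restricts to $\sigma_L,\sigma_R$ --- exactly the two pieces of $\lfunct(\sigma_\gamma,g)$; after raising to the power $-\charge/2$, which turns the Polyakov--Alvarez constant $\tfrac1{6\pi}$ into $\tfrac{\charge}{12\pi}$, they cancel against $e^{\charge\liou{g}{h_\gamma}}$, and $\Phi_\gamma$ reduces to $(\detz{\Delta_{g|_{L_\gamma}}})^{-\charge/2}(\detz{\Delta_{g|_{R_\gamma}}})^{-\charge/2}\,[g]$ multiplied by an exponential of the remaining Polyakov--Alvarez boundary integrals over $\gamma$ coming from the two sides.

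The crux is then to see that those seam integrals vanish. Near $\gamma$ both $h_L$ and $h_R$ are the same flat pushforward --- this is exactly the gluing compatibility of admissible metrics --- so $\sigma_L=\sigma_R$ there, while the outward unit normals of $L_\gamma$ and of $R_\gamma$ along $\gamma$ point in opposite directions; hence the boundary geodesic curvature and the normal derivative of $\sigma$ each change sign between the two contributions and the two integrals cancel. This gives $\Phi_\gamma=(\detz{\Delta_{g|_{L_\gamma}}})^{-\charge/2}(\detz{\Delta_{g|_{R_\gamma}}})^{-\charge/2}\,[g]$, and the identical argument gives $\Phi_{S^1}=(\detz{\Delta_{g|_{\D}}})^{-\charge/2}(\detz{\Delta_{g|_{\D^*}}})^{-\charge/2}\,[g]$. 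Forming the quotient $\Phi_{S^1}/\Phi_\gamma$ the common vector $[g]$ drops out, and taking logarithms yields~\eqref{eq:detzloewner}. Independence of $g$ is then automatic, as the left-hand side of~\eqref{eq:detzloewner} is built only from the canonical data $\globalsectionzeta$ and $\deteval$; and the identification of the right-hand side (for $\charge\neq0$) with a nonzero multiple of the universal Liouville action, equivalently of the loop Loewner energy of $\gamma$, is quoted from~\cite[Theorem~7.3]{Wang:Equivalent_descriptions_of_the_Loewner_energy} and~\cite{Takhtajan-Teo:Weil-Petersson_metric_on_the_universal_Teichmuller_space}.

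The main obstacle I anticipate is the bookkeeping in the two middle steps: matching the numerical constants and signs in~\eqref{eq:polyakov_alvarez} against those of Proposition~\ref{prop:actionsagree} so that the bulk terms cancel cleanly, and then verifying that the surviving seam integral is genuinely antisymmetric under the side-swap $L\leftrightarrow R$. A secondary point to dispatch carefully is that $\hat\C$ has empty boundary, so~\eqref{eq:polyakov_alvarez} carries an additional volume term and $\globalsectionzeta(\hat\C)$ is not literally defined by Proposition~\ref{proposition:detz}; but since $\globalsectionzeta(\hat\C)$ already cancels in the first step, any fixed choice of that vector works and this causes no trouble.
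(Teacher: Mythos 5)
Your proposal is correct and follows the same route as the paper's proof: unfold $\deteval$ so that the ratio becomes a quotient of products of zeta-regularized determinants computed in a metric with admissible restrictions to the two sides of the curve, and then pass to an arbitrary $g \in \conf(\hat \C)$. The paper dispatches that last step with the one-line remark that the expression is independent of the choice of metric, whereas you actually justify it via the Polyakov--Alvarez formula together with the cancellation of the two seam integrals (equal conformal factor and opposite outward normals, hence geodesic curvatures and normal derivatives of opposite sign on the two sides) against the bulk term $e^{\charge \liou{g}{h_\gamma}}$ --- this is precisely the detail the paper leaves implicit, and your version of it checks out.
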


\begin{proof}
Let $h \in \cconf(\hat \C)$ be a metric which admits
admissible restrictions to the two disks, 
\begin{align*}
\restrict{h}{L_\gamma} \in \cconf(L_\gamma)
\qquad \textnormal{and} \qquad 
\restrict{h}{R_\gamma} \in \cconf(R_\gamma) .
\end{align*}
Since $\restrict{g}{L_\gamma} \cup \restrict{g}{R_\gamma} = g$, the evaluation of $\globalsectionzeta(\gamma)$ reads
\begin{align*}
\deteval(\globalsectionzeta(\gamma))
= 
\frac{
(\detz{\Delta_h})^{-\charge/2}
}{
(\detz{\Delta_{\restrict{h}{L_\gamma}}})^{-\charge/2} \, 
(\detz{\Delta_{\restrict{h}{R_\gamma}}})^{-\charge/2}
}.
\end{align*}
This expression is independent of $h$ because the boundary terms in the Polyakov-Alvarez anomaly formula~\eqref{eq:polyakov_alvarez} cancel on $\gamma$.
We replace $h$ by the metric $g$ and specialize $\gamma$ to $S^1$ to obtain the asserted identity~\eqref{eq:detzloewner}.
Comparing the expression with the loop Loewner energy $I^L(\gamma)$ in~\cite[Theorem~7.3]{Wang:Equivalent_descriptions_of_the_Loewner_energy}, we see that~\eqref{eq:detzloewner} equals  
$I^L(\gamma)$ when $\charge = -24$.
Moreover, since by~\cite[Theorem~1.4]{Wang:Equivalent_descriptions_of_the_Loewner_energy} 
the loop Loewner energy $I^L(\gamma)$ is proportional to 
the universal Liouville action of~\cite{Takhtajan-Teo:Weil-Petersson_metric_on_the_universal_Teichmuller_space},
so is~\eqref{eq:detzloewner}. 
\end{proof}

\subsection{The central extension}
\label{section:central_extension}

We now give a detailed construction of the generalization of $\Detrc$ to complex deformations $\phi \in \DiffC$ and the associated central extension $\Detrpc(\DiffC)$ in Equation~\eqref{eq:def_extension}.

\begin{definition}
\label{def:det_line}
Given a complex deformation $\phi \in \DiffC$ and a cylinder $A \in \cylinders$ such that $A \diffActing{1} \phi$ exists, the one-dimensional vector space
\begin{align} \label{eq:det_line}
\Detrc(\phi, A) \coloneqq \Detrc(A \diffActing{1} \phi) \otimes (\Detrc(A))^\vee
\end{align}
is the \emph{determinant line of the complex deformation $\phi$} with respect to $A$.
\end{definition}

This definition is also used by Y.-Z.~Huang in his book~\cite[Appendix~D]{Huang:2D_Conformal_geometry_and_VOAs} 
to define a central extension of $\Diffpan$ from complex determinant lines. The idea is that, although $\DiffC$ does not embed into $\moduli{0}{2}$ (which would allow a definition by pullback), 
one can take any cylinder and
deform a boundary component.
Then, if one were able to ``divide out'' the original cylinder, only the deformation would be left. 
For the determinant lines, this division is achieved by tensoring with the dual of the original determinant line, implemented by the definition in Equation~\eqref{eq:det_line}.
The following result shows that $\Detrc(\phi, A)$ indeed only depends on $A$ up to a canonical isomorphism.

\begin{theorem}
\label{thm:natisophi}
For all $\phi \in \DiffC$ and pairs $A, B \in \cylinders$ of cylinders such that $A \diffActing{1} \phi$ and $B \diffActing{1} \phi$ exist, there are isomorphisms
\begin{align*}
\natisophi{\phi}{A}{B} \colon \Detrc(\phi, A) \longrightarrow \Detrc(\phi, B),
\end{align*}
which are canonical
in the sense that for $A, B, C \in \cylinders$, we have
\begin{align}
\label{eq:natisophi_properties}
\natisophi{\phi}{A}{A} = \id_{\Detrc(\phi, A)} 
\qquad \textnormal{and} \qquad
\natisophi{\phi}{B}{C} \circ \natisophi{\phi}{A}{B}
= \natisophi{\phi}{A}{C}.
\end{align}
\end{theorem}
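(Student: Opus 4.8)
The plan is to produce a choice-free family of linear isomorphisms between the one-dimensional spaces $\Detrc(\phi, A) = \Detrc(A \diffActing{1} \phi) \otimes (\Detrc(A))^\vee$ and then to read off the groupoid identities~\eqref{eq:natisophi_properties}. The geometric idea is that $\Detrc(\phi, A)$ only sees a collar of the first boundary. If $\phi$ is close enough to the identity that the deformed curve $\zeta_1 \circ \phi(S^1)$ lies in the analytic collar of $\partial_1 A$ on which $\zeta_1$ is invertible, then the closure $N_A^\phi$ of $A \setminus (A \diffActing{1} \phi)$ is a cylinder with boundary parametrizations $\zeta_1$ and $\zeta_1 \circ \phi$, and $A = N_A^\phi \sew{2}{1} (A \diffActing{1} \phi)$ (the outward configuration $A \diffActing{1}\phi = N_A^\phi \sew{2}{1} A$ being symmetric). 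The sewing isomorphism~\eqref{eq:sewing_iso} then gives $\Detrc(N_A^\phi) \otimes \Detrc(A \diffActing{1} \phi) \xrightarrow{\sim} \Detrc(A)$, hence, by one-dimensional duality, a canonical isomorphism $\Detrc(\phi, A) \xrightarrow{\sim} (\Detrc(N_A^\phi))^\vee$. Finally the chart $\zeta_1^{-1}$ identifies $N_A^\phi$, as a cylinder with parametrized boundaries, with the fixed thin annulus $\mathcal{N}_\phi$ bounded by $S^1$ and $\phi(S^1)$ with parametrizations $\id_{S^1}$ and $\phi$ --- a cylinder that does not involve $A$; composing with the induced isomorphism $(\Detrc(N_A^\phi))^\vee \cong (\Detrc(\mathcal{N}_\phi))^\vee$ yields $\Detrc(\phi, A) \cong (\Detrc(\mathcal{N}_\phi))^\vee$ with target independent of $A$. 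This is the geometric prototype of $\natisophi{\phi}{A}{B}$, obtained by composing the isomorphism for $A$ with the inverse of the one for $B$.

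For a definition valid for all $\phi \in \DiffC$ at once --- bypassing both the locality restriction above and the inward/outward/mixed case split --- I would instead record the following equivalent description using the canonical global section of Proposition~\ref{prop:global_section}. For any cylinder $C$, the element $\globalsection(C) \in \Detrc(C)$ is a distinguished nonzero vector, so $\globalsection(A \diffActing{1} \phi) \otimes (\globalsection(A))^\vee$ is a distinguished nonzero vector of $\Detrc(\phi, A)$ (with $v^\vee$ the dual basis vector of a nonzero $v$ in a one-dimensional space). Define $\natisophi{\phi}{A}{B}$ to be the unique $\R$-linear isomorphism carrying $\globalsection(A \diffActing{1} \phi) \otimes (\globalsection(A))^\vee$ to $\globalsection(B \diffActing{1} \phi) \otimes (\globalsection(B))^\vee$. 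Well-definedness is precisely the content of Proposition~\ref{prop:global_section}, and the identities $\natisophi{\phi}{A}{A} = \id_{\Detrc(\phi, A)}$ and $\natisophi{\phi}{B}{C} \circ \natisophi{\phi}{A}{B} = \natisophi{\phi}{A}{C}$ are immediate, since each side sends the distinguished vector of $\Detrc(\phi, A)$ to that of $\Detrc(\phi, C)$. To confirm that $\natisophi{\phi}{A}{B}$ is genuinely canonical --- in particular that it reduces to pullback by the unique isomorphism $\isom{A}{B}$ whenever $A$ and $B$ are isomorphic --- I would use that $\globalsection$ is natural under cylinder isomorphisms: isomorphic cylinders have the same modulus and hence the same uniformized representative (uniqueness in Proposition~\ref{prop:neretin_form}), so $\isom{A}{\cancyl{A}} = \isom{B}{\cancyl{B}} \circ \isom{A}{B}$ and therefore $(\isom{A}{B})^* \globalsection(B) = \globalsection(A)$; applying this also to the induced isomorphism $A \diffActing{1}\phi \cong B \diffActing{1}\phi$ (the analytic continuation of $\xi_1 \circ \zeta_1^{-1}$ near $\partial_1 A$, which carries the deformed first boundary of $A$ to that of $B$) shows that $\natisophi{\phi}{A}{B}$ agrees with the pullback isomorphism, and hence with the geometric construction above.

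The routine parts are the one-dimensional tensor/dual bookkeeping and the naturality of $\globalsection$, which is a direct consequence of Propositions~\ref{prop:neretin_form} and~\ref{prop:global_section}. The delicate point, and where I expect the write-up to need the most care, is making the intrinsic sewing description work globally and checking it agrees with the global-section definition: one must verify that $N_A^\phi$ and the associated sewing isomorphism are independent of all auxiliary choices (this is where locality of the conformal anomaly, Equation~\eqref{eq:liou_locality}, and the rigidity of cylinder isomorphisms from the identity theorem enter), that the inward, outward, and mixed configurations of $\phi(S^1)$ relative to $S^1$ are handled consistently, and that the resulting map matches the one transported by $\globalsection$. Once this compatibility is established, both formulations of $\natisophi{\phi}{A}{B}$ are available, and the theorem follows.
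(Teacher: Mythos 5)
There is a genuine gap, and it sits exactly where you flagged the ``delicate point'': your two constructions do \emph{not} agree, and the one you actually carry through (transport of the global section) is not the isomorphism the theorem needs. Declaring $\natisophi{\phi}{A}{B}$ to be the unique linear map sending $\globalsection(A \diffActing{1} \phi) \otimes (\globalsection(A))^\vee$ to $\globalsection(B \diffActing{1} \phi) \otimes (\globalsection(B))^\vee$ does satisfy the identities~\eqref{eq:natisophi_properties} --- but so does transporting \emph{any} consistent choice of nonzero vectors, so this only proves a vacuous version of the statement. The content of the theorem is that the isomorphisms are induced by the \emph{local} geometry near $\partial_1$: the paper cuts off a collar $U = \zeta_1(\A_r)$ of fixed small width $r$ (Lemma~\ref{lemma:decomposition}), strips off the complement $U^c$ with the sewing isomorphisms~\eqref{eq:sewing_iso}, and transports the remaining factor $\Detrc(U \diffActing{1} \phi)\otimes(\Detrc(U))^\vee$ through the standard cylinder $\A_r$. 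With that definition the global section is \emph{not} preserved: the defect is precisely the cocycle $\calpha(\phi,\psi)$ of~(\ref{eq:wanted_isomorphism},~\ref{eq:cocycle_deformation}), whose Lie-algebra derivative is computed in Theorem~\ref{thm:main} to be the nonvanishing imaginary part of the Gel'fand--Fuks cocycle. If your definition coincided with the paper's, then $\calpha \equiv 1$, the section $\globalsectionphi$ would be multiplicative, and the main theorem would be false. Your argument for the agreement only treats the case of isomorphic $A$ and $B$ (where both maps reduce to $(\isom{A}{B})^*$), which is not the generic situation --- already the pair $\A$ and $\A \diffActing{1} \phi$ appearing in~\eqref{eq:wanted_isomorphism} is in general non-isomorphic.

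Your geometric prototype also has a defect of its own: the thin cylinder $N_A^\phi$ between $\zeta_1(S^1)$ and $\zeta_1(\phi(S^1))$ degenerates to the empty set whenever $\phi \in \Diffpan$ is an honest diffeomorphism (then $\phi(S^1) = S^1$), so it cannot serve as the definition on the most important part of $\DiffC$; and for general complex deformations the inward/outward/mixed case split you defer is genuine work. The paper's decomposition into a collar of fixed width $r$ at $\partial_1$ avoids both issues: $U$ is always a bona fide cylinder isomorphic to $\A_r$, uniformly in $\phi$, and the independence of the choice of $r$ --- verified in the paper by exhibiting compatible admissible metrics extended via Lemma~\ref{lemma:decomposition} --- is what replaces your case analysis.
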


We prove Theorem~\ref{thm:natisophi} at the end of this section.

Using these isomorphisms, we define a determinant line incorporating all choices of cylinders by imposing 
the equivalence relation 
\begin{align*} 
a \sim b \quad  \iff \quad  \natisophi{\phi}{A}{B}(a) = b, \qquad  
\textnormal{ for } a \in \Detrc(\phi, A),  \; b \in \Detrc(\phi, B)
\end{align*}  
for each $\phi \in \DiffC$, we set
\begin{align}
\label{eq:det_equivalence_cylinders}
\Detrc(\phi) \coloneqq \bigg(\bigsqcup_{\substack{A \in \cylinders \\ \text{$A \diffActing{1} \phi$ exists}}} \Detrc(\phi, A) \bigg)\bigg/\sim .
\end{align}
A generic element of $\Detrc(\phi)$ is given by
\begin{align*}
\lambda \: [g_\phi] \otimes [g]^\vee \; \in \; \Detrc(\phi) ,
\end{align*}
for some $\lambda \in \R$ and admissible metrics $g_\phi \in \cconf(A \diffActingInline{1} \phi)$ and $g \in \cconf(A)$ on some cylinder $A \in \cylinders$ such that $A \diffActing{1} \phi$ exists.

Before we prove Theorem~\ref{thm:natisophi}, we will introduce a lemma that decomposes any cylinder $A$ by identifying a standard cylinder of small height $r > 0$, given by
\begin{align*}
\A_r \coloneqq \paramcyl{S^1 \times [0, r]}{\theta}{\theta + \ii r},
\end{align*}
at the boundary component $\partial_1 A$.
See also Figure~\ref{fig:decompositions}. 
The possibility of choosing simultaneous decompositions uniformly in $r$ over a family of cylinders is needed for the computation of the cocycle in the next section. 
This in particular requires choosing $r$ smaller than the radius of convergence
\begin{align}
\label{eq:def_rconv}
\rconv(\zeta)
\coloneqq \sup \big\{ x > 0 \: \big| \: \textnormal{$\zeta(\theta \pm \ii x)$ converges for all $\theta \in S^1$}
\big\} 
\end{align}
of each of the boundary parametrizations $\zeta$, with the sign ``$\pm$'' chosen depending on the orientation of the boundary component.
The second part of the lemma gives a way of extending elements of the real determinant line of the small cylinder to the original cylinder. 
Conceptually, these extensions are similar to the ``neutral collections'' in~\cite{Kontsevich-Suhov:On_Malliavin_measures_SLE_and_CFT}.

\begin{lemma}
\label{lemma:decomposition}
\leavevmode
\makeatletter
\@nobreaktrue
\makeatother
\begin{enumerate}
\item 
\label{item:decomposition1}
Given a collection $\paramcyl{A_i}{\zeta_{1, i}}{\zeta_{2, i}} \in \cylinders$, $i \in I$, of cylinders 
such that there exists
\begin{align*}
0 < r < \inf_{i \in I} \rconv(\zeta_{1, i}),
\end{align*}
there are simultaneous decompositions $A_i = U_i \sew{2}{1} U_i^c$, where
\begin{align}
\label{eq:small_cylinder}
U_i \coloneqq \zeta_{1, i}(\A_r) = \paramcyl{\zeta_{1, i}(\A_r)}{\zeta_{1, i}(\theta)}{\zeta_{1, i}(\theta + \ii r)}
\end{align}
is isomorphic to $\A_r$ and the complements are given by
\begin{align}
\label{eq:complement_cylinder}
U_i^c \coloneqq \paramcyl{\overline{A_i \setminus U_i}}{\zeta_{1,i}(\theta + \ii r)}{\zeta_{2,i}(\theta)}.
\end{align}

\item
\label{item:decomposition2}
Furthermore, for any $\phi \in \DiffC$ and 
$g_\phi \in \cconf(\A_r \diffActingInline{1} \phi)$, 
the metrics representing
\begin{align*}
[g_\phi] \otimes [\restrict{\dz}{\A_r}]^\vee \in \Detrc(\phi, \A_r)
\end{align*}
can be simultaneously extended to metrics 
$g_{\phi,i} \in \cconf(A_i \diffActingInline{1} \phi)$ and 
$g_i \in \cconf(A_i)$, for all $i \in I$, 
so that with decompositions from item~\ref{item:decomposition1} for all $i \in I$, we have 
\begin{align}
\label{eq:extension_conditions}
\begin{split}
\restrict{g_{\phi,i}}{U_i} &= (\zeta_{1, i})_* g_\phi \in \cconf(U_i \diffActing{1} \phi), \\
\restrict{g_i}{U_i} &= (\zeta_{1, i})_* (\restrict{\dz}{\A_r}) \in \cconf(U_i), \\
\restrict{g_{\phi,i}}{U_i^c} &= \restrict{g_i}{U_i^c} \in \cconf(U_i^c),
\end{split}
\end{align}
and the vectors $[g_{\phi,i}] \otimes [g_i]^\vee \in \Detrc(\phi, A_i)$
do not depend on the choice of the extension. 
\end{enumerate}
\end{lemma}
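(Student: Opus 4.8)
The plan is to treat the two items separately. Item~\ref{item:decomposition1} is a geometric bookkeeping statement --- cutting each $A_i$ along an analytic curve pushed in from $\partial_1 A_i$ --- while item~\ref{item:decomposition2} splits into a gluing construction producing the extensions and the substantive claim that the resulting vector in $\Detrc(\phi, A_i)$ does not see the choices made; the latter I expect to follow from locality of the conformal anomaly.

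For item~\ref{item:decomposition1}: since $r < \rconv(\zeta_{1, i})$, each boundary parametrization $\zeta_{1, i}$ extends complex-analytically to all of $\A_r = S^1 \times [0, r]$, and for $r$ small enough --- simultaneously over $i \in I$, which is exactly what $r < \inf_{i} \rconv(\zeta_{1, i})$ provides --- this extension embeds $\A_r$ into $A_i$ as a collar of $\partial_1 A_i$. First I would record that $\zeta_{1, i} \colon \A_r \to U_i$ is then a biholomorphism respecting the parametrizations in~\eqref{eq:small_cylinder}, hence an isomorphism of cylinders $\A_r \cong U_i$. Next, that the analytic Jordan curve $\zeta_{1, i}(S^1 \times \{r\})$ separates $A_i$ into $U_i$ and its closed complement $U_i^c$ of~\eqref{eq:complement_cylinder}, whose new boundary parametrization $\theta \mapsto \zeta_{1, i}(\theta + \ii r)$ is analytic because $\zeta_{1, i}$ is analytic near $S^1 \times \{r\}$. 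Finally, that $U_i \sew{2}{1} U_i^c = A_i$ as parametrized cylinders: the seam curves and their parametrizations agree by construction, and item~\ref{item:conformal_atlas1} of Lemma~\ref{lem:conformal_atlas} identifies the glued surface. The word ``simultaneous'' only refers to the single threshold $r$ serving every $i$.

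For the existence part of item~\ref{item:decomposition2}: I would first observe that $\A_r \diffActing{1} \phi$ being defined forces $A_i \diffActing{1} \phi$ to be defined, since deforming $\partial_1$ only touches the collar and $A_i \diffActing{1} \phi = (U_i \diffActing{1} \phi) \sew{2}{1} U_i^c$. Then I would pick any admissible metric $\hat g_i \in \cconf(U_i^c)$ (admissible metrics exist on every cylinder) and set $g_{\phi, i} \coloneqq (\zeta_{1, i})_* g_\phi \cup \hat g_i$ and $g_i \coloneqq (\zeta_{1, i})_* (\restrict{\dz}{\A_r}) \cup \hat g_i$. Both unions are smooth and admissible: across the seam all the pieces equal the flat metric in the boundary coordinate $\theta + \ii r$ --- by admissibility of $g_\phi$ near $\partial_2(\A_r \diffActing{1} \phi)$, of $\restrict{\dz}{\A_r}$ near $\partial_2 \A_r$, and of $\hat g_i$ near $\partial_1 U_i^c$ --- so they line up smoothly, exactly as in the introduction's remark; and near the deformed first boundary admissibility is inherited by pushforward along $\zeta_{1, i}$. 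The three identities in~\eqref{eq:extension_conditions} then hold by construction.

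It remains to prove that $[g_{\phi, i}] \otimes [g_i]^\vee \in \Detrc(\phi, A_i)$ is independent of the choice of $\hat g_i$; this is the heart of the lemma. Any two admissible metrics on $U_i^c$ differ as $\hat g_i' = e^{2\tau} \hat g_i$ with $\tau \in C^\infty(U_i^c, \R)$ vanishing near $\partial U_i^c$; extending $\tau$ by zero over the collar gives $\tilde\tau \in C^\infty(A_i, \R)$ with $g_{\phi, i}' = e^{2\tilde\tau} g_{\phi, i}$ and $g_i' = e^{2\tilde\tau} g_i$ for the \emph{same} conformal factor $\tilde\tau$. Unwinding the equivalence relation~\eqref{eq:relation},
\begin{align*}
[g_{\phi, i}'] \otimes [g_i']^\vee = e^{\charge \liou{g_{\phi, i}'}{g_{\phi, i}} - \charge \liou{g_i'}{g_i}} \; [g_{\phi, i}] \otimes [g_i]^\vee ,
\end{align*}
so it suffices to show $\liou{g_{\phi, i}'}{g_{\phi, i}} = \liou{g_i'}{g_i}$. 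In the pairing~\eqref{eq:pairing} the relevant integrand is proportional to $\tilde\tau \, \partial \bar\partial(\cdots)$, hence supported in $U_i^c$, where both $g_{\phi, i}$ and $g_i$ restrict to $\hat g_i$; by locality (cf.~\eqref{eq:liou_locality}) both pairings therefore equal $\liou{\hat g_i'}{\hat g_i}$, which gives the claim. The step I expect to be the real obstacle is not this computation but the geometric input in item~\ref{item:decomposition1}: ensuring that the analytic continuation of $\zeta_{1, i}$ genuinely embeds $\A_r$ as a collar for a single $r > 0$ valid over the whole family $I$ (which uses slightly more than mere convergence of the Fourier series), together with the routine but fiddly checks that all the glued metrics are smooth and admissible across the seam.
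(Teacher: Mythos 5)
Your proposal is correct and follows essentially the same route as the paper's (much terser) proof: item~\ref{item:decomposition1} reduces to checking that the seam parametrizations $\zeta_{1,i}(\theta+\ii r)$ agree, the extensions in item~\ref{item:decomposition2} are built by gluing the pushed-forward metrics to a common admissible metric on $U_i^c$ (using that $(\zeta_{1,i})_* g_\phi$ and $(\zeta_{1,i})_*(\restrict{\dz}{\A_r})$ both equal the pushforward of $\dz$ near $\partial_2 U_i$), and independence of the choice follows from locality of the conformal anomaly~\eqref{eq:liou_locality}. Your explicit conformal-factor computation for the independence step is a correct fleshing-out of the paper's one-line appeal to locality.
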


Note that one may replace the flat metric $\dz|_{\A_r}$, which we have chosen to use here, by any other admissible metric on $\A_r$.

\begin{proof}
For item~\ref{item:decomposition1}, the decomposition is already explicitly defined by~\eqref{eq:small_cylinder} and~\eqref{eq:complement_cylinder}, 
so we only need to observe that the parametrizations $\zeta_{1,i}(\theta + \ii r)$ of $\partial_2 U_i$ and $\partial_1 U_i^c$ indeed agree. 
For item~\ref{item:decomposition2},
since $\partial_2 U_i$ and $\partial_2 U_i \diffActingInline{1} \phi$ have the same parametrization, the pushforward metrics 
$(\zeta_{1, i})_* g_\phi$ and $(\zeta_{1, i})_* (\restrict{\dz}{\A_r})$ agree near that boundary. 
Therefore, we can simultaneously extend them to $U_i^c$ in such a way that they are admissible with respect to $\zeta_{2, i}$.
We denote the extended metrics by $g_{\phi,i}$ and $g_i$, respectively.
By locality~\eqref{eq:liou_locality} of the conformal anomaly, the vector $[g_{\phi,i}] \otimes [g_i]^\vee$ is independent of the choice of 
$\restrict{g_{\phi,i}}{U_i^c} = \restrict{g_i}{U_i^c}$.
\end{proof}

\begin{proof}[Proof of Theorem~\ref{thm:natisophi}]
Consider two cylinders
\begin{align*}
A = \paramcyl{A}{\zeta_1}{\zeta_2} 
\qquad \textnormal{and} \qquad 
B = \paramcyl{B}{\xi_1}{\xi_2} .
\end{align*}
For $0 < r < \min\{\rconv(\zeta_1), \rconv(\xi_1)\}$, we take the decompositions as in Lemma~\ref{lemma:decomposition}, 
\begin{align} \label{eq:decomposition_r}
\begin{split}
U &= \zeta_1(\A_r) = \paramcyl{\zeta_1(\A_r)}{\zeta_1(\theta)}{\zeta_1(\theta + \ii r)} , \\
U^c &= \paramcyl{\overline{A \setminus U}}{\zeta_1(\theta + \ii r)}{\zeta_2(\theta)} , \\
A &= U^c \sew{2}{1} U , \\
V &= \xi_1(\A_r) = \paramcyl{\xi_1(\A_r)}{\xi_1(\theta)}{\xi_1(\theta + \ii r)} , \\
V^c &= \paramcyl{\overline{B \setminus V}}{\xi_1(\theta + \ii r)}{\xi_2(\theta)} , \\
B &= V^c \sew{2}{1} V.
\end{split}
\end{align}
Since the complements are sewn to $\partial_2 A$ and $\partial_2 B$, respectively, 
while the action of $\DiffC$ takes place on $\partial_1 A$ and $\partial_1 B$, we similarly have the decompositions 
\begin{align} \label{eq:decomposition_r_phi}
\begin{split}
A \diffActing{1} \phi &= (U \diffActing{1} \phi) \sew{2}{1} U^c = (U \sew{2}{1} U^c) \diffActing{1} \phi , \\
B \diffActing{1} \phi &= (V \diffActing{1} \phi) \sew{2}{1} V^c = (V \sew{2}{1} V^c) \diffActing{1} \phi.
\end{split}
\end{align}
Note also that $U$ and $V$ are isomorphic via $\isom{U}{V} = \xi_1 \circ \zeta_1^{-1}$.
The sought isomorphism $\natisophi{\phi}{A}{B}$ is obtained by using the sewing isomorphisms~\eqref{eq:sewing_iso} together with Lemma~\ref{lemma:decomposition}: 
\begin{align*}
\begin{aligned}
\natisophi{\phi}{A}{B} \; \colon \Detrc(\phi, A) 
= \; & \Detrc(A \diffActing{1} \phi) \otimes (\Detrc(A))^\vee 
&& \textnormal{[by~\eqref{eq:det_line}]}
\\
\xrightarrow{\displaystyle
\sewiso{U \diffActingInline{1} \phi}{U^c}^{-1}
\otimes \sewiso{U}{U^c}^{-1}
} \; & \Detrc(U \diffActing{1} \phi) \otimes \Detrc(U^c)
\otimes \big(
\Detrc(U) \otimes \Detrc(U^c)
\big)^\vee 
&& \textnormal{[by~\eqref{eq:sewing_iso}]}
\\
\xrightarrow{\displaystyle\applyat\evaluation{2}{4}} \;  
& \Detrc(U \diffActing{1} \phi) \otimes (\Detrc(U))^\vee 
\\
\xrightarrow{\displaystyle\isom{U \diffActingInline{1} \phi}{V \diffActingInline{1} \phi} \otimes (\isom{U}{V})^\vee} \; &
\Detrc(V \diffActing{1} \phi) \otimes (\Detrc(V))^\vee \\
\xrightarrow{\displaystyle\applyat\evaluation{2}{4}^{-1}} \; &
\Detrc(V \diffActing{1} \phi) \times \Detrc(V^c)
\otimes \big(
\Detrc(V) \otimes \Detrc(V^c)
\big)^\vee 
\\
\xrightarrow{\displaystyle\sewiso{V \diffActingInline{1} \phi}{V^c} \otimes \sewiso{V}{V^c}} \; &
\Detrc(B \diffActing{1} \phi) \otimes (\Detrc(B))^\vee ,
&& \textnormal{[by~\eqref{eq:sewing_iso}]}
\end{aligned}
\end{align*}
where $\evaluation$ denotes the canonical pairing $v \otimes v^\vee \mapsto 1$, and 
$\applyat\evaluation{2}{4}$ denotes the application of $\evaluation$ to the $2$nd and $4$th tensor components.

Next, to make sure that the isomorphism $\natisophi{\phi}{A}{B}$ does not depend on the choice of $r$, take $0 < s < r$ and decompositions
\begin{align*}
A \diffActing{1} \phi = (\tilde U \diffActing{1} \phi) \sew{2}{1} \tilde U^c 
\qquad \textnormal{and} \qquad 
B \diffActing{1} \phi = (\tilde V \diffActing{1} \phi) \sew{2}{1} \tilde V^c 
\end{align*}
with respect to $\A_s$, obtained as in~(\ref{eq:decomposition_r},~\ref{eq:decomposition_r_phi}) by replacing $r$ by $s$.
Then, exactly the same computation as above with this decomposition yields another isomorphism. 
Now, since all vector spaces here are one-dimensional, 
to prove the equality of these isomorphisms, 
it is sufficient to find an element $[h_1] \otimes [h_2]^\vee \in \Detrc(\phi, A)$ given by $h_1 \in \cconf(A \diffActingInline{1} \phi)$ and $h_2 \in \cconf(A)$ such that both isomorphisms send 
$[h_1] \otimes [h_2]^\vee$ to the same element of $\Detrc(\phi, B)$.

To this end, we begin by observing that $\restrict{\dz}{\A_r} \in \cconf(\A_r)$ and $\restrict{\dz}{\A_s} \in \cconf(\A_s)$. 
Pick a metric $g_s \in \cconf(\A_s \diffActingInline{1} \phi)$.
Then, as $s < r$, we have $\A_s \subset \A_r$ and 
the metric $g_s$ can be extended to $\A_r$ by
\begin{align*}
g \coloneqq \restrict{\dz}{\A_r \setminus \A_s} \, \cup \, g_s \; \in \; \cconf(\A_r \diffActing{1} \phi).
\end{align*}
These metrics may be pushed forward along $\zeta_1$ and $\xi_1$, respectively, to form the metrics
\begin{align*}
\restrict{h_1}{U} &\coloneqq (\zeta_1)_* (\restrict{\dz}{\A_r}) , \qquad \quad \,
\restrict{h_1}{\tilde U} = (\zeta_1)_* (\restrict{\dz}{\A_s}) , \\
\restrict{h_2}{U \diffActingInline{1} \phi} &\coloneqq (\zeta_1)_* (\restrict{g}{\A_r}) , \qquad \qquad
\restrict{h_2}{\tilde U \diffActingInline{1} \phi} = (\zeta_1)_* (\restrict{g}{\A_s}),
\end{align*}
all of which are admissible.
The metrics $h_1$ and $h_2$ may now be extended further from $U$ and $U \diffActingInline{1} \phi$ 
(resp.~$\tilde U$ and $\tilde U \diffActingInline{1} \phi$)
to $A$ and $A \diffActingInline{1} \phi$ by means of Lemma~\ref{lemma:decomposition}. 
Because these extensions agree, we obtain the same vector $[h_1] \otimes [h_2]^\vee \in \Detrc(\phi, A)$ from both $s$ and $r$.

Starting with the same $g \in \cconf(\A_s \diffActingInline{1} \phi)$, the construction above can be carried out replacing $A$ and $\zeta_1$ by $B$ and $\xi_1$, 
to obtain an element $[h_3] \otimes [h_4]^\vee \in \Detrc(\phi, B)$ such that
\begin{align}
\label{eq:nofactor}
\natisophi{\phi}{A}{B}([h_1] \otimes [h_2]^\vee) = [h_3] \otimes [h_4]^\vee
\end{align}
under the isomorphisms both for $s$ and $r$, by construction.
Thus, we conclude that $\natisophi{\phi}{A}{B}$ 
is independent of the choice of $r$. 
It now also follows immediately that
$\natisophi{\phi}{A}{A} = \id_{\Detrc(\phi, A)}$, 
because in that case $U = V$ and $\isom{U}{V} = \id$,
so that each operation in the first half of the construction of $\natisophi{\phi}{A}{A}$ is reversed in the second half.

To prove the property 
$\natisophi{\phi}{B}{C} \circ \natisophi{\phi}{A}{B}
= \natisophi{\phi}{A}{C}$, 
observe that the isomorphism $\isom{U \diffActingInline{1} \phi}{V \diffActingInline{1} \phi} \otimes (\isom{U}{V})^\vee$ factors through $\Detrc(\phi, \A_r)$ since
\begin{align*}
\isom{U \diffActingInline{1} \phi}{V \diffActingInline{1} \phi} \otimes (\isom{U}{V})^\vee
=
\big(\isom{U \diffActingInline{1} \phi}{\A_r \diffActingInline{1} \phi} \otimes (\isom{U}{\A_r})^\vee\big)
\circ
\big(\isom{\A_r \diffActingInline{1} \phi}{V \diffActingInline{1} \phi} \otimes (\isom{\A_r}{V})^\vee\big).
\end{align*}
By substituting this factorization into the middle step of $\natisophi{\phi}{A}{B}$, we find that $\natisophi{\phi}{A}{B} = \natisophi{\phi}{\A_r}{B} \circ \natisophi{\phi}{A}{\A_r}$. 
Thus, for three cylinders $A, B, C \in \cylinders$, 
we obtain the following commutative diagram for $r > 0$ small enough:
\begin{displaymath}
\begin{tikzcd}[row sep = 1cm]
{\Detrc(\phi, A)} &&&& {\Detrc(\phi, B)} \\
&& {\Detrc(\phi, \A_r)} \\
\\
&& {\Detrc(\phi, C)}
\arrow["{\natisophi{\phi}{A}{B}}", from=1-1, to=1-5]
\arrow["{\natisophi{\phi}{A}{C}}"', from=1-1, to=4-3]
\arrow["{\natisophi{\phi}{B}{C}}", from=1-5, to=4-3]
\arrow["{\natisophi{\phi}{A}{\A_r}}"{description}, from=1-1, to=2-3]
\arrow["{\natisophi{\phi}{B}{\A_r}}"{description}, from=1-5, to=2-3]
\arrow["{\natisophi{\phi}{C}{\A_r}}"{description}, from=4-3, to=2-3]
\end{tikzcd}
\end{displaymath}
This implies that $\natisophi{\phi}{B}{C} \circ \natisophi{\phi}{A}{B}
= \natisophi{\phi}{A}{C}$, 
and finishes the proof.
\end{proof}

\begin{remark}
\label{remark:nofactor}
From Equation~\eqref{eq:nofactor}, we see that if Equations~\eqref{eq:extension_conditions} are satisfied, then
\begin{align*}
\natisophi{\phi}{A_i}{A_j}([g_{\phi, i}] \otimes [g_i]^\vee)
= [g_{\phi, j}] \otimes [g_j]^\vee , \qquad i, j \in I .
\end{align*}
\end{remark}

There is a convenient basis element of $\Detrc(\phi)$, which is built from the global trivialization $\globalsection$, defined
in~\eqref{eq:global_section}, and the standard cylinder $\A$ in~\eqref{eq:standard_cylinder}.
For the latter, the flat metric $\restrict{\dz}{\A} \in \cconf(\A)$ is admissible, and 
by the uniform boundedness of the imaginary part of complex deformations in~\eqref{eq:complex_deformations}, $\A \diffActing{1} \phi$ is defined for all $\phi \in \DiffC$.
Define
\begin{align}
\unif{\phi} \colon \A \diffActing{1} \phi \to S^1 \times [0, \tau_{\A \diffActing{1} \phi}],
\qquad 
\unifsq{\phi}(z) = |\unif{\phi}'(z)|^2,
\label{eq:unifsq}
\end{align}
where $\unif{\phi}$ is the isomorphism to the uniformized representative of Proposition~\ref{prop:neretin_form}.
Then, the vector
\begin{align}
\label{eq:diffglobalsection}
\globalsectionphi(\phi) \coloneqq \globalsection(\A \diffActing{1} \phi) \otimes (\globalsection(\A))^\vee 
= e^{- \charge \liou{\unifsq{\phi} \dz}{g}}[g] \otimes [\restrict{\dz}{\A}]^\vee \; \in \; \Detrc(\phi) 
\end{align}
is defined for all $\phi \in \DiffC$ and independent of $g \in \cconf(\A \diffActingInline{1} \phi)$. 
The vectors $\globalsectionphi(\phi) \in \Detrc(\phi)$ provide a global trivialization for the bundle of real determinant lines over $\DiffC$,
\begin{align}
\label{eq:Diffext}
\Detrpc(\DiffC) = \bigsqcup_{\phi \in \DiffC} 
\setsuchthat{\lambda \globalsectionphi(\phi) \in \Detrc(\phi)}{\lambda > 0}.
\end{align}
The bundle fits into the following sequence of maps (see also Corollary~\ref{cor:extension_sequence}):
\begin{equation}
\label{eq:extension_sequence}
\begin{tikzcd}[row sep = 0]
0 & {\Rp} & \Detrpc(\DiffC) & {\DiffC} & 0 \\
& \lambda & {\lambda \: \globalsectionphi(\id_{S^1})} & {\id_{S^1}} \\
&& {v \in \Detrc(\phi)} & \phi
\arrow[maps to, from=3-3, to=3-4]
\arrow[maps to, from=2-3, to=2-4]
\arrow[maps to, from=2-2, to=2-3]
\arrow[from=1-1, to=1-2]
\arrow[from=1-2, to=1-3]
\arrow[from=1-3, to=1-4]
\arrow[from=1-4, to=1-5]
\end{tikzcd}
\end{equation}
where $\Rp := (0,\infty)$ is the multiplicative group of positive real numbers.
The non-positive vectors are excluded in~\eqref{eq:Diffext} because
with the sewing operation on $\Detrpc(\DiffC)$ in the next Theorem~\ref{thm:detmult}, this sequence becomes a central extension by $\Rp$.
For $v \in \Detrc(\phi)$ and $A \in \cylinders$ such that $\A \diffActing{1} \phi$ is defined, denote by
\begin{align} \label{eq:detproj}
\detproj{A}(v) \in \Detrc(\phi, A)
\end{align}
the vector in $\Detrc(\phi, A)$ which is obtained from the quotient~\eqref{eq:det_equivalence_cylinders} by choosing the specific cylinder $A \in \cylinders$ 
to represent the real determinant line. Note that for $A, B \in \cylinders$, we have $\detproj{B} \circ \detprojinv{A} = \natisophi{\phi}{A}{B}$.

Next, fix composable $\phi_1, \phi_2 \in \DiffC$ and $A \in \cylinders$ such that the cylinders 
$A \diffActing{1} \phi_1$, $A \diffActing{1} \phi_2$, and $(A \diffActing{1} \phi_1) \diffActing{1} \phi_2 = A \diffActing{1} \phi_1 \phi_2$ are well-defined. 
We then find the multiplication isomorphism $\detmult{\phi_1}{\phi_2}$
via the following composition of isomorphisms of one-dimensional vector spaces:
\begin{align*}
\detmult{\phi_1}{\phi_2} \colon & \; \Detrc(\phi_1) \otimes \Detrc(\phi_2) \\
\xrightarrow{\displaystyle\detproj{A} \otimes \detproj{A \diffActingInline{1} \phi_1}}
& \; \underbrace{\Detrc(\phi_1, A)}_{\Detrc(A \diffActing{1} \phi_1) \, \otimes \, (\Detrc(A))^\vee} \otimes \underbrace{\Detrc(\phi_2, A \diffActing{1} \phi_1)}_{\Detrc(A \diffActing{1} \phi_1 \phi_2)
\, \otimes \, (\Detrc(A \diffActing{1} \phi_1))^\vee } 
&& \textnormal{[by~(\ref{eq:detproj},~\ref{eq:det_line})]}
\\
\xrightarrow{\displaystyle\applyat\evaluation{1}{4}}
& \; (\Detrc(A))^\vee
\otimes \Detrc(A \diffActing{1} \phi_1 \phi_2) 
\\
\xrightarrow{\displaystyle\exchange}
& \; \Detrc(A \diffActing{1} \phi_1 \phi_2) 
\otimes (\Detrc(A))^\vee
\; = \;  \Detrc(\phi_1 \phi_2, A) 
\\
\xrightarrow{\displaystyle\detprojinv{A}} & \;\Detrc(\phi_1 \phi_2) ,
&& \textnormal{[by~\eqref{eq:detproj}]}
\end{align*}
where
$\applyat\evaluation{1}{4}$ denotes the application of $\evaluation$ to the $1$st and $4$th tensor components, and $\exchange \colon v \otimes w \mapsto w \otimes v$ denotes the isomorphism that exchanges the tensor components.

\begin{theorem}
\label{thm:detmult}
The multiplication isomorphism $\detmult{\phi_1}{\phi_2}$ associated to composable complex deformations $\phi_1, \phi_2 \in \DiffC$ is independent of the choice of $A \in \cylinders$.
Moreover, the multiplication is associative in the sense that for $\phi_1, \phi_2 , \phi_3 \in \DiffC$ composable,
\begin{align}
\label{eq:det_associativity}
\detmult{\phi_1 \phi_2}{\phi_3} \circ (\detmult{\phi_1}{\phi_2} \otimes \id) = \detmult{\phi_1}{\phi_2 \phi_3} \circ (\id \otimes \: \detmult{\phi_2}{\phi_3}).
\end{align}
\end{theorem}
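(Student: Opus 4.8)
The plan is to prove the two assertions in turn, deducing associativity from independence of the auxiliary cylinder. Both arguments follow the pattern of the proof of Theorem~\ref{thm:natisophi}: since every vector space occurring is one-dimensional, it suffices to follow a single conveniently chosen nonzero vector through the constructions, and Lemma~\ref{lemma:decomposition} together with Remark~\ref{remark:nofactor} will ensure that each canonical isomorphism $\natisophi{}{}{}$ that appears acts without introducing a scalar factor.

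\textbf{Independence of $A$.} Fix composable $\phi_1, \phi_2 \in \DiffC$ and cylinders $A, B \in \cylinders$ admitting the relevant deformations, and write $\detmult{\phi_1}{\phi_2}^{(A)}$, $\detmult{\phi_1}{\phi_2}^{(B)} \colon \Detrc(\phi_1) \otimes \Detrc(\phi_2) \to \Detrc(\phi_1\phi_2)$ for the two composite maps defining the multiplication. The only difference between them lies in the outer maps $\detproj{A} \otimes \detproj{A \diffActing{1} \phi_1}$ and $\detprojinv{A}$ versus $\detproj{B} \otimes \detproj{B \diffActing{1} \phi_1}$ and $\detprojinv{B}$; the intermediate step $c \coloneqq \exchange \circ \applyat\evaluation{1}{4}$ is a purely algebraic tensor contraction given by the same universal formula in both cases. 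Using $\detproj{B} = \natisophi{\phi_1}{A}{B} \circ \detproj{A}$, $\detproj{B \diffActing{1} \phi_1} = \natisophi{\phi_2}{A \diffActing{1} \phi_1}{B \diffActing{1} \phi_1} \circ \detproj{A \diffActing{1} \phi_1}$ and $\detprojinv{B} = \detprojinv{A} \circ \natisophi{\phi_1\phi_2}{B}{A}$, all consequences of Theorem~\ref{thm:natisophi} and~\eqref{eq:natisophi_properties}, the identity $\detmult{\phi_1}{\phi_2}^{(A)} = \detmult{\phi_1}{\phi_2}^{(B)}$ becomes equivalent to commutativity of the square

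\begin{displaymath}
\begin{tikzcd}[column sep = 3cm, row sep = 1.3cm]
{\Detrc(\phi_1, A) \otimes \Detrc(\phi_2, A \diffActing{1} \phi_1)} \arrow[r, "c"] \arrow[d, "\natisophi{\phi_1}{A}{B}\, \otimes\, \natisophi{\phi_2}{A \diffActing{1} \phi_1}{B \diffActing{1} \phi_1}"'] & {\Detrc(\phi_1\phi_2, A)} \\
{\Detrc(\phi_1, B) \otimes \Detrc(\phi_2, B \diffActing{1} \phi_1)} \arrow[r, "c"'] & {\Detrc(\phi_1\phi_2, B).} \arrow[u, "\natisophi{\phi_1\phi_2}{B}{A}"']
\end{tikzcd}
\end{displaymath}

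To verify this, choose $r > 0$ below all radii of convergence~\eqref{eq:def_rconv} that enter the decompositions of $A$, $B$, $A \diffActing{1} \phi_1$ and $B \diffActing{1} \phi_1$ (and large enough to accommodate the deformations of $\A_r$ that occur), fix admissible metrics on the standard small cylinders $\A_r$, $\A_r \diffActing{1} \phi_1$ and $\A_r \diffActing{1} \phi_1\phi_2$, and use item~\ref{item:decomposition2} of Lemma~\ref{lemma:decomposition} to extend them simultaneously to compatible metrics on $A$, $B$ and all of their deformations. Evaluating the square on the resulting vector, Remark~\ref{remark:nofactor} shows that every $\natisophi{}{}{}$ carries the chosen compatible representative over one cylinder to the chosen compatible representative over the other with no factor, while the dual factor paired off by $\applyat\evaluation{1}{4}$ on the $A$-side is mapped to exactly the one paired off on the $B$-side --- both being duals of the common compatible metric on the $\A_r \diffActing{1} \phi_1$-piece. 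Hence the two composites return the same compatible vector of $\Detrc(\phi_1\phi_2)$, the square commutes, and $\detmult{\phi_1}{\phi_2}$ is independent of $A$.

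\textbf{Associativity.} Granting this, fix composable $\phi_1, \phi_2, \phi_3 \in \DiffC$ and pick one cylinder $A \in \cylinders$ --- for instance a sufficiently tall standard cylinder --- such that $A \diffActing{1} \psi$ exists for every $\psi \in \{\phi_1, \phi_2, \phi_3, \phi_1\phi_2, \phi_2\phi_3, \phi_1\phi_2\phi_3\}$, and likewise after first deforming by $\phi_1$. By the first part, all four multiplications in~\eqref{eq:det_associativity} may be computed with respect to $A$, except $\detmult{\phi_2}{\phi_3}$ which we compute with respect to $A \diffActing{1} \phi_1$. Represent $\Detrc(\phi_1)$ over $A$, $\Detrc(\phi_2)$ over $A \diffActing{1} \phi_1$, and $\Detrc(\phi_3)$ over $A \diffActing{1} \phi_1\phi_2 = (A \diffActing{1} \phi_1) \diffActing{1} \phi_2$, by classes $[p] \otimes [q]^\vee$, $[s] \otimes [t]^\vee$ and $[u] \otimes [v]^\vee$ respectively. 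On the left-hand side of~\eqref{eq:det_associativity}, $\detmult{\phi_1}{\phi_2}$ contracts $[p]$ against $[t]^\vee$, producing the scalar $e^{\charge \liou{t}{p}}$, and then --- using $\detproj{A} \circ \detprojinv{A} = \id$ --- $\detmult{\phi_1\phi_2}{\phi_3}$ contracts $[s]$ against $[v]^\vee$, producing $e^{\charge \liou{v}{s}}$, so that the output is $e^{\charge(\liou{t}{p} + \liou{v}{s})} \, [u] \otimes [q]^\vee \in \Detrc(\phi_1\phi_2\phi_3)$. On the right-hand side, $\detmult{\phi_2}{\phi_3}$ contracts $[s]$ against $[v]^\vee$ and then --- using $\detproj{A \diffActing{1} \phi_1} \circ \detprojinv{A \diffActing{1} \phi_1} = \id$ --- $\detmult{\phi_1}{\phi_2\phi_3}$ contracts $[p]$ against $[t]^\vee$, giving $e^{\charge(\liou{v}{s} + \liou{t}{p})} \, [u] \otimes [q]^\vee$. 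The two exponents coincide because addition in $\R$ is commutative, which establishes~\eqref{eq:det_associativity}.

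\textbf{Where the difficulty lies.} The substantive step is the first one: one must arrange the decompositions and the compatible metrics of Lemma~\ref{lemma:decomposition} uniformly over $A$, $B$ and all of their $\phi_1$-, $\phi_2$- and $\phi_1\phi_2$-deformations at once, so that every $\natisophi{}{}{}$ in the reduced square is factor-free by Remark~\ref{remark:nofactor}, and one must then keep careful track of which tensor slot is contracted against which by $\applyat\evaluation{1}{4}$. Once this bookkeeping is set up, the equality of the $A$- and $B$-contractions is immediate, and the associativity computation above is a short afterthought.
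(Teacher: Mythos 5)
Your proof is correct, and it splits naturally into a part that matches the paper and a part that genuinely diverges from it. The independence-of-$A$ argument is essentially the paper's own: you reduce the claim to the commutativity of the square intertwining $\exchange \circ \applyat\evaluation{1}{4}$ with the canonical isomorphisms $\natisophi{\phi_1}{A}{B} \otimes \natisophi{\phi_2}{A \diffActingInline{1} \phi_1}{B \diffActingInline{1} \phi_1}$ and $\natisophi{\phi_1\phi_2}{A}{B}$ (the paper's Equation~\eqref{eq:associativity_sufficient}, up to inverting one arrow), and then verify it on a single compatible vector supplied by Lemma~\ref{lemma:decomposition} and Remark~\ref{remark:nofactor}. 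One detail you wave at but do not pin down --- and which the paper does spell out --- is that the two applications of Lemma~\ref{lemma:decomposition} (one for $\phi_1$ on $A,B$, one for $\phi_2$ on $A \diffActingInline{1}\phi_1, B\diffActingInline{1}\phi_1$) must be nested, with the second collar chosen small enough that $U_A^c \subset U_{A \diffActingInline{1}\phi_1}^c$; only then does the pair $(g_{\phi_1\phi_2,A}, g_A)$ satisfy the extension conditions~\eqref{eq:extension_conditions} relative to the decomposition of $A$, which is what Remark~\ref{remark:nofactor} requires for the right-hand vertical arrow. This is a presentational gap rather than a mathematical one. For associativity your route is genuinely different and, to my mind, cleaner: the paper fixes a single $A$ throughout and chases a diagram with four faces (one of which is again Equation~\eqref{eq:associativity_sufficient}), whereas you invoke the just-proven independence to compute $\detmult{\phi_2}{\phi_3}$ over $A \diffActingInline{1} \phi_1$ and the other three multiplications over $A$, after which both sides of~\eqref{eq:det_associativity} visibly perform the same two contractions $[p]$ against $[t]^\vee$ and $[s]$ against $[v]^\vee$ in disjoint tensor slots, producing the scalar $e^{\charge(\liou{t}{p}+\liou{v}{s})}$ either way. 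What the paper's version buys is that associativity is checked without treating independence as a black box at the level of abstract $\Detrc(\phi)$'s; what yours buys is a two-line computation that makes transparent why associativity holds at all.
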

\begin{proof}
For independence of $A$, we will show that for any other choice of cylinder $B \in \cylinders$, 
\begin{align}
\label{eq:associativity_sufficient}
\natisophi{\phi_1 \phi_2}{A}{B} \circ \exchange \circ
\applyat\evaluation{1}{4}
= 
\exchange\circ
\applyat\evaluation{1}{4} \circ
\big(\natisophi{\phi_1}{A}{B}
\otimes \natisophi{\phi_2}{A \diffActingInline{1} \phi_1}{B \diffActingInline{1} \phi_1} \big).
\end{align}
Take metrics 
$[g_{\phi_1,A}] \otimes [g_A]^\vee \in \Detrc(\phi_1, A)$
and 
$[g_{\phi_1,B}] \otimes [g_B]^\vee \in \Detrc(\phi_1, B)$ 
as given by Lemma~\ref{lemma:decomposition} for the diffeomorphism $\phi_1$ and surfaces $A$ and $B$.
By Remark~\ref{remark:nofactor}, we have
\begin{align*}
\natisophi{\phi_1}{A}{B}([g_{\phi_1,A}] \otimes [g_A]^\vee) = [g_{\phi_1,B}] \otimes [g_B]^\vee.
\end{align*}
Similarly, apply Lemma~\ref{lemma:decomposition} for $\phi_2$ and surfaces $A \diffActingInline{1} \phi_1$ and $B \diffActingInline{1} \phi_1$ to obtain metrics $g_{\phi_1\phi_2, A} \in \cconf(A \diffActingInline{1} \phi_1 \phi_2)$ and $g_{\phi_1\phi_2, B} \in \cconf(B \diffActingInline{1} \phi_1 \phi_2)$ such that 
\begin{align*}
\natisophi{\phi_2}{A \diffActingInline{1} \phi_1}{B \diffActingInline{1} \phi_1}
(
[g_{\phi_1\phi_2, A}] \otimes [g_{\phi_1, A}]^\vee
)
=
[g_{\phi_1\phi_2, B}] \otimes [g_{\phi_1, B}]^\vee 
\end{align*}
by Remark~\ref{remark:nofactor}. 
Hence, the right-hand side of~\eqref{eq:associativity_sufficient} reads
\begin{align*}
&\exchange \big(
\applyat\evaluation{1}{4}
\big(
(\natisophi{\phi_1}{A}{B}
\otimes \natisophi{\phi_2}{A \diffActingInline{1} \phi_1}{B \diffActingInline{1} \phi_1})
( [g_{\phi_1,A}] \otimes [g_A]^\vee \otimes [g_{\phi_1\phi_2,A}] \otimes [g_{\phi_1,A}]^\vee)
\big)
\big)
\\ 
= \; &
\exchange \big(
\applyat\evaluation{1}{4}
\big(
[g_{\phi_1,B}] \otimes [g_B]^\vee \otimes [g_{\phi_1 \phi_2,B}] \otimes [g_{\phi_1,B}]^\vee
\big) 
\big)
\\ 
= \; &
\exchange([g_B]^\vee \otimes [g_{\phi_1 \phi_2,B}]) 
\\ 
= \; &
[g_{\phi_1 \phi_2,B}] \otimes [g_B]^\vee.
\end{align*}
Now, note that Lemma~\ref{lemma:decomposition} was applied two times, so we have two decompositions
\begin{align*}
A = U_A \cup U_A^c, \qquad A \diffActing{1} \phi_1 = U_{A \diffActingInline{1} \phi_1} \cup U^c_{A \diffActingInline{1} \phi_1}.
\end{align*}
If we pick the radius of the second decomposition small enough, we have $U_A^c \subset U^c_{A \diffActingInline{1} \phi_1}$.
Now, the metrics $g_{\phi_1\phi_2, A}$ and $g_A$ satisfy the extension conditions~\eqref{eq:extension_conditions} of Lemma~\ref{lemma:decomposition} for the decomposition $A = U_A \cup U_A^c$. 
Since this also holds for the respective metrics on $B$, 
by Remark~\ref{remark:nofactor} the left-hand side of~\eqref{eq:associativity_sufficient} agrees with the right-hand side:
\begin{align*}
&\natisophi{\phi_1 \phi_2}{A}{B} 
\big( 
\exchange 
\big(
\applyat\evaluation{1}{4}(
[g_{\phi_1,A}] \otimes [g_A]^\vee \otimes [g_{\phi_1 \phi_2,A}] \otimes [g_{\phi_1,A}]^\vee
)
\big)
\big)
\\ 
= \; &
\natisophi{\phi_1 \phi_2}{A}{B} 
\big( \exchange (
[g_A]^\vee \otimes [g_{\phi_1 \phi_2,A}]
)
\big)
\\ 
= \; &
\natisophi{\phi_1 \phi_2}{A}{B} (
[g_{\phi_1 \phi_2,A}] \otimes [g_A]^\vee
))
\\ 
= \; &
[g_{\phi_1 \phi_2,B}] \otimes [g_B]^\vee
,
\end{align*}
which proves the asserted identity in Equation~\eqref{eq:associativity_sufficient}.

To verify associativity, we apply independence of $A \in \cylinders$. In~\eqref{eq:det_associativity} we use, on the one hand, any fixed $A$ for $\detmult{\phi_2}{\phi_3}$ and $\detmult{\phi_1}{\phi_2 \phi_3}$. On the other hand, for $\detmult {\phi_2}{\phi_3}$ we use $A \diffActingInline{1} \phi_1$ instead and for $\detmult{\phi_1 \phi_2}{\phi_3}$ we use $A$ again.
Then, associativity follows from the commutativity of the following diagram:
\begin{displaymath}
\hspace*{-1cm}
\begin{tikzcd}[
ampersand replacement=\&,
row sep = 1cm,
column sep = 2cm,
]
{
\begin{aligned}
\Detrc(\phi_1, A) \, \otimes & \Detrc(\phi_2, A \diffActing{1} \phi_1) 
\otimes \Detrc(\phi_3, A \diffActing{1} \phi_1\phi_2)
\end{aligned}
} \& {
\begin{aligned}
&\Detrc(\phi_1\phi_2, A) \otimes \Detrc(\phi_3, A \diffActing{1} \phi_1\phi_2)
\end{aligned}
} \\
{
\begin{aligned}
&\Detrc(\phi_1, A) \otimes \Detrc(\phi_2\phi_3, A \diffActing{1} \phi_1)
\end{aligned}
} \& {\Detrc(\phi_1 \phi_2 \phi_3, A)}
\arrow[""{name=2232, anchor=center, inner sep=0}, "{\applyat\exchange{3}{4} \circ \applyat\evaluation{3}{6}}", from=1-1, to=2-1]
\arrow[""{name=2333, anchor=center, inner sep=0}, "{\applyat\exchange{1}{2} \circ \applyat\evaluation{1}{4}}", from=1-2, to=2-2]
\arrow[""{name=2223, anchor=center, inner sep=0}, "{\applyat\exchange{1}{2} \circ \applyat\evaluation{1}{4}}", from=1-1, to=1-2]
\arrow[""{name=3233, anchor=center, inner sep=0}, "{\applyat\exchange{1}{2} \circ \applyat\evaluation{1}{4}}", from=2-1, to=2-2]
\end{tikzcd}
\end{displaymath}
where $\applyat\exchange{i}{j}$ and $\applyat\evaluation{i}{j}$ denote the maps $\exchange$ 
and $\evaluation$ applied to the $i$th and $j$th tensor components. 
This diagram readily commutes, because a generic element $a \otimes \hat b \otimes c \otimes \hat d \otimes e \otimes \hat f$ is sent to $\hat d (a) \hat f(c) \;  e \otimes \hat b$ by both compositions.
\end{proof}

Having introduced the multiplication isomorphisms, we can now define the cocycle appearing in our main result (Theorem~\ref{thm:main}):

\begin{definition} \label{def:cocycle_deformation}
The (``group'') cocycle of the multiplication in Theorem~\ref{thm:detmult}
with respect to the global trivialization $\globalsectionphi$, defined in~\eqref{eq:diffglobalsection},
is the factor $\calpha(\phi, \psi)$ in
\begin{align*} 
\detmult{\phi}{\psi}(\globalsectionphi(\phi), \globalsectionphi(\psi))
= \calpha(\phi, \psi) \: \globalsectionphi(\phi \psi),
\qquad \text{$\phi, \psi \in \DiffC$ composable}.
\end{align*}
\end{definition}
Since the cocycle on cylinders (see Remark~\ref{remark:cocycle_cylinders}) is strictly positive, it is reasonable to expect that $\calpha(\phi, \psi) > 0$ 
for any composable $\phi, \psi \in \DiffC$. This will become evident from the explicit expression for $\calpha(\phi, \psi)$ constructed in Equation~\eqref{eq:alpha_cocycle} in the next section.

\begin{corollary}\label{cor:extension_sequence}
The sequence~\eqref{eq:extension_sequence} is a central extension of $\DiffC$ by the multiplicative group $\Rp$ in the sense that it is exact, respects the respective multiplications, 
and the image of $\Rp$ commutes with all of $\Detrpc(\DiffC)$.
\end{corollary}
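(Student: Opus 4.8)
The plan is to read off the central‑extension structure from Theorem~\ref{thm:detmult} and the section $\globalsectionphi$ of~\eqref{eq:diffglobalsection}; the only ingredient that is not pure bookkeeping is the positivity of the cocycle $\calpha$. First I would fix the multiplication on $\Detrpc(\DiffC)$: a generic element over $\phi$ is $\lambda\,\globalsectionphi(\phi)$ with $\lambda>0$, and for composable $\phi,\psi\in\DiffC$ I set
\begin{align*}
(\lambda\,\globalsectionphi(\phi))\cdot(\mu\,\globalsectionphi(\psi))
\;\coloneqq\; \lambda\mu\;\detmult{\phi}{\psi}\big(\globalsectionphi(\phi)\otimes\globalsectionphi(\psi)\big)
\;=\; \lambda\mu\;\calpha(\phi,\psi)\;\globalsectionphi(\phi\psi),
\end{align*}
using~\eqref{eq:cocycle_deformation}; by Theorem~\ref{thm:detmult} this is well defined (independent of the auxiliary cylinder) and associative in the sense of~\eqref{eq:det_associativity}. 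The projection $\Detrpc(\DiffC)\to\DiffC$, $v\in\Detrc(\phi)\mapsto\phi$, is then automatically surjective and multiplicative, since the first component of $\detmult{\phi}{\psi}$ records the composition $\phi\psi$.

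The crux is to show $\calpha(\phi,\psi)>0$: this is exactly what guarantees that the product remains inside $\Detrpc(\DiffC)$, whose fibres are restricted to the $\Rp$‑cone. I would prove it by tracking admissible‑metric representatives through the chain of maps defining $\detmult{\phi}{\psi}$. Each constituent — the sewing isomorphisms~\eqref{eq:sewing_iso}, the projections $\detproj{\cdot}$ and their inverses (built out of sewings, pullbacks, $\evaluation$ and $\exchange$ in Theorem~\ref{thm:natisophi}), the canonical pairing $\evaluation$, and the flip $\exchange$ — sends a representative $\lambda[g]$ with $\lambda>0$ to another one with strictly positive coefficient, and $\globalsectionphi(\phi)=e^{-\charge\liou{\unifsq{\phi}\dz}{g}}[g]\otimes[\restrict{\dz}{\A}]^\vee$ itself has positive coefficient for admissible $g$. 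Hence the output is a strictly positive multiple of $\globalsectionphi(\phi\psi)$, i.e.\ $\calpha(\phi,\psi)>0$; alternatively this follows at once from the explicit formula for $\calpha$ obtained in Section~\ref{section:computation}, see~\eqref{eq:alpha_cocycle}.

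The remaining verifications are routine assembly. Associativity of the law on $\Detrpc(\DiffC)$ is~\eqref{eq:det_associativity} transported through $\globalsectionphi$, equivalently the $2$‑cocycle identity $\calpha(\phi_1\phi_2,\phi_3)\,\calpha(\phi_1,\phi_2)=\calpha(\phi_1,\phi_2\phi_3)\,\calpha(\phi_2,\phi_3)$. For the unit I would use that $\A\diffActing{1}\id_{S^1}=\A$, so that $\globalsectionphi(\id_{S^1})=\globalsection(\A)\otimes(\globalsection(\A))^\vee$ lies in the canonically trivial line $\Detrc(\A)\otimes(\Detrc(\A))^\vee$ and equals $[g]\otimes[g]^\vee$ for any admissible $g$ on $\A$; unwinding $\detmult{\id_{S^1}}{\phi}$ and $\detmult{\phi}{\id_{S^1}}$ (in each case one tensor factor is the trivial line, so $\detproj{\cdot}$, $\evaluation$ and $\exchange$ collapse to the identity) then gives $\calpha(\id_{S^1},\phi)=\calpha(\phi,\id_{S^1})=1$. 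Thus $\globalsectionphi(\id_{S^1})$ is a two‑sided unit, $\lambda\mapsto\lambda\,\globalsectionphi(\id_{S^1})$ is an injective homomorphism $\Rp\hookrightarrow\Detrpc(\DiffC)$, and its image is precisely the fibre over $\id_{S^1}$, which is the kernel of the projection — this is exactness. Centrality follows from the same unit computation, $(\lambda\,\globalsectionphi(\id_{S^1}))\cdot(\mu\,\globalsectionphi(\phi))=\lambda\mu\,\globalsectionphi(\phi)=(\mu\,\globalsectionphi(\phi))\cdot(\lambda\,\globalsectionphi(\id_{S^1}))$, and inverses exist because $\DiffC$ is closed under inversion and $\calpha(\phi,\phi^{-1})>0$. (For $\DiffC$ all of this is understood in the partial‑monoid sense of composability; it holds verbatim for the genuine Fr\'echet Lie group $\Diffpan\subset\DiffC$, where one additionally checks smoothness of the structure maps.) The only step requiring genuine work beyond invoking Theorem~\ref{thm:detmult} is the positivity $\calpha>0$; everything else is bookkeeping.
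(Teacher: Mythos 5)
Your proposal is correct and follows essentially the same route as the paper: the unit computation with $\globalsectionphi(\id_{S^1})=[\restrict{\dz}{\A}]\otimes[\restrict{\dz}{\A}]^\vee$ gives multiplicativity and exactness, and centrality comes from unwinding $\detmult{\id_{S^1}}{\phi}$ and $\detmult{\phi}{\id_{S^1}}$ on explicit representatives. The positivity $\calpha>0$, which you rightly flag as the one non-bookkeeping point, is handled in the paper exactly by your second route, namely the explicit formula~\eqref{eq:alpha_cocycle} of Section~\ref{section:computation}; your alternative argument tracking the positive cone of admissible-metric representatives through the constituent maps is a valid (and slightly more self-contained) substitute.
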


\begin{proof}
Note that $\globalsectionphi(\id_{S^1}) = [\restrict{\dz}{\A}] \otimes [\restrict{\dz}{\A}]^\vee$, 
which implies that the map $\detmult{\id_{S^1}}{\id_{S^1}}$ sends $\globalsectionphi(\id_{S^1}) \otimes \globalsectionphi(\id_{S^1})$ just to $\globalsectionphi(\id_{S^1})$, so that we have
\begin{align*}
\detmult{\id_{S^1}}{\id_{S^1}} \big( \lambda_1 \globalsectionphi(\id_{S^1}) \otimes \lambda_2 \globalsectionphi(\id_{S^1}) \big) 
= \lambda_1 \lambda_2 \, \globalsectionphi(\id_{S^1}).
\end{align*}
Thus, the sequence~\eqref{eq:extension_sequence} indeed preserves the multiplication. 
(In fact, it is a sequence of group homomorphisms when restricted to real determinant lines over $\Diffpan$.) 
Exactness follows since any element in the fiber of $\id_{S^1}$ has the form $\lambda \globalsectionphi(\id_{S^1})$ for some $\lambda \in \Rp$.

Lastly, take any $[g_\phi] \otimes [\dz]^\vee \in \Detrc(\phi)$, with $\phi \in \DiffC$ and with the choice $A = \A_r$ in Definition~\ref{def:det_line}, such that 
$r < \rconv(\phi)$, yet still $\A_r \diffActing{1} \phi$ exists.
We have
\begin{align*}
\natisophi{\id_{S^1}}{\A_r}{\A_r \diffActingInline{1} \phi}([\dz] \otimes [\dz]^\vee) = [g] \otimes [g]^\vee , 
\qquad g \in \cconf(\A_r \diffActing{1} \phi) ,
\end{align*}
so taking $g = g_\phi$ we obtain 
\begin{align*}
\detmult{\id_{S^1}}{\phi} \big( \globalsectionphi(\id_{S^1}) \otimes [g_\phi] \otimes [\dz]^\vee \big) 
= \; & \exchange \big( 
\applyat\evaluation{1}{4} ( [\dz] \otimes [\dz]^\vee \otimes [g_\phi] \otimes [\dz]^\vee)
\big) 
\\
= \; & [g_\phi] \otimes [\dz]^\vee 
\\
= \; & \exchange \big(
\applyat\evaluation{1}{4}([g_\phi] \otimes [\dz]^\vee \otimes [g_\phi] \otimes [g_\phi]^\vee)
\big) 
\\
= \; & \detmult{\phi}{\id_{S^1}}([g_\phi] \otimes [\dz]^\vee \otimes \globalsectionphi(\id_{S^1})) .
\end{align*}
Thus, the extension~\eqref{eq:extension_sequence} is central, as stated.
\end{proof}

\newpage

\section{Computation of the Lie algebra cocycle}
\label{section:computation}

With Definition~\ref{def:cocycle_deformation} of the cocycle $\calpha$ in place, we are  ready to turn towards proving the main Theorem~\ref{thm:main}.
Since for the global trivialization~\eqref{eq:diffglobalsection}, we have fixed the cylinder $\A = \A_1$ in Equation~\eqref{eq:standard_cylinder}, 
the only step in the multiplication isomorphism 
$\detmult{\phi}{\psi}$ appearing in Theorem~\ref{thm:detmult} which may introduce a factor with respect to the global trivialization is 
\begin{align}
\label{eq:wanted_isomorphism}
\detproj{\A \diffActingInline{1} \phi}(\globalsectionphi(\psi)) = 
\natisophi{\psi}{\A}{\A \diffActingInline{1} \phi}(\globalsectionphi(\psi))
= \lambda \: \globalsection(\A \diffActingInline{1} \phi\psi) \otimes \big(\globalsection(\A \diffActingInline{1} \phi)\big)^\vee,
\end{align}
for some $\lambda \in \R$.
By following the definition of $\detmult{\phi}{\psi}$, we find that $\lambda = \calpha(\phi, \psi)$:
\begin{align*}
\detmult{\phi}{\psi}(\globalsectionphi(\phi) \otimes \globalsectionphi(\psi))
= \; & \detprojinv{\A}\Big(
\exchange\Big(\applyat{\evaluation}{1}{4}\Big(
\big(\detproj{\A} \otimes \detproj{\A \diffActing{1} \phi}\big)\big(\globalsectionphi(\phi) \otimes \globalsectionphi(\psi)\big)
\Big)\Big)
\Big)
\\ 
= \; & \detprojinv{\A}\Big(
\exchange\Big(\applyat{\evaluation}{1}{4}\Big(
\globalsection(\A \diffActing{1} \phi) \otimes \big(\globalsection(\A)\big)^\vee \otimes \natisophi{\psi}{\A}{\A \diffActingInline{1} \phi}\big(\globalsectionphi(\psi)\big)
\Big)\Big)
\Big)
\\ 
= \; & \detprojinv{\A}\Big(
\exchange\Big(
\big(\globalsection(\A)\big)^\vee \otimes \lambda \: \globalsection(\A \diffActingInline{1} \phi\psi)
\Big)
\Big)
\\
= \; & \lambda \: \globalsectionphi(\phi \psi).
\end{align*}
Note that the way this isomorphism $\natisophi{\psi}{\A}{\A \diffActingInline{1} \phi}$ is given by the proof of Theorem~\ref{thm:natisophi} is implicit.
The strategy in this section is to first find a more explicit formula for Equation~\eqref{eq:wanted_isomorphism}. 
From this, we obtain a formula for the cocycle $\calpha$ that can then be differentiated. 
Finally, the differentiated $\calpha$ is the sought Lie algebra cocycle $\algcocycle$ on $\Witt$.

\subsection{Convenient choices of metrics}

Throughout, we let $(\phi_t)_{t \in \R}$ and $(\psi_s)_{s \in \R}$ be two one-parameter families of complex deformations in $\DiffC$ such that $\phi_0 = \psi_0 = \id_{S^1}$.
For instance,
anticipating the proof of Theorem~\ref{thm:main} in Section~\ref{subsec:mainthmproof},
they could be 
flows of given vector fields $v, w \in \Witt$, defined via the flow equations~\eqref{eq:floweqs}.
(Let us cautiously note, however, that not even every diffeomorphism $\phi \in \Diffpan$ is reachable by the flow of a real, time-independent vector field --- for a counterexample, see~\cite[Warning~1.6]{Milnor:Remarks_on_infinite_dimensional_Lie_groups}.)
We first gather some observations of technical nature, crucial in order to carry out the proof.

\begin{lemma}
\label{lemma:radius_of_convergence}
Let $(\phi_t)_{t \in \R}$ and $(\psi_s)_{s \in \R}$ be analytic one-parameter families of complex deformations in $\DiffC$ such that $\phi_0 = \psi_0 = \id_{S^1}$. 
Then, there exists $\varepsilon > 0$ such that
\begin{align*}
\rconvunif \coloneqq \inf \big\{1, \rconv(\phi^{-1}_t), \rconv((\phi_t \psi_s)^{-1}) \: \big| \: t, s \in (-\varepsilon, \varepsilon) \big\} > 0 ,
\end{align*}
where the radius of convergence $\rconv(\blank)$ is defined in~\eqref{eq:def_rconv}. 
Fix $0 < r < \inf\setsuchthatinline{\rconv(\phi_t)}{t \in (-\varepsilon, \varepsilon)}$ such that 
\begin{align*}
\phi_t(S^1 \times [-r, r]) \subseteq S^1 \times (-\infty, \rconvunif] ,
\qquad \textnormal{for all } \; 
t \in (-\varepsilon, \varepsilon) .
\end{align*}
Then, we have 
$r \leq \rconv(\psi_s^{-1})$, for all $s \in (-\varepsilon, \varepsilon)$.
\end{lemma}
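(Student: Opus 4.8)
The statement splits into two parts: positivity of $\rconvunif$, which carries the analytic content, and the inequality $r \le \rconv(\psi_s^{-1})$, which then follows almost formally. I read ``analytic one-parameter family of complex deformations in $\DiffC$ with $\phi_0 = \id_{S^1}$'' as meaning that $(t,z) \mapsto \phi_t(z)$ extends to a holomorphic function on an open neighbourhood of $\{0\} \times S^1$ in $\C \times \C$, identifying a tubular neighbourhood of $S^1$ in $S^1 \times \R$ with an open subset of $\C/2\pi\Z$ via~\eqref{eq:zcoordinate}; this holds automatically when $\phi_t$ is the flow~\eqref{eq:floweqs} of a complex-analytic vector field, by analytic dependence of ODE solutions on time and on the initial condition. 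The same is assumed for $(\psi_s)$.

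For $\rconvunif > 0$ I would argue that the radius of convergence (in the sense of~\eqref{eq:def_rconv}) of the inverses cannot degenerate near the identity. Consider the holomorphic map $G(t,z) \coloneqq (t, \phi_t(z))$ near $\{0\} \times S^1$. Because $\phi_0 = \id_{S^1}$, at each $(0,z)$ the differential $DG$ is triangular with unit diagonal, hence invertible, so by the holomorphic inverse function theorem $G$ is locally biholomorphic along $\{0\} \times S^1$, with local inverse of the form $(t,w) \mapsto (t, \phi_t^{-1}(w))$. Covering the compact set $\{0\} \times S^1$ by finitely many such charts yields $\varepsilon_1, \delta_1 > 0$ such that every $\phi_t^{-1}$ with $|t| < \varepsilon_1$ is holomorphic on the strip $S^1 \times (-\delta_1, \delta_1)$, so $\rconv(\phi_t^{-1}) \ge \delta_1$ there. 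Applying the same reasoning to $(t,s,z) \mapsto \phi_t(\psi_s(z))$ --- which is holomorphic near $\{(0,0)\} \times S^1$ since for $|t|,|s|$ small $\psi_s(S^1)$ lies well inside the domain of the holomorphic extension of $\phi_t$, so in particular $\phi_t,\psi_s$ are composable and $\phi_t\psi_s \in \DiffC$ --- gives $\varepsilon_2, \delta_2 > 0$ with $\rconv((\phi_t\psi_s)^{-1}) \ge \delta_2$ for $|t|,|s| < \varepsilon_2$. For any $\varepsilon \le \min\{\varepsilon_1,\varepsilon_2\}$ one then gets $\rconvunif \ge \min\{1,\delta_1,\delta_2\} > 0$, a bound that persists if $\varepsilon$ is shrunk further.

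Granting this, set $\varepsilon_0 \coloneqq \min\{\varepsilon_1,\varepsilon_2\}$ and $\rho_0 \coloneqq \min\{1,\delta_1,\delta_2\} > 0$, so that $\rconvunif \ge \rho_0$ for every $\varepsilon \le \varepsilon_0$. Choosing $r \coloneqq \rho_0/2$ and then $\varepsilon \le \varepsilon_0$ small enough that $\phi_t(S^1 \times [0,r]) \subseteq S^1 \times [0,\rho_0] \subseteq S^1 \times [0,\rconvunif]$ for all $|t| < \varepsilon$ --- possible since $\phi_0 = \id_{S^1}$ fixes $S^1 \times [0,r]$ and $(t,z) \mapsto \phi_t(z)$ is continuous --- produces a valid $r$, which moreover satisfies $r < \rconvunif \le \rconv(\phi_t^{-1})$ for all such $t$. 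For the inequality, fix any such $r$ and any $s \in (-\varepsilon,\varepsilon)$: putting $t = 0$ in the containment forces $S^1 \times [0,r] = \phi_0(S^1 \times [0,r]) \subseteq S^1 \times [0,\rconvunif]$, i.e.\ $r \le \rconvunif$; and putting $t = 0$ in the infimum defining $\rconvunif$ --- noting $\phi_0\psi_s = \psi_s \in \DiffC$ --- exhibits $\rconv(\psi_s^{-1}) = \rconv((\phi_0\psi_s)^{-1})$ as one of its terms, so $\rconvunif \le \rconv(\psi_s^{-1})$. Combining, $r \le \rconvunif \le \rconv(\psi_s^{-1})$, as claimed.

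The only genuinely non-formal point is the uniform holomorphic extension in the second step --- producing a fixed strip around $S^1$ on which $\phi_t^{-1}$ (resp.\ $(\phi_t\psi_s)^{-1}$) is holomorphic for all small parameters simultaneously --- together with pinning down precisely what ``analytic one-parameter family'' supplies; the rest is unwinding definitions. As a check, the inequality also follows directly from the identity $\psi_s^{-1} = (\phi_t\psi_s)^{-1} \circ \phi_t$ by composing holomorphic extensions: of $\phi_t$ (to $S^1 \times [0,r]$, with image in $S^1 \times [0,\rconvunif]$) and of $(\phi_t\psi_s)^{-1}$ (valid over $S^1 \times [0,\rconvunif]$ since $\rconv((\phi_t\psi_s)^{-1}) \ge \rconvunif$, with room to spare as $r < \rconvunif$), which yields a holomorphic extension of $\psi_s^{-1}$ to $S^1 \times [0,r]$ and hence $\rconv(\psi_s^{-1}) \ge r$.
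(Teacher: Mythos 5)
Your proof is correct, and in fact it contains the paper's argument as your closing ``check''. For the positivity of $\rconvunif$ you and the paper argue in the same spirit --- compactness of $S^1$ together with non-degeneracy of the extensions near the identity --- though the paper dispatches this in one sentence, while you make it precise via the holomorphic inverse function theorem applied to $(t,z)\mapsto(t,\phi_t(z))$, whose differential along $\{0\}\times S^1$ is unipotent since $\phi_0=\id_{S^1}$; that is a legitimate and more careful way to extract the uniform strip. Where you genuinely diverge is the inequality $r\le \rconv(\psi_s^{-1})$: your primary argument specializes to $t=0$, reading off $r\le \rconvunif$ from the containment hypothesis (since $\phi_0$ extends to the identity on $S^1\times[0,r]$) and $\rconvunif\le\rconv(\psi_s^{-1})$ from the observation that $\rconv((\phi_0\psi_s)^{-1})=\rconv(\psi_s^{-1})$ is itself one of the terms in the infimum defining $\rconvunif$. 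This is a valid and strictly shorter derivation of the stated inequality, using only the $t=0$ instance of the hypotheses. The paper instead uses the containment for arbitrary $t$ together with the factorization $\psi_s^{-1}=(\phi_t\psi_s)^{-1}\circ\phi_t$: for $x\in[0,r]$ the point $\phi_t(\theta+\ii x)$ lies in $S^1\times[0,\rconvunif]$, where $(\phi_t\psi_s)^{-1}$ converges, so the composite extends over $S^1\times[0,r]$. That version is more informative --- it exhibits the extension of $\psi_s^{-1}$ concretely by composing extensions, which is the mechanism exploited in the decomposition arguments that follow --- whereas yours is the minimal logical route to the inequality; since you give both, nothing is missing.
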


\begin{proof}
By compactness of $S^1$ and because the images of the complex-analytic extensions of the complex deformations include $S^1$, the quantity $\rconvunif$ is positive when $\varepsilon > 0$ is small enough.
Also, for all $t, s \in (-\varepsilon, \varepsilon)$ and $x \in [0, r]$, 
we have $\Im(\phi_t(\theta + \ii r)) \leq \rconvunif \leq \rconv((\phi_t\psi_s)^{-1})$ by the choice of $r$,
which implies that 
$((\phi_t \psi_s)^{-1} \circ \phi_t)(\theta + \ii r) = \psi_s^{-1}(\theta + \ii r)$ converges for all $\theta \in S^1$. 
Hence, we see that $r \leq \rconv(\psi_s^{-1})$ by~\eqref{eq:def_rconv}, as claimed.
\end{proof}

\begin{figure}
\centering
\includestandalone[]{fig_decompositions}
\caption{
The decomposition of $\A$ in Equation~\eqref{eq:decompa}. 
}
\label{fig:decompositions}
\end{figure}

\begin{figure}
\centering
\includestandalone[]{fig_decompositions2}
\caption{
The decomposition of $\A \diffActing{1} \phi_t$ in Equation~\eqref{eq:decompa1}. 
}
\label{fig:decompositions2}
\end{figure}

We next decompose the standard cylinder $\A$ 
(defined in~\eqref{eq:standard_cylinder}) into the following parts:
\begin{align} \label{eq:decompa}
\begin{split}
\A &= \Ua \sew{2}{1} \Uac , \\
\Ua &= \paramcyl{S^1 \times [0, r]}{\theta}{\theta + \ii r} , \\
\Uac &= \paramcyl{S^1 \times [r, 1]}{\theta + \ii r}{\theta + \ii} ,
\end{split}
\end{align}
which we illustrate in Figure~\ref{fig:decompositions}.
On the other hand, from Lemma~\ref{lemma:decomposition}  we also obtain a decomposition of $\A \diffActingInline{1} \phi_t$ into the following parts,
which we illustrate in Figure~\ref{fig:decompositions2}: 
\begin{align} \label{eq:decompa1}
\begin{split}
\A \diffActing{1} \phi_t &= \Ub \sew{2}{1} \Ubc , \\
\Ub &= \paramcyl{
\setsuchthat{\phi_t(z)}{z \in U}}{
\phi_t(\theta)}{
\phi_t(\theta + \ii r)}
\big) , \\
\Ubc &= \paramcyl
{\overline{\A \setminus \Ub}}
{\phi_t(\theta + \ii r)}
{\theta + \ii}
\big).
\end{split}
\end{align}
Recall that on $\A$, we use the complex coordinate $z = \theta + \ii x$ as in~\eqref{eq:zcoordinate}, 
so that the flat metric is $g(\A) = \dz$.

We choose smooth cut-off functions 
$\interab, \intera, \interb \colon S^1 \times (-\infty, 1] \to [0, 1]$ 
which all equal the constant $1$ in a neighborhood of $S^1 \times (-\infty, 0]$, and the constant $0$ in a neighborhood of $S^1 \times [R, 1]$. 
We require that $\interab$ and $\intera$ are independent of the $\theta$-coordinate, 
and their $x$-derivatives $\interabPrime(x)$ and $\interaPrime(x)$
respectively have support\footnote{Such a cut-off function changes from $0$ to $1$ on the support of its derivative, and it is constant elsewhere.} in 
$\Ub \diffActing{1} \psi_s$ and $\Ubc$ for any $t, s \in (-\varepsilon, \varepsilon)$. 
This setup is depicted in Figure~\ref{fig:interpolation}.
To make this possible, one has to decrease $\varepsilon > 0$ even further such that there exists $0 < \delta < \min\{R - r, \frac{r}{2}\}$ such that $\phi_t(\psi_s(S^1 \times \{0\})) \subset S^1 \times (-\infty, \delta)$ and $\phi_t(S^1 \times \{r\}) \subset S^1 \times (r - \delta, r + \delta)$ for any $s, t \in (-\varepsilon, \varepsilon)$.
Then, we can concretely require for the cut-off functions that
\begin{align}
\label{eq:support_bounds}
\begin{split}
\setsuchthat{x \in (-\infty, 1]}{\interabPrime(x) \neq 0} &\subset (\delta, r - \delta), \\
\setsuchthat{x \in (-\infty, 1]}{\interaPrime(x) \neq 0} &\subset (r + \delta, R).
\end{split}
\end{align}
We then define $\interb$ as  
\begin{align}\label{eq:interbab}
\interb(z) 
\coloneqq
\begin{cases}
0,  & \text{$x \geq r$, i.e.\ above $\Ua$,} \\ 
\interab(\phi_t(x)) , & z = \theta + \ii x \in \Ua , \\
1,  & \text{$x \leq 0$, i.e.\ below $\Ua$.} 
\end{cases}
\end{align}
Now, since $\isom{\Ub}{\Ua} = \phi_t^{-1}$, it follows that the derivative of $\interb$ has support in $\Ua$ ---   it might, however, not be independent of the $\theta$-coordinate. See Figure~\ref{fig:interpolation} for an illustration. 

\begin{figure}
\centering
\includestandalone[]{fig_interpolation}
\caption{
The hatched areas show where the cut-off functions $\intera(x)$, $\interb(z)$, and $\interab(x)$ are transitioning from value $1$ (below the area) to value $0$ (above the area).
The isomorphism between $\Ua$ and $\Ub$ is the map that relates $\interb$ to $\interab$ in Equation~\eqref{eq:interbab}.
The dashed lines are the bounds for the hatched areas as in Equation~\eqref{eq:support_bounds}. Note that the bounds are chosen such that even when $\A \diffActing{1} \phi_t$ and $\A$ are respectively deformed into $\A \diffActing{1} \phi_t \psi_s$ and $\A \diffActing{1} \psi_s$, the hatched areas do not intersect the boundaries of $\Ub \diffActing{1} \psi_s$ and $U \diffActing{1} \psi_s$ respectively.
}
\label{fig:interpolation}
\end{figure}

To shorten notation, in analogy of~\eqref{eq:unifsq}, we write
\begin{align} \label{eq:def_F_phi}
\dersq{\phi}(z) \coloneqq
\big| (\phi^{-1})'(z) \big|^2
\end{align}
for any $\phi \in \DiffC$ and $z$ in the domain of $\phi^{-1}$. 
This is the conformal factor of a pushforward of a metric along a deformation $\phi$ of a boundary component. 
Now, the following metrics are admissible:
\begin{align} \label{eq:metrics_list}
\begin{split}
g(\A) &= \dz , \\
g(\A \diffActing{1} \phi_t) &= 
\big(
\dersq{\phi_t}(z) \, 
\intera(x) + 1 - \intera(x)
\big)  \, \dz , \\
g(\A \diffActing{1} \psi_s) &= 
\big(
\dersq{\psi_s}(z)
\interb(z) + 1 - \interb(z)
\big)  \, \dz ,
\\
g(\A \diffActing{1} \phi_t \psi_s) &= 
\big(
\dersq{\phi_t \psi_s}(z) \, 
\interab(x)
+ \dersq{\phi_t}(z) \,  
\big(\intera(x) - \interab(x)\big)
+ 1 - \intera(x)
\big)  \, \dz ,
\end{split}
\end{align}
and $g(\A \diffActingInline{1} \phi_t \psi_s)$ is compatible with the parametrization of $\partial_2 \Ub = \partial_1 \Ubc$. 
Note that the functions $\dersq{\phi_t}$, $\dersq{\psi_s}$, and $\dersq{\phi_t \psi_s}$ are defined on the support of the respective cut-off functions $\intera$, $\interb$, and $\interab$ since by Lemma~\ref{lemma:radius_of_convergence}, 
the latter are bounded respectively by $R$, $r$, and $R$ in the $x$-direction.
Restrictions of the metrics~\eqref{eq:metrics_list} to the sub-surfaces from decompositions~(\ref{eq:decompa},~\ref{eq:decompa1}) are also admissible:
\begin{align} \label{eq:metrics_restrictions}
\begin{aligned}
&g(\Ua) &&= \; \restrict{g(\A)}{\Ua} &&= \; \dz , \\
&g(\Uac) &&= \; \restrict{g(\A)}{\Uac} &&= \; \dz , \\
&g(\Ua \diffActing{1} \psi_s) &&= \restrict{g(\A \diffActing{1} \psi_s)}{\Ua \diffActing{1} \psi_s} &&=
\big(
\dersq{\psi_s}(z) \, 
\interb(z) + 1 - \interb(z)
\big) \, \dz , \\
&g(\Ub) &&= \; \restrict{g(\A \diffActing{1} \phi_t)}{\Ub} &&= \;
\dersq{\phi_t}(z) \,  
\dz , \\
&g(\Ubc) &&= \; \restrict{g(\A \diffActing{1} \phi_t)}{\Ubc} &&= \;
\big(
\dersq{\phi_t}(z) \, 
\intera(x) + 1 - \intera(x)
\big)  \, \dz , \\
&g(\Ub \diffActing{1} \psi_s) &&= \; \restrict{g(\A \diffActing{1} \phi_t \psi_s)}{\Ub \diffActing{1} \psi_s} &&= \;
\big(
\dersq{\phi_t \psi_s}(z) \, 
\interab(x)
+ \dersq{\phi_t}(z)
\big(1 - \interab(x)\big)
\big) \, \dz .
\end{aligned}
\end{align}

\begin{lemma}
\label{lemma:nofactor_metrics}
The metrics in~\eqref{eq:metrics_list} with $\interb$ defined via Equation~\eqref{eq:interbab} satisfy
\begin{align} \label{eq:pullbacks}
(\phi_t^{-1})^* g(U \diffActing{1} \psi_s) 
= g(V \diffActing{1} \psi_s).
\end{align}
\end{lemma}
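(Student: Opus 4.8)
The plan is to verify~\eqref{eq:pullbacks} by computing the pullback along $\isom{\Ub}{\Ua} = \phi_t^{-1}$ directly in the coordinate $z = \theta + \ii x$ and matching the resulting conformal factor with the one listed for $g(\Ub \diffActing{1} \psi_s) = \restrict{g(\A \diffActing{1} \phi_t \psi_s)}{\Ub}$ in~\eqref{eq:metrics_restrictions}. Both sides of~\eqref{eq:pullbacks} are conformal metrics of the form $\rho(z)\,\dz$ on $\Ub$, so it is enough to identify their conformal factors.

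First I would use that a conformal metric $\rho(z)\,\dz$ pulls back along a holomorphic map $F$ to $\rho(F(z))\,|F'(z)|^2\,\dz$. With $F = \phi_t^{-1} \colon \Ub \to \Ua$ and the conformal factor $\dersq{\psi_s}(z)\,\interb(z) + 1 - \interb(z)$ of $g(\Ua \diffActing{1} \psi_s)$ from~\eqref{eq:metrics_restrictions}, this gives, for $z = \theta + \ii x \in \Ub$,
\begin{align*}
(\phi_t^{-1})^* g(\Ua \diffActing{1} \psi_s)
= \Big( \dersq{\psi_s}\big(\phi_t^{-1}(z)\big)\,\interb\big(\phi_t^{-1}(z)\big) + 1 - \interb\big(\phi_t^{-1}(z)\big) \Big)\,\big|(\phi_t^{-1})'(z)\big|^2\,\dz ,
\end{align*}
which is well defined since $\psi_s^{-1}$, and hence $\dersq{\psi_s}$, is holomorphic on a neighbourhood of the portion of $\Ua$ on which $\interb$ is non-constant, by the bound $r \leq \rconv(\psi_s^{-1})$ from the preceding lemma. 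Then I would feed in two facts. The defining relation~\eqref{eq:interbab} for $\interb$ reduces, for $z = \theta + \ii x \in \Ub$, to its first case $\interb(\phi_t^{-1}(z)) = \interab(x)$. The chain rule for $\psi_s^{-1} \circ \phi_t^{-1} = (\phi_t \psi_s)^{-1}$, together with Definition~\eqref{eq:def_F_phi}, gives
\begin{align*}
\big|(\psi_s^{-1})'(\phi_t^{-1}(z))\big|^2\,\big|(\phi_t^{-1})'(z)\big|^2 &= \big|\big((\phi_t \psi_s)^{-1}\big)'(z)\big|^2 = \dersq{\phi_t \psi_s}(z) , \\
\big|(\phi_t^{-1})'(z)\big|^2 &= \dersq{\phi_t}(z) .
\end{align*}
Substituting these collapses the conformal factor above to $\dersq{\phi_t \psi_s}(z)\,\interab(x) + \dersq{\phi_t}(z)\,\big(1 - \interab(x)\big)$, which is exactly the conformal factor of $g(\Ub \diffActing{1} \psi_s)$ in~\eqref{eq:metrics_restrictions}; hence the two metrics on $\Ub$ coincide, proving~\eqref{eq:pullbacks}.

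I do not expect a genuine obstacle here: once the inputs are organized, the verification is essentially one line. The only care needed is bookkeeping --- getting the direction of the pullback right (along $\phi_t^{-1} \colon \Ub \to \Ua$, matching $\isom{\Ub}{\Ua} = \phi_t^{-1}$ in the decomposition~\eqref{eq:decompa1}), invoking~\eqref{eq:interbab} on the domain $\Ub$ where its argument is genuinely the imaginary coordinate $x$, and keeping the regularity of $\psi_s^{-1}$ within the range allowed by the preceding lemma (shrinking $r$ slightly if a strict inequality $r < \rconv(\psi_s^{-1})$ is wanted).
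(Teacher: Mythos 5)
Your proposal is correct and follows essentially the same route as the paper: pull back the conformal factor along $\phi_t^{-1}$, apply the chain rule to identify $\dersq{\psi_s}(\phi_t^{-1}(z))\,\dersq{\phi_t}(z) = \dersq{\phi_t\psi_s}(z)$, and use the defining relation~\eqref{eq:interbab} to replace $\interb(\phi_t^{-1}(z))$ by $\interab(x)$. The extra remarks on well-definedness via $r \leq \rconv(\psi_s^{-1})$ are a harmless addition (and you even avoid a small typo in the paper's displayed chain-rule identity).
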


\begin{proof}
Using the chain rule, for $\phi, \psi \in \DiffC$, we have 
\begin{align*}
\dersq{\phi \psi}(z)
= \big| (\psi^{-1} \phi^{-1})'(z) \big|^2
= \big| (\psi^{-1})'(\phi^{-1}(z)) \big|^2 \, \big|(\psi^{-1})'(z)\big|^2
= \dersq{\psi}(\phi^{-1}(z)) \cdot \dersq{\phi}(z) ,
\end{align*}
and thus, using the fact that $\intera(x) = 1$ on $V$, 
we see that the left-hand side of~\eqref{eq:pullbacks} equals
\begin{align*}
\begin{aligned}
(\phi_t^{-1})^* g(U \diffActing{1} \psi_s) 
= \; & \big( \dersq{\psi_s}(\phi_t^{-1}(z)) \,  \interb(\phi_t^{-1}(z)) + 1 - \interb(\phi_t^{-1}(z)) \big) \dersq{\phi_t}(z) \, \dz \\
= \; & \big(\dersq{\phi_t \psi_s}(z) \, \interb(\phi_t^{-1}(z) ) + \dersq{\phi_t}(z) \big(1 - \interb(\phi_t^{-1}(z) )\big) \big) \, \dz .
\end{aligned}
\end{align*}
In turn, the right-hand side of~\eqref{eq:pullbacks} equals
\begin{align*}
g(V \diffActing{1} \psi_s)
= \big(\dersq{\phi_t \psi_s}(z) \, \interab(x) + \dersq{\phi_t}(z) \big(1 - \interab(x)\big) \big) \, \dz ,
\end{align*}
which agrees with the left-hand side of~\eqref{eq:pullbacks} by Equation~\eqref{eq:interbab}.
\end{proof}

\subsection{Differentiation of the cocycle and the proof of Theorem~\ref{thm:main}}
\label{subsec:mainthmproof}

Having the system~(\ref{eq:metrics_list},~\ref{eq:metrics_restrictions})  of metrics at hand, we now compute the sought cocycle in Equation~\eqref{eq:wanted_isomorphism}.

\begin{proposition}
Let $(\phi_t)_{t \in \R}$ and $(\psi_s)_{s \in \R}$ be analytic one-parameter families of complex deformations in $\DiffC$ such that $\phi_0 = \psi_0 = \id_{S^1}$. 
Then, for any $t, s \in (-\varepsilon, \varepsilon)$, 
the isomorphism
\begin{align}
\begin{split}
\natisophi{\psi_s}{\A}{\A \diffActingInline{1} \phi_t} \colon \Detrc(\psi_s, \A) & \longrightarrow \Detrc(\psi_s, \A \diffActingInline{1} \phi_t) , \\
\globalsection(\A \diffActingInline{1} \psi_s) \otimes \big(\globalsection(\A)\big)^\vee & \longmapsto \calpha(\phi_t, \psi_s) \:
\globalsection(\A \diffActingInline{1} \phi_t \psi_s) \otimes
\big(
\globalsection(\A \diffActingInline{1} \phi_t)
\big)^\vee 
\end{split}
\label{eq:change_ann_iso}
\end{align}
is given in terms of the metrics~\eqref{eq:metrics_list} with
\begin{align} \label{eq:alpha_cocycle}
\begin{split}
\calpha(\phi_t, \psi_s) = \exp\Big(
&\charge \liou{\unifsq{\phi_t\psi_s} \dz}{g(\A \diffActing{1} \phi_t \psi_s)} \\
&-\charge \liou{\unifsq{\phi_t} \dz}{g(\A \diffActing{1} \phi_t)}
-\charge \liou{\unifsq{\psi_s} \dz}{g(\A \diffActing{1} \psi_s)}
\Big) .
\end{split}
\end{align}
\end{proposition}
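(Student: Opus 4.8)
The plan is to evaluate the isomorphism $\natisophi{\psi_s}{\A}{\A \diffActing{1} \phi_t}$ of Theorem~\ref{thm:natisophi} on one conveniently chosen vector, and then re-express the output through the global section $\globalsection$ via~\eqref{eq:global_section} and~\eqref{eq:diffglobalsection}. Unwinding the construction in the proof of Theorem~\ref{thm:natisophi} for the decompositions $\A = \Ua \sew{2}{1} \Uac$ from~\eqref{eq:decompa} and $\A \diffActing{1} \phi_t = \Ub \sew{2}{1} \Ubc$ from~\eqref{eq:decompa1} (recall $\isom{\Ub}{\Ua} = \phi_t^{-1}$, so $\isom{\Ua}{\Ub} = \phi_t$), the map $\natisophi{\psi_s}{\A}{\A \diffActing{1} \phi_t}$ is a composite of sewing isomorphisms~\eqref{eq:sewing_iso}, of the evaluations pairing $\Detrc(\Uac)$ and $\Detrc(\Ubc)$ against their duals, and of the isomorphism on determinant lines induced by $\phi_t^{-1}$ on the small cylinders. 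First I would feed into this composite the vector $[g(\A \diffActing{1} \psi_s)] \otimes [g(\A)]^\vee$, with the metrics taken from the list~\eqref{eq:metrics_list}.

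The key point is that these concrete metrics are designed so that the hypotheses of Remark~\ref{remark:nofactor} (equivalently, the extension conditions~\eqref{eq:extension_conditions}) hold for the pair $\A$, $\A \diffActing{1} \phi_t$ and the deformation $\psi_s$, relative to the decompositions~\eqref{eq:decompa},~\eqref{eq:decompa1}. Using the restrictions collected in~\eqref{eq:metrics_restrictions} and the choices of cut-offs ($\intera \equiv 1$ on $\Ub$, $\interab \equiv 0$ on $\Ubc$, and $\interb \equiv 0$ above $\Ub$), one checks the restriction identities $g(\A \diffActing{1} \phi_t)\vert_{\Ub} = g(\Ub)$, $g(\A \diffActing{1} \phi_t \psi_s)\vert_{\Ub} = g(\Ub \diffActing{1} \psi_s)$, and $g(\A \diffActing{1} \psi_s)\vert_{\Uac} = \dz = g(\A)\vert_{\Uac}$, together with the fact that the two metrics agree on the complements. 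The only nonformal compatibility — that the metric on $\Ua \diffActing{1} \psi_s$ pushes forward along $\isom{\Ua}{\Ub} = \phi_t$ to the metric on $\Ub \diffActing{1} \psi_s$ — is precisely Lemma~\ref{lemma:nofactor_metrics}, i.e.\ Equation~\eqref{eq:pullbacks}, which follows from the chain rule $\dersq{\phi\psi} = (\dersq{\psi} \circ \phi^{-1})\,\dersq{\phi}$ and the defining relation~\eqref{eq:interbab} for $\interb$. Granting all this, Remark~\ref{remark:nofactor} yields
\begin{align*}
\natisophi{\psi_s}{\A}{\A \diffActing{1} \phi_t}\big([g(\A \diffActing{1} \psi_s)] \otimes [g(\A)]^\vee\big) = [g(\A \diffActing{1} \phi_t \psi_s)] \otimes [g(\A \diffActing{1} \phi_t)]^\vee ,
\end{align*}
with trivial multiplicative constant.

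It then remains to pass between these metric-adapted representatives and the global-section representatives appearing in~\eqref{eq:change_ann_iso}. By~\eqref{eq:global_section}--\eqref{eq:diffglobalsection}, for each cylinder $B$ among $\A,\ \A \diffActing{1} \phi_t,\ \A \diffActing{1} \psi_s,\ \A \diffActing{1} \phi_t \psi_s$ one has $\globalsection(B) = e^{-\charge \liou{g_0(B)}{g(B)}}\,[g(B)]$, where $g_0(B)$ is the pullback of the flat metric along the uniformizing map of Proposition~\ref{prop:neretin_form}; in particular $g_0(\A) = \dz$, so $\globalsection(\A) = [\dz] = [g(\A)]$, while $g_0(\A \diffActing{1} \phi_t) = \unifsq{\phi_t}\,\dz$ and likewise for the other two. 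Substituting these expressions (and using $(\lambda[g])^\vee = \lambda^{-1}[g]^\vee$) into the source $\globalsection(\A \diffActing{1} \psi_s) \otimes (\globalsection(\A))^\vee$ and the target $\globalsection(\A \diffActing{1} \phi_t \psi_s) \otimes (\globalsection(\A \diffActing{1} \phi_t))^\vee$ of~\eqref{eq:change_ann_iso}, applying the displayed identity, and solving for the scalar $\calpha(\phi_t, \psi_s)$ defined by~\eqref{eq:wanted_isomorphism}, the remaining three $\liou{\cdot}{\cdot}$-exponents combine exactly into~\eqref{eq:alpha_cocycle}; the resulting expression is then manifestly positive.

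I expect the main obstacle to be the bookkeeping in the middle step: one must handle the decomposition of both $\A$ and $\A \diffActing{1} \phi_t$ at radius $r$ at once, the fact that $\interb$ is only defined implicitly through $\phi_t^{-1}$ by~\eqref{eq:interbab}, and the precise supports of $\interaPrime$, $\interabPrime$ (hence of $\interb'$), in order to be sure that every restriction and pushforward identity needed to invoke Remark~\ref{remark:nofactor} genuinely holds. Everything else — the chain rule for $\dersq{\cdot}$ and the exponential algebra — is routine. One should also record at the start that $\A \diffActing{1} \phi_t$, $\A \diffActing{1} \psi_s$, and $\A \diffActing{1} \phi_t \psi_s$ are honest cylinders in $\cylinders$ for all $|t|,|s| < \varepsilon$, so that $\unifsq{\phi_t}$ and its companions are well defined; this is exactly what the preceding lemma on $\rconvunif$ provides.
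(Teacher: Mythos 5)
Your proposal is correct and follows essentially the same route as the paper's proof: evaluate $\natisophi{\psi_s}{\A}{\A \diffActingInline{1} \phi_t}$ on the representative $[g(\A \diffActing{1} \psi_s)] \otimes [g(\A)]^\vee$ built from the metrics~\eqref{eq:metrics_list}, use the compatibility of these metrics with the decompositions~(\ref{eq:decompa},~\ref{eq:decompa1}) --- with Lemma~\ref{lemma:nofactor_metrics} supplying the only nontrivial pushforward identity --- to get the image $[g(\A \diffActing{1} \phi_t \psi_s)] \otimes [g(\A \diffActing{1} \phi_t)]^\vee$ with no extra factor, and then convert to the global-section representatives via~(\ref{eq:global_section},~\ref{eq:diffglobalsection}) to read off $\calpha(\phi_t,\psi_s)$. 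The paper packages the middle step as the commutative diagram in Figure~\ref{fig:changecomp} rather than through Remark~\ref{remark:nofactor}, but the content is identical.
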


\begin{proof}
By~\eqref{eq:diffglobalsection} and Proposition~\ref{prop:global_section}, $\globalsectionphi(\psi_s)$ is given by
\begin{align}
\label{eq:global_section_s}
\globalsectionphi(\psi_s)
= e^{-\charge \liou{\unifsq{\psi_s} \dz}{g(\A \diffActingInline{1} \psi_s)}} \; 
[g(\A \diffActing{1} \psi_s)] \otimes [g(\A)]^\vee \; \in \; \Detrc(\psi_s, \A) .
\end{align}
By the choice of the metrics as in~\eqref{eq:metrics_list}, we see that (see also Figure~\ref{fig:changecomp})
\begin{align} \label{eq:natisophi1}
\natisophi{\psi_s}{A}{A \diffActingInline{1} \phi_t}
\big(
[g(\A \diffActing{1} \psi_s)] \otimes [g(\A)]^\vee
\big)
= [g(\A \diffActing{1} \phi_t \psi_s)] \otimes [g(\A \diffActing{1} \phi_t)]^\vee .
\end{align}
To obtain the cocycle $\calpha$, we compare this to~\eqref{eq:change_ann_iso} using analogues of~\eqref{eq:global_section_s} for the other cylinders:
\begin{align} \label{eq:natisophi2}
\; & e^{-\charge \liou{\unifsq{\psi_s} \dz}{g(\A \diffActingInline{1} \psi_s)}} \; [g(\A \diffActing{1} \phi_t \psi_s)] \otimes [g(\A \diffActing{1} \phi_t)]^\vee \\
= \; & \calpha(\phi_t, \psi_s) \; 
e^{-\charge \liou{\unifsq{\phi_t \psi_s} \dz}{g(\A \diffActingInline{1} \phi_t \psi_s)} + \charge \liou{\unifsq{\phi_t} \dz}{g(\A \diffActingInline{1} \phi_t)}} \; [g(\A \diffActing{1} \phi_t \psi_s)] \otimes [g(\A \diffActing{1} \phi_t)]^\vee .
\nonumber
\end{align}
Combining~(\ref{eq:natisophi1},~\ref{eq:natisophi2}) yields the asserted identity~\eqref{eq:alpha_cocycle}.
\end{proof}

\begin{figure}[t]
\centering
\begin{equation*}
\label{diag:changecomp}
\begin{adjustbox}{center}
    \begin{tikzcd}[row sep = 2em, ampersand replacement=\&]
	{\Detrc(\A \diffActing{1} \psi_s) \otimes (\Detrc(\A))^\vee}
    \& {
        %\mu(\psi_s, \A) = e^{-\liou{g_0(\A)}{g(\A *_1 \psi_s)}} 
        \big[g\big(\A \diffActing{1} \psi_s\big)\big] \otimes \big[g(\A)\big]^\vee
    } \\
	{\Detrc\big(\Uac\big) \otimes \Detrc(\Ua \diffActing{1} \psi_s) \otimes (\Detrc(\Uac))^\vee \otimes (\Detrc(\Ua))^\vee} 
    \& {
    %e^{-\liou{g_0(\A)}{g(\A *_1 \psi_s)}} 
    \big[g(\Uac)\big] \otimes \big[g(\Ua \diffActing{1} \psi_s)\big] \otimes \big[g(\Uac)\big]^\vee \otimes \big[g(\Ua)\big]^\vee} \\
	{\Detrc(\Ua \diffActing{1} \psi_s) \otimes (\Detrc(\Ua))^\vee}
    \& {
    %e^{-\liou{g_0(\A)}{g(\A *_1 \psi_s)}}
    \big[g(\Ua \diffActing{1} \psi_s)\big] \otimes \big[g(\Ua)\big]^\vee} \\
	{\Detrc(\Ub \diffActing{1} \psi_s) \otimes (\Detrc(\Ub))^\vee}
    \& {
            %e^{
            %    -\liou{g_0(\A)}{g(\A *_1 \psi_s)}
            %}
            \big[g(\Ub \diffActing{1} \psi_s)\big] \otimes \big[g(\Ub)\big]^\vee
    } \\
	{\Detrc(\Ubc) \otimes \Detrc(\Ub \diffActing{1} \psi_s) \otimes (\Detrc(\Ubc))^\vee \otimes (\Detrc(\Ub))^\vee}
    \& {
            %e^{
            %    -\liou{g_0(\A)}{g(\A *_1 \psi_s)}
            %}
            \big[g(\Ubc)\big] \otimes  \big[g(\Ub \diffActing{1} \psi_s)\big]
            \otimes \big[g(\Ubc)\big]^\vee \otimes \big[g(\Ub)\big]^\vee
    } \\
	{\Detrc(\A \diffActing{1} \phi_t \psi_s) \otimes (\Detrc(\A \diffActing{1} \phi_t))^\vee}
    \& {
        %e^{-\liou{g_0(\A)}{g(\A *_1 \psi_s)} }
        \big[g(\A \diffActing{1} \phi_t \psi_s)\big] \otimes \big[g(\A \diffActing{1} \phi_t)\big]^\vee
    }
    \arrow["{\text{$\sewiso{\Uac}{\Ua \diffActingInline{1} \psi_s}^{-1} \otimes \sewiso{\Uac}{\Ua}^{-1}$ (sewing)}}", from=1-1, to=2-1]
	\arrow[maps to, from=1-2, to=2-2]
    \arrow["{\text{$\applyat\evaluation{1}{3}$ (evaluation)}}", from=2-1, to=3-1]
	\arrow[maps to, from=2-2, to=3-2]
    \arrow["{\natisophi{\psi_s}{U}{V}}", from=3-1, to=4-1]
	\arrow[maps to, from=3-2, to=4-2]
    \arrow["{\text{$\applyat\evaluation{1}{3}$ (evaluation)}}"', from=5-1, to=4-1]
	\arrow[maps to, from=4-2, to=5-2]
	\arrow["\text{$\sewiso{\Ubc}{\Ub \diffActingInline{1} \psi_s}^{-1} \otimes \sewiso{\Ubc}{\Ub}^{-1}$ (sewing)}"', from=6-1, to=5-1]
	\arrow[maps to, from=5-2, to=6-2]
\end{tikzcd}
\end{adjustbox}
\end{equation*}
\caption{If a set of metrics is compatible with the decompositions as in Figures~\ref{fig:decompositions}~\&~\ref{fig:decompositions2}, 
then the multiplication isomorphism~\eqref{eq:change_ann_iso}  sends the vectors in the determinant lines induced by these metrics to each other without any additional factors. For the middle isomorphism $\natisophi{\psi_s}{U}{V}$, this is due to Lemma~\ref{lemma:nofactor_metrics}.}
\label{fig:changecomp}
\end{figure}

Recall that $\Detrpc(\DiffC)$ is an extension of $\DiffC$ by the multiplicative group $\Rp$, because Equation~\eqref{eq:alpha_cocycle} implies that 
$\calpha(\phi, \psi) > 0$ for all $\phi, \psi \in \DiffC$.
Since $\log \colon \Rp \to \R$ is an isomorphism from the multiplicative Lie group $\Rp$ to the additive Lie group $\R$, the Lie algebra cocycle $\algcocycle$ can be computed by differentiating the logarithm of $\calpha$ according to Equation~\eqref{eq:diff_cocycle}, that is, 
\begin{align} \label{eq:full_cocycle}
\algcocycle(v, w) =  
\frac{1}{2} \pdv{}{t}{s} 
\Big(
\log \calpha(\phi_t, \psi_s) - \log \calpha(\psi_s, \phi_t)\Big) \Big|_{t = s = 0},
\end{align}
where $\phi_t$ and $\psi_s$ are now the flows of complex vector fields $v, w \in \Witt$ as in~\eqref{eq:floweqs}.
This brings us to the proof of the main result of the present work.

\begin{proof}[Proof of Theorem~\ref{thm:main}]
We begin by computing the derivative of the cocycle~\eqref{eq:alpha_cocycle},
\begin{align} \label{eq:double derivative cocycle}
\pdv{}{t}{s} \log \calpha(\phi_t, \psi_s) \Big|_{t = s = 0} ,
\end{align}
up to symmetric terms, which will cancel out in the Lie algebra  cocycle~\eqref{eq:full_cocycle}. 

Note that $\liou{\unifsq{\phi_t} \dz}{g(\A \diffActingInline{1} \phi_t)}$ in Equation~\eqref{eq:alpha_cocycle} does not depend on $s$,
so it does not contribute to the derivative in~\eqref{eq:double derivative cocycle}. 
However, the term
\begin{align} \label{eq:hiddentdependence}
\begin{split}
\; & \liou{\unifsq{\psi_s} \dz}{g(\A \diffActingInline{1} \psi_s)} 
\\
= \; &
\frac{1}{48\pi \ii} \iint_\A
\Big(\big(\log \unifsq{\psi_s}
- \log\big(
\dersq{\psi_s}(z) \, 
\interb(z) + 1 - \interb(z)
\big) \big)
\\
&\phantom{\frac{1}{48\pi \ii} \iint_\A \Big(}
\times \partial \bar \partial
\big(
\log \unifsq{\psi_s}
+ \log\big(
\dersq{\psi_s}(z) \, 
\interb(z) + 1 - \interb(z)
\big) 
\big)
\Big)
\end{split}
\end{align}
does depend on $t$ via the cut-off function $\interb$. 
Nevertheless, note that setting $s = 0$ in  
$\log\big( \dersq{\psi_s}(z) \, \interb(z) + 1 - \interb(z) \big)$
and $\log\unifsq{\psi_s}$ each
yields $\log 1 = 0$, since
\begin{align} \label{eq:derivatives_one}
\dersq{\phi_0}(z) 
= \dersq{\psi_0}(z) 
= \dersq{\id_{S^1}}(z) 
= 1
= \unifsq{\id_{S^1}}(z) 
= \unifsq{\psi_0}(z) 
= \unifsq{\phi_0}(z) 
\end{align}
by~(\ref{eq:unifsq},~\ref{eq:def_F_phi}). 
Hence, applying $\pdv{s}$  to~\eqref{eq:hiddentdependence} and using the product rule under the integral, and evaluating at $s = 0$, shows that the derivative in~\eqref{eq:double derivative cocycle} of this term vanishes as well.
We thus proceed to compute the derivative of the remaining term
\begin{align}
\label{eq:computation_1}
&\liou{\unifsq{\phi_t \psi_s} \dz}{g(\A \diffActing{1} \phi_t \phi_s)} 
\\ =\: &
\nonumber
\frac{1}{48\pi \ii} \iint_\A \Big(
\big(
\log \unifsq{\phi_t \psi_s}
-
\log\big(
\dersq{\phi_t\psi_s}(z) \, \interab(x)
+ \dersq{\phi_t}(z) \, (\intera(x) - \interab(x))
+ 1 - \intera(x)
\big)\big) \\
\nonumber
&\phantom{\frac{1}{48\pi \ii} } 
\times \partial \bar \partial
\big(
\log \unifsq{\phi_t \psi_s}
+
\log\big(
\dersq{\phi_t\psi_s}(z) \, \interab(x)
+ \dersq{\phi_t}(z) \, (\intera(x) - \interab(x))
+ 1 - \intera(x)
\big)
\big)
\Big) .
\end{align}
Since functions of the form $\log(F_{\phi}(z))$ with $\phi \in \DiffC$ are harmonic, 
the integral vanishes in regions of $\A$ where 
both $\intera(x)$ and $\interab(x)$ are locally constant (with values $1$ or $0$).
Since $\interaPrime(x)$ and $\interabPrime(x)$ have disjoint supports, we can split the integral further into two parts.
We will take into account that $\interab(x) = 0$ in regions of $\A$ where $\interaPrime(x) \neq 0$, 
and $\intera(x) = 1$ in regions of $\A$ where $\interabPrime(x) \neq 0$.
We also remove the harmonic term $\log \unifsq{\phi_t \psi_s}$.
We thus obtain
\begin{align}
\textnormal{\eqref{eq:computation_1}} 
\label{eq:first}
= \; &
\phantom{{+}} 
\frac{1}{48\pi \ii} \underset{\interaPrime(z) \neq 0}{\iint}
\big( \log \unifsq{\phi_t \psi_s} \big)
\big( \partial \bar \partial
\log\big(
\dersq{\phi_t}(z) \, \intera(x) + 1 - \intera(x)
\big) \big)
\\
\label{eq:second}
&{-}
\frac{1}{48\pi \ii} \underset{\interaPrime(z) \neq 0}{\iint}
\big( 
\log\big(
\dersq{\phi_t}(z) \, \intera(x) + 1 - \intera(x)
\big) \big) 
\\
\nonumber
&\phantom{\frac{1}{48\pi \ii} \underset{\interabPrime(z) \neq 0}{\iint}} \quad
\times 
\big( \partial \bar \partial
\log\big(
\dersq{\phi_t}(z) \, \intera(x) + 1 - \intera(x)
\big) \big)
\\
\label{eq:third}
&{+}
\frac{1}{48\pi \ii} 
\underset{\interabPrime(z) \neq 0}{\iint}
\big( \log \unifsq{\phi_t \psi_s} \big) 
\big( \partial \bar \partial
\log\big(
\dersq{\phi_t\psi_s}(z) \, \interab(x) + \dersq{\phi_t}(z) (1 - \interab(x))
\big) \big) 
\\
\label{eq:fourth}
&{-}
\frac{1}{48\pi \ii} 
\underset{\interabPrime(z) \neq 0}{\iint}
\big( \log\big(
\dersq{\phi_t\psi_s}(z) \, \interab(x) + \dersq{\phi_t}(z) (1 - \interab(x))
\big) \big)
\\
\nonumber
&\phantom{\frac{1}{48\pi \ii} \underset{\interabPrime(z) \neq 0}{\iint}} \quad
\times \big(
\partial \bar \partial
\log\big(
\dersq{\phi_t\psi_s}(z) \, \interab(x) + \dersq{\phi_t}(z) (1 - \interab(x))
\big) \big)  .
\end{align}
Note that the second term~\eqref{eq:second} only depends on $\phi_t$, so its $s$-derivative vanishes. 
We proceed to take derivatives of the fourth term~\eqref{eq:fourth} by applying the product rule to 
\begin{align}
\label{eq:log1}
\log\big(
\dersq{\phi_t\psi_s}(z) \, \interab(x) + \dersq{\phi_t}(z) (1 - \interab(x))
\big).
\end{align}
Using~\eqref{eq:derivatives_one}, putting $t=s=0$ in the logarithm~\eqref{eq:log1} yields zero. Therefore, under the $\pdv{}{t}{s} \big|_{t=s=0}$-derivative of~\eqref{eq:computation_1}, only those terms with a single derivative with respect to $s$ or $t$ on each factor of the form~\eqref{eq:log1} contribute. 
A short computation shows that these first derivatives of the respective factors with $t=s=0$ equal
\begin{align}
\label{eq:dlog1}
\pdv{t}
\log\big(
\dersq{\phi_t\psi_0}(z) \, \interab(x) + \dersq{\phi_t}(z) (1 - \interab(x))
\big)
\bigg|_{t = 0}
&= {-}2 \Re(v'(z)) ,
\\ 
\label{eq:dlog2}
\pdv{s}
\log\big(
\dersq{\phi_0\psi_s}(z) \, \interab(x) + \dersq{\phi_0}(z) (1 - \interab(x))
\big)
\bigg|_{s = 0}
&= {-}2 \Re(w'(z)) \, \interab(x) ,
\end{align}
since using~(\ref{eq:def_F_phi},~\ref{eq:floweqs}), we have 
\begin{align*}
\pdv{t} \dersq{\phi_t}(z) \Big|_{t = 0} 
= \; & \pdv{t} |\phi_{-t}'(z)|^2 \Big|_{t = 0} 
= \pdv{t} \Big( \phi_{-t}'(z) \, \overline{\phi_{-t}'(z)}  \Big) \Big|_{t = 0} \\
= \; &{-}\Big( v'(\phi_{-t}(z)) \, \phi_{-t}'(z) \, \overline{\phi_{-t}'(z)}
+ \phi_{-t}'(z) \, \overline{v'(\phi_{-t}(z))} \, \overline{\phi_{-t}'(z)} \Big) \Big|_{t = 0} 
\\
= \; & {-} ( v'(z) + \overline{v'(z)}) 
={-}2 \Re(v'(z)) ,
\end{align*}
and similarly for $\smash{\pdv{s} \dersq{\psi_s}(z) \Big|_{s = 0} = 
{-}2 \Re(w'(z))}$.

We now insert equations~\eqref{eq:dlog1}~\&~\eqref{eq:dlog2} into the $\pdv{}{t}{s} \big|_{t=s=0}$-derivative of~\eqref{eq:computation_1}. 
Noting that $z \mapsto \Re(v'(z))$ is harmonic, we see that the contribution from the fourth term~\eqref{eq:fourth}~is
\begin{align} \label{eq:fourthagain}
\begin{split}
&{-}\frac{4}{48\pi \ii} 
\underset{\interabPrime(z) \neq 0}{\iint}
\Big( \Re(v'(z))
\partial \bar \partial ( \Re(w'(z)) \, \interab(x))
\; + \; \Re(w'(z)) \, \interab(x) 
\underbrace{\partial \bar \partial \Re(v'(z))}_{= \, 0}  \Big) \\
= \; & 
{-}\frac{1}{12\pi \ii} \underset{\interabPrime(z) \neq 0}{\iint}
\Re(v'(z))
\partial \bar \partial
(\Re(w'(z)) \, \interab(x)).
\end{split}
\end{align}

Next, we turn to the remaining terms~\eqref{eq:first} and~\eqref{eq:third}.  
Interestingly, involving $\unifsq{\phi_t \psi_s}$, these terms vanish for real vector fields by~\eqref{eq:unifsq}. 
For complex vector fields, their derivatives become
\begin{align}
\label{eq:int_u_t}
\begin{split}
\pdv{}{t}{s}
&\frac{1}{48\pi \ii} \underset{\interaPrime(z) \neq 0}{\iint}
\big(\log \unifsq{\phi_t \psi_s}\big)
\big(\partial \bar \partial
\log\big(
\dersq{\phi_t}(z) \, \intera(x) + 1 - \intera(x)
\big)\big)
\Big|_{t = s = 0}
\\
=
{-}&\frac{1}{24\pi \ii} \underset{\interaPrime(z) \neq 0}{\iint}
\Big(\pdv{s}
\unifsq{\psi_s}
\Big|_{s = 0}\Big)
\partial \bar \partial
\Re(v'(z)) \intera(x)
\end{split}
\end{align}
by~\eqref{eq:fourthagain}, and
\begin{align}
\label{eq:int_u_s}
\begin{split}
\pdv{}{t}{s}
&\frac{1}{48\pi \ii} 
\underset{\interabPrime(z) \neq 0}{\iint}
\big(\log \unifsq{\phi_t \psi_s}\big)
\big(\partial \bar \partial
\log\big(
\dersq{\phi_t\psi_s}(z) \, \interab(x) + \dersq{\phi_t}(z) (1 - \interab(x))
\big)\big)
\Big|_{t = s = 0}
\\
= 
{-}&\frac{1}{24\pi \ii} \underset{\interabPrime(z) \neq 0}{\iint}
\Big(\pdv{t}
\unifsq{\phi_t}
\Big|_{t = 0}\Big)
\partial \bar \partial
\Re(w'(z)) \interab(x) 
\end{split}
\end{align}
by~(\ref{eq:dlog1},~\ref{eq:dlog2}), 
where we again used the harmonicity of $z \mapsto \Re(v'(z))$.
Observe now that since the integrals are independent of the precise definitions of $\interab$ and $\intera$, the sum of~\eqref{eq:int_u_t} and~\eqref{eq:int_u_s}
is symmetric under exchange of $v$ and $w$, that is, under exchange of $\phi_t$ and $\psi_s$. 
Thus, from~\eqref{eq:fourthagain} 
the derivative~\eqref{eq:double derivative cocycle}, up to known symmetric terms, equals
\begin{align}
\label{eq:asym_terms}
{-}\frac{\charge}{12\pi \ii} \underset{\interabPrime(z) \neq 0}{\iint}
\Re(v'(z))
\partial \bar \partial
\Re(w'(z)) \, \interab(x).
\end{align}
In summary, the full cocycle $\algcocycle$ is obtained by inserting the expression~\eqref{eq:asym_terms} into~\eqref{eq:full_cocycle}, where the symmetric terms cancel out.

To finish, we shall compute the differential
$
\partial \bar \partial
(\Re(w'(z)) \, \interab(x))
$
piece by piece using the decomposition
\begin{align*}
\partial \bar \partial (fg)
= \partial((\bar \partial f) g + f (\bar \partial g))
= \partial(g (\bar \partial f) + f (\bar \partial g))
= g \partial \bar \partial f + \partial g \bar \partial f + \partial f \bar \partial g + f \partial \bar \partial g ,
\end{align*}
with $f=\Re(w'(z))$ and 
$g=\interab(x)$. We obtain the following terms (only one contributes).
\begin{enumerate}
\item
$
(
\partial
\bar \partial
\Re(w'(z))
)
\interab(x)
= 0,
$
 since $z \mapsto \Re(w'(z))$ is harmonic.
\item
Inserting
$
\Re(w'(z))
(
\partial
\bar \partial
\interab(x)
)
$
into the integral~\eqref{eq:asym_terms} yields an expression which is symmetric with respect to exchanging $v \leftrightarrow w$. Therefore, this term does not appear in the full cocycle~\eqref{eq:full_cocycle}.
\item Using the coordinate $z = \theta + \ii x$ from~\eqref{eq:zcoordinate}, we find that the cross-terms are
\begin{align*}
(
\partial
\interab(x)
) \; &
(
\bar \partial
\Re(w'(z))
)
+
(
\partial
\Re(w'(z))
)
(
\bar \partial
\interab(x)
)
\\
=\; &
\frac{\overline{w''(z)}}{2}
\frac{(- \ii) \interabPrime(x)}{2}
\dd z \dd \bar z
+
\frac{w''(z)}{2}
\frac{\ii \interabPrime(x)}{2}
\dz
\\
=\; &
\frac{\ii}{2}\frac{w''(z) - \overline{w''(z)}
}{2}
\interabPrime(x)
\dz
\\
=\; &
\ii
\Im(w''(z))
\, \interabPrime(x)
\, \dd \theta \dd x.
\end{align*}
\end{enumerate}
We conclude that~\eqref{eq:asym_terms} equals 
\begin{align*}
\textnormal{\eqref{eq:asym_terms}}
= 
{-}\frac{\charge}{12\pi}
\int_0^{2\pi}
\int_0^1
\Re(v'(\theta + \ii x))
\Im(w''(\theta + \ii x))
\, \interabPrime(x)
\, \dd x \dd \theta.
\end{align*}
Therefore, in summary, the full cocycle~\eqref{eq:full_cocycle} equals
\begin{align*}
\algcocycle(v ,w)
=\; &
{-}\frac{\charge}{24\pi}
\int_0^{2\pi}
\int_0^1
\Big(
\Re(v'(\theta + \ii x))
\Im(w''(\theta + \ii x))
\\
&\phantom{{-}\frac{\charge}{24\pi}
\int_0^{2\pi}
\int_0^1
\Big(}
- \Im(v''(\theta + \ii x))
\Re(w'(\theta + \ii x))
\Big)
\, \interabPrime(x)
\, \dd x \dd \theta
\\
=\; &
{-}\frac{\charge}{24\pi}
\int_0^1
\int_0^{2\pi}
\Big(
\Re(v'(\theta + \ii x))
\Im(w''(\theta + \ii x))
\\
&\phantom{{-}\frac{\charge}{24\pi}
\int_0^{2\pi}
\int_0^1
\Big(}
+ \Im(v'(\theta + \ii x))
\Re(w''(\theta + \ii x))
\Big)
\, \dd \theta
\, \interabPrime(x)
\, \dd x
\\
=\; &
{-}\frac{\charge}{24\pi}
\int_0^1
\left(
\int_0^{2\pi}
\Im(v'(\theta + \ii x)
w''(\theta + \ii x)
)
\, \dd \theta
\right)
\, \interabPrime(x)
\, \dd x
\\
=\; &
{-}\frac{\charge}{24\pi}
\int_0^1
\Im \left(
\int_0^{2\pi}
v'(\theta)
w''(\theta)
\, \dd \theta
\right)
\, \interabPrime(x)
\, \dd x
\\
=\; &
\phantom{{+}}\frac{\charge}{24\pi}
\Im
\int_0^{2\pi}
v'(\theta)
w''(\theta)
\, \dd \theta ,
\end{align*}
using an integration by parts with respect to $\theta$ in the second equality, and deformation of the contour integral over $v' w''$ to $x = 0$
in the fourth equality (thanks to analyticity).
\end{proof}

\appendix

\section{The conformal anomaly with boundary term}
\label{appendix:cocycle_property_boundary}

The following cocycle property (analogous to item~\ref{item:liou_cocycle} of Proposition~\ref{prop:liou_properties}) 
will be used in Appendix~\ref{appendix:diffeomorphisms_triviality} to prove triviality of the group cocycle $\calpha$ (Definition~\ref{def:cocycle_deformation}).

\begin{proposition} \label{prop:cocycle_property_boundary}
For $\sigma_1, \sigma_2 \in C^\infty(\Sigma, \R)$, we have
\begin{align}
\label{eq:cocycle_property_boundary}
\lioub{\sigma_1}{g} + \lioub{\sigma_2}{e^{2\sigma_1}g}
= \lioub{\sigma_1 + \sigma_2}{g} .
\end{align}
\end{proposition}

\begin{proof}
Similarly as in  Equations~(\ref{eq:pairingincoords},~\ref{eq:applygreens},~\ref{eq:greens}), we see that
\begin{align}
    \label{eq:stopairing}
\liou{g}{e^{2\sigma}g} =
- \frac{1}{12 \pi} \iint_\Sigma \bigg(
\frac{1}{2} |\nabla_g \sigma|_g^2 + R_g \sigma
\bigg) \vol_g 
+ \frac{1}{24 \pi} \int_{\partial \Sigma} \sigma N_{g} \sigma \, \tilde \vol_{g} ,
\end{align}
for each $\sigma \in C^\infty(\Sigma, \R)$, 
and thus
\begin{align}
\label{eq:anomalies_relation}
\lfunct(\sigma, g)
=
- \liou{g}{e^{2\sigma} g}
+ \frac{1}{12\pi} \int_{\partial \Sigma} k_g \sigma \, \tilde \vol_{g}
+ \frac{1}{24 \pi} \int_{\partial \Sigma} \sigma N_{g} \sigma \, \tilde \vol_{g}.
\end{align}
We compute~\eqref{eq:cocycle_property_boundary} for each term in~\eqref{eq:anomalies_relation} individually.
For $\liou{g}{e^{2\sigma}g}$, it follows from the computation in the proof of Proposition~\ref{prop:liou_properties} that
\begin{align}
\nonumber
& \; - \liou{g}{e^{2\sigma_1} g}
- \liou{e^{2\sigma_1} g}{e^{2\sigma_1 + 2\sigma_2} g}
+ \liou{g}{e^{2\sigma_1 + 2\sigma_2} g}
\\ = \; & 
\nonumber
-\frac{1}{24 \pi} \int_{\partial \Sigma} \big(
\sigma_1 N_{g} (\sigma_1 + \sigma_2) - (\sigma_1 + \sigma_2) N_{g} \sigma_1
\big) \, \tilde \vol_{g}
\\ = \; &
\label{eq:co_an}
\phantom{\,+\;} \frac{1}{24 \pi} \int_{\partial \Sigma} \big(
\sigma_1 N_{g} \sigma_2 - \sigma_2 N_{g} \sigma_1
\big) \, \tilde \vol_{g}.
\end{align}
For the term including the boundary curvature, 
we use the identities $\tilde \vol_{e^{2\sigma} g} = e^{\sigma} \tilde \vol_g$ and $k_{e^{2\sigma} g} = e^{- \sigma} (k_g + N_g \sigma)$ 
(for the latter, see, e.g.,~\cite[Appendix A]{Wang:Equivalent_descriptions_of_the_Loewner_energy}), 
to obtain
\begin{align}
\nonumber
& \; \frac{1}{12\pi} \int_{\partial \Sigma} k_g \sigma_1 \, \tilde \vol_{g}
+\frac{1}{12\pi} \int_{\partial \Sigma} k_{e^{2\sigma_1} g} \sigma_2 \, \tilde \vol_{e^{2\sigma_1} g}
-\frac{1}{12\pi} \int_{\partial \Sigma} k_g (\sigma_1 + \sigma_2) \, \tilde \vol_{g}
\\ = \; & 
\nonumber
\frac{1}{12\pi} \int_{\partial \Sigma} \sigma_2 (k_{g} + N_g \sigma_1) \, \tilde \vol_{g}
-\frac{1}{12\pi} \int_{\partial \Sigma} k_g \sigma_2 \, \tilde \vol_{g}
\\ = \; & 
\label{eq:co_k}
\frac{1}{12\pi} \int_{\partial \Sigma} \sigma_2 N_g \sigma_1 \, \tilde \vol_{g}.
\end{align}
Since a conformal change of the metric does not change angles, the unit normal vector fields are related by
\begin{align*}
N_{e^{2\sigma} g} = \frac{N_g}{|N_g|_{e^{2\sigma} g}} = \frac{N_g}{e^{\sigma} |N_g|_g} = e^{-\sigma} N_g.
\end{align*}
Thus, for the last term in~\eqref{eq:anomalies_relation} we obtain
\begin{align}
\nonumber
\; & 
\frac{1}{24 \pi} \int_{\partial \Sigma} ( \sigma_1 N_{g} \sigma_1 ) \, \tilde \vol_{g} 
+ \frac{1}{24 \pi} \int_{\partial \Sigma} ( \sigma_2 N_{e^{2\sigma_1} g} \sigma_2 ) \, \tilde \vol_{e^{2\sigma_1} g} 
- \frac{1}{24 \pi} \int_{\partial \Sigma} (\sigma_1 + \sigma_2) N_{g} (\sigma_1 + \sigma_2) \, \tilde \vol_{g}
\\ = \; & 
\label{eq:co_n}
- \frac{1}{24 \pi} \int_{\partial \Sigma} (
\sigma_1 N_{g} \sigma_2
+ \sigma_2 N_{g} \sigma_1
) \, \tilde \vol_{g}.
\end{align}
The asserted cocycle property~\eqref{eq:cocycle_property_boundary} now follows 
by adding~\eqref{eq:co_an},~\eqref{eq:co_k}, and~\eqref{eq:co_n}.
\end{proof}

The cocycle property in Proposition~\ref{prop:cocycle_property_boundary} implies that for a Riemann surface $\Sigma$ (with boundary), we may change Definition~\ref{def:detrc} of the real determinant line $\Detrc(\Sigma)$ to
\begin{align*}
\Detrc(\Sigma) \coloneqq (\R \times \conf(\Sigma))/_\sim,
\end{align*}
with the equivalence relation
\begin{align*}
(\lambda_1, g) \sim (\lambda_2, e^{2\sigma} g) 
\qquad \iff \qquad 
\lambda_1 = e^{- \charge \lfunct(\sigma, g)} \lambda_2.
\end{align*}
With this definition, 
the global trivialization $\globalsection(A)$ for a cylinder $A$
from Proposition~\ref{prop:global_section} reads 
\begin{align}
\label{eq:globalsectionboundary}
\globalsection(A)
= e^{-\charge \liou{g_0}{g}}[g] 
= \exp \bigg( - \frac{\charge}{24\pi} \int_{\partial \Sigma} \sigma N_{g_0} \sigma \, \tilde \vol_{g_0} \bigg) [g_0]
, \qquad A \in \cylinders ,
\end{align}
using Equation~\eqref{eq:stopairing} and $g = e^{2\sigma} g_0$.
However, with this definition of $\Detrc$ one then would have to make a conformal change to admissible metrics before applying the sewing isomorphisms in Definition~\ref{def:sewing_iso}.

\section{Triviality of the cocycle on diffeomorphisms of the circle}
\label{appendix:diffeomorphisms_triviality}

In this appendix, we give an explicit 
proof the that cocycle $\algcocycle$ vanishes on real vector fields, based on the ideas summarized in Remark~\ref{rmk:diffeomorphisms_triviality}.
In fact, we directly prove triviality of the Lie group cocycle $\calpha(\phi, \psi)$, where $\phi, \psi \in \Diffpan$  (Definition~\ref{def:cocycle_deformation}), 
overcoming the obstruction in the integration of the Lie algebra coycle pointed out in Remark~\ref{remark:obstruction}.

\begin{proposition}
\label{prop:diffeomorphisms_triviality}
For $\phi, \psi \in \Diffpan$, we have
\textnormal{(}with $\dersq{\psi}(z)$ as in~\eqref{eq:def_F_phi}\textnormal{)}
\begin{align}
\begin{split}
    \label{eq:calphaondiff}
\calpha(\phi, \psi) = \; & \bdryint{\phi \circ \psi} - \bdryint{\phi} - \bdryint{\psi},
\\
\bdryint{\phi} \coloneqq \; & \frac{1}{96\pi} \int_0^{2\pi} 
\log \dersq{\phi}(z)
\, \partial_x 
\log \dersq{\phi}(z)
\, \dd \theta
,
\end{split}
\end{align}
which is a coboundary in $H^2(\Diffpan, \R)$.
\end{proposition}

\begin{proof}
By Equation~\eqref{eq:wanted_isomorphism}, the cocycle $\calpha(\phi, \psi)$ is the factor of the isomorphism 
\begin{align} 
\natisophi{\psi}{\A}{\A \diffActing{1} \phi} \colon \Detrc(\A \diffActing{1} \psi) \otimes (\Detrc(\A))^\vee \longrightarrow \Detrc(\A \diffActing{1} \phi \circ \psi) \otimes (\Detrc(\A \diffActing{1} \phi))^\vee
\end{align}
with respect to the global trivialization $\globalsection$ in Equation~\eqref{eq:globalsectionboundary}.
Because $\psi \in \Diffpan$, the cylinder $\A \diffActing{1} \psi$ is still the uniformized representative in the sense of Proposition~\ref{prop:neretin_form}, 
and the canonical element of $\Detrc(\phi, \A)$ defined in~\eqref{eq:diffglobalsection} reads
\begin{align*}
\globalsectionphi(\psi) = e^{\charge \bdryint{\psi}} [\restrict{\dz}{\A}] \otimes [\restrict{\dz}{\A}]^\vee.
\end{align*}
Considering decompositions as in~\eqref{eq:decompa} and~\eqref{eq:decompa1} (where $\phi_t = \phi$) and following the left-hand side of the diagram in Figure~\ref{fig:changecomp}, 
we find that $\natisophi{\psi}{\A}{\A \diffActing{1} \phi}$ equals the following composition:
\begin{displaymath}
\begin{tikzcd}[ampersand replacement=\&, row sep=1em]
\Detrc(\A \diffActing{1} \psi) \otimes (\Detrc(\A))^\vee
\& e^{\charge \bdryint{\psi}} {[\restrict{\dz}{\A}] \otimes [\restrict{\dz}{\A}]^\vee}
\\
\Detrc(\Ua \diffActing{1} \psi) \otimes (\Detrc(\Ua))^\vee
\& e^{\charge \bdryint{\psi}} {[\restrict{\dz}{\Ua}] \otimes [\restrict{\dz}{\Ua}]^\vee}
\\
\Detrc(\Ub \diffActing{1} \psi) \otimes (\Detrc(\Ub))^\vee
\& e^{\charge \bdryint{\psi}} {
[\restrict{\phi_* (\dz)}{\Ub}] \otimes [ \restrict{\phi_*(\dz)}{\Ub}]^\vee
}
\\
\Detrc(\A \diffActing{1} \phi \circ \psi) \otimes (\Detrc(\A \diffActing{1} \phi))^\vee
\& e^{\charge \bdryint{\psi}} {[\restrict{\dz}{\A}] \otimes [\restrict{\dz}{\A}]^\vee}
\arrow[from=1-1, to=2-1]
\arrow[from=2-1, to=3-1]
\arrow[from=3-1, to=4-1]
\arrow[from=1-2, to=2-2, maps to]
\arrow[from=2-2, to=3-2, maps to]
\arrow[from=3-2, to=4-2, maps to]
\end{tikzcd}
\end{displaymath}
where the cylinders $U$ and $V$ (resp.~$U \diffActing{1} \psi$ and $V \diffActing{1} \psi$) are isomorphic and the isomorphism is given by $\phi$ in both cases.
Note that the conformal anomalies from the transformation of $\restrict{\phi_*(\dz)}{V}$ to $\restrict{\dz}{V}$ cancel out.
Now, since
\begin{align}
    \globalsection(A \diffActing{1} \phi \circ \psi) \otimes \globalsection(\A \diffActing{1} \phi)^\vee = e^{\charge \bdryint{\phi \circ \psi}} [\restrict{\dz}{\A}] \otimes e^{- \charge \bdryint{\phi}} [\restrict{\dz}{\A}]^\vee,
\end{align}
we conclude that the factor introduced by the
isomorphism $\natisophi{\psi}{\A}{\A \diffActing{1} \phi}$ with respect to the global trivialization $\globalsection$ equals $\exp ( \calpha(\phi, \psi) )$, 
where $\calpha(\phi, \psi)$ is given by Equation~\eqref{eq:calphaondiff}.
\end{proof}

\bigskip

\bibliographystyle{annotate}

\begin{thebibliography}{BGKRV24}

\bibitem[AHS23]{AHS:SLE_loop_via_conformal_welding_of_quantum_disks}
Morris Ang, Nina Holden, and Xin Sun.
\newblock The {SLE} loop via conformal welding of quantum disks.
\newblock {\em Electron. J.~Probab.}, 28:1--20, 2023.

\bibitem[Alv83]{Alvarez:Theory_of_strings_with_boundaries}
Orlando Alvarez.
\newblock Theory of strings with boundaries: {F}luctuations, topology and
  quantum geometry.
\newblock {\em Nucl. Phys. B}, 216(1):125--184, 1983.

\bibitem[AM24]{Alekseev-Meinrenken:Symplectic_geometry_of_Teichmuller_spaces_for_surfaces_with_ideal_boundary}
Anton Alekseev and Eckhard Meinrenken.
\newblock Symplectic geometry of {T}eichm\"uller spaces for surfaces with ideal
  boundary.
\newblock {\em Comm. Math. Phys.}, 405(10):1--46 (Article 229), 2024.

\bibitem[ARS25]{ARS:The_moduli_of_annuli_in_random_conformal_geometry}
Morris Ang, Guillaume R\'emy, and Xin Sun.
\newblock The moduli of annuli in random conformal geometry.
\newblock Ann. Sci. {\'E}c. Norm. Sup{\'e}r, to appear, 2025. 

\bibitem[AS60]{Ahlfors-Sario:Riemann_surfaces}
Lars~Valerian Ahlfors and Leo Sario.
\newblock {\em Riemann surfaces}.
\newblock Princeton University Press, 1960.

\bibitem[AST24]{AST:Berezin_quantization_conformal_welding_and_the_Bott-Virasoro_group}
Anton Alekseev, Samson~L. Shatashvili, and Leon~A. Takhtajan.
\newblock Berezin quantization, conformal welding and the {B}ott-{V}irasoro
  group. 
\newblock {\em Ann. Henri Poincar\'e}, 25(1):35--64, 2024.

\bibitem[BD16]{Benoist-Dubedat:SLE2_loop_measure}
St{\'e}phane Benoist and Julien Dub{\'e}dat.
\newblock An $\mathrm{SLE}_{2}$ loop measure.
\newblock {\em Ann. Henri Poincar{\'e}, Probab. Statist.}, 52(3):1406--1436,
  2016.

\bibitem[BFKM94]{BFKW:On_the_functional_logdet_and_related_flows}
Dan Burghelea, Lennie Friedlander, Thomas Kappeler, and Patrick McDonald.
\newblock On the functional logdet and related flows on the space of closed
  embedded curves on {$S^2$}.
\newblock {\em J.~Funct. Anal}, 120(2):440--466, 1994.

\bibitem[BGKRV24]{BGKRV:Virasoro_structure_and_the_scattering_matrix_for_Liouville_CFT}
Guillaume Baverez, Colin Guillarmou, Antti Kupiainen, R\'emi Rhodes, and
  Vincent Vargas.
\newblock The {V}irasoro structure and the scattering matrix for {L}iouville
  conformal field theory.
\newblock {\em Prob. Math. Phys.}, To appear, 2024.

\bibitem[BGKR24]{BGKR:Semigroup_of_annuli_in_Liouville_CFT}
Guillaume Baverez, Colin Guillarmou, Antti Kupiainen, and R\'emi Rhodes.
\newblock Semigroup of annuli in {L}iouville {CFT}.
\newblock Preprint in arXiv:2403.10914, 2024.

\bibitem[BJ24]{Baverez-Jego:The_CFT_of_SLE_loop_measures_and_the_Kontsevich-Suhov_conjecture}
Guillaume Baverez and Antoine Jego.
\newblock The {CFT} of {SLE} loop measures and the {K}ontsevich-{S}uhov conjecture. 
\newblock Preprint in arXiv:2407.09080, 2024.

\bibitem[Bot77]{Bott:On_the_characteristic_classes_of_groups_of_diffeomorphisms}
Raoul Bott.
\newblock On the characteristic classes of groups of diffeomorphisms.
\newblock {\em Enseign. Math.}, 23:209--220, 1977.

\bibitem[BR87a]{Bowick-Rajeev:The_holomorphic_geometry_of_closed_bosonic_string_theory_and_DiffS1modS1}
Mark~J. Bowick and Sarada~G. Rajeev.
\newblock The holomorphic geometry of closed bosonic string theory and
  {$\mathrm{Diff}(S^1)/S^1$}.
\newblock {\em Nucl. Phys. B}, 293:348--384, 1987.

\bibitem[BR87b]{Bowick-Rajeev:String_theory_as_the_Kahler_geometry_of_loop_space}
Mark~J. Bowick and Sarada~G. Rajeev.
\newblock String theory as the {K}{\"a}hler geometry of loop space.
\newblock {\em Phys. Rev. Lett.}, 58(6):535--538, 1987.

\bibitem[CP14]{Chavez-Pickrell:Werners_measure_on_self-avoiding_loops_and_welding}
Angel Chavez and Doug Pickrell.
\newblock Werner's measure on self-avoiding loops and welding.
\newblock {\em SIGMA Symmetry Integrability Geom. Methods Appl.}, 10(81):1--42,
  2014.

\bibitem[CW23]{Carfagnini-Wang:OM_functional_for_SLE_loop_measures}
Marco Carfagnini and Yilin Wang.
\newblock Onsager {M}achlup functional for {$\mathrm{SLE}_\kappa$} loop
  measures.
\newblock {\em Comm. Math. Phys.}, 405(11):1--13 (Article 258), 2024.

\bibitem[dAI98]{Azcarraga-Izquierdo:Lie_groups_Lie_algebras_cohomology_and_some_applications_in_physics}
Jos{\'e}~A. de~Azc{\'a}rraga and Jos{\'e}~M. Izquierdo.
\newblock {\em Lie groups, {L}ie algebras, cohomology, and some applications in
  physics}.
\newblock Cambridge monographs on mathematical physics. Cambridge University
  Press, 1998.

\bibitem[DS11]{Duplantier-Sheffield:LQG_and_KPZ}
Bertrand Duplantier and Scott Sheffield.
\newblock Liouville quantum gravity and {KPZ}.
\newblock {\em Invent. Math.}, 185(2):333--393, 2011.

\bibitem[Dub15]{Dubedat:SLE_and_Virasoro_representations_localization}
Julien Dub{\'e}dat.
\newblock $\mathrm{SLE}$ and {V}irasoro representations: localization.
\newblock {\em Comm. Math. Phys.}, 336(2):695--760, 2015.

\bibitem[FBZ04]{Frenkel-Ben-Zvi:Vertex_Algebras_and_Algebraic_Curves}
Igor~B. Frenkel and Devid Ben-Zvi.
\newblock {\em Vertex {A}lgebras and {A}lgebraic {C}urves}, volume~88 of {\em
  Mathematical Surveys and Monographs}.
\newblock American Mathematical Society, 2 edition, 2004.

\bibitem[Fri04]{Friedrich:On_connections_of_CFT_and_SLE}
Roland Friedrich.
\newblock On connections of conformal field theory and stochastic {L}oewner
  evolution.
\newblock Preprint in arXiv:math-ph/0410029, 2004.

\bibitem[FS87]{Friedan-Shenker:The_analytic_geometry_of_two-dimensional_conformal_field_theory}
Daniel Friedan and Stephen Shenker.
\newblock The analytic geometry of two-dimensional conformal field theory.
\newblock {\em Nucl. Phys. B}, 281:509--545, 1987.

\bibitem[Gaw99]{Gawedzki:CFT_lectures}
Krzysztof Gaw{\c e}dzki.
\newblock Lectures on conformal field theory.
\newblock In {\em Quantum fields and strings: {A} course for mathematicians},
  pages 727--805. American Mathematical Society, 1999.

\bibitem[GKRV21]{GKRV:Segals_axioms_for_Liouville_theory}
Colin Guillarmou, Antti Kupiainen, R{\'e}mi Rhodes, and Vincent Vargas.
\newblock Segal's axioms and bootstrap for {L}iouville {T}heory.
\newblock Preprint in arXiv:2112.14859, 2021.

\bibitem[GL06]{Gordina-Lescot:Riemannian_geometry_of_DiffS1}
Maria Gordina and Paul Lescot.
\newblock Riemannian geometry of {$\mathrm{Diff}(S^1)/S^1$}.
\newblock {\em J.~Func. Anal.}, 239:611--630, 2006.

\bibitem[Gor08]{Gordina:Riemannian_geometry_of_DiffS1_revisited}
Maria Gordina.
\newblock Riemannian geometry of {$\mathrm{Diff}(S^1)/S^1$} revisited.
\newblock In {\em Stochastic analysis in mathematical physics}, pages 19--29.
  World Sci. Publ., Hackensack, NJ, 2008.

\bibitem[GR07]{Guieu-Roger:Virasoro_book}
Laurent Guieu and Claude Roger.
\newblock {\em L'alg{\`e}bre et le groupe de {V}irasoro: aspects
  g{\'e}om{\'e}triques et alg{\'e}briques, g{\'e}n{\'e}ralisations}.
\newblock Publications CRM, 2007.

\bibitem[GRV19]{GRV:Polyakovs_formulation_of_2D_bosonic_string_theory}
Colin Guillarmou, R{\'e}mi Rhodes, and Vincent Vargas.
\newblock Polyakov's formulation of {2D} bosonic string theory.
\newblock {\em Publ. Math. Inst. Hautes {\'E}tudes Sci.}, 130(1):111--185,
  2019.

\bibitem[Hen14]{Henriques:Conformal_field_theory_lectures}
Andr\'e Henriques.
\newblock Conformal field theory, 2014. \\
\newblock Lecture notes,
  \url{http://andreghenriques.com/Teaching/CFT-2014.pdf}.

\bibitem[Hua97]{Huang:2D_Conformal_geometry_and_VOAs}
Yi-Zhi Huang.
\newblock {\em Two-{D}imensional {C}onformal {G}eometry and {V}ertex {O}perator
  {A}lgebras}, volume 148 of {\em Progress in Mathematics}.
\newblock Birkh\"auser Basel, 1997.

\bibitem[Hua98]{Huang:Genus-zero_modular_functors_and_intertwining_operator_algebras}
Yi-Zhi Huang.
\newblock Genus-zero modular functors and intertwining operator algebras.
\newblock {\em Int. J.~Math.}, 9(7):845--863, 1998.

\bibitem[Kon87]{Kontsevich:Virasoro_and_Teichmuller_spaces}
Maxim Kontsevich.
\newblock The {V}irasoro algebra and {T}eichm{\"u}ller spaces.
\newblock {\em Funct. Anal. Appl.}, 21(2):156--157, 1987.

\bibitem[Kon03]{Kontsevich:CFT_SLE_and_phase_boundaries}
Maxim Kontsevich.
\newblock {CFT}, $\mathrm{SLE}$, and phase boundaries.
\newblock In {\em Oberwolfach Arbeitstagung}, 2003.

\bibitem[KS07]{Kontsevich-Suhov:On_Malliavin_measures_SLE_and_CFT}
Maxim Kontsevich and Yuri Suhov.
\newblock On {M}alliavin measures, $\mathrm{SLE}$, and {CFT}.
\newblock {\em P. Steklov I. Math.}, 258(1):100--146, 2007.

\bibitem[KW09]{Khesin-Wendt:The_geometry_of_infinite-dimensional_groups}
Boris Khesin and Robert Wendt.
\newblock {\em The geometry of infinite-dimensional groups}.
\newblock Series of Modern Surveys in Mathematics. Springer-Verlag, 2009.

\bibitem[Law09]{Lawler:Partition_functions_loop_measure_and_versions_of_SLE}
Gregory~F. Lawler.
\newblock Partition functions, loop measure, and versions of {$\mathrm{SLE}$}.
\newblock {\em J.~Stat. Phys.}, 134(5-6):813--837, 2009.

\bibitem[LSW03]{LSW:Conformal_restriction_the_chordal_case}
Gregory~F. Lawler, Oded Schramm, and Wendelin Werner.
\newblock Conformal restriction: the chordal case.
\newblock {\em J.~Amer. Math. Soc.}, 16(4):917--955, 2003.

\bibitem[Mai21]{Maibach:Master_thesis}
Sid Maibach.
\newblock Real and {Complex} {Valued} {One}-{Dimensional} {Genus}-{Zero}
  {Modular} {Functors} in {Conformal} {Field} {Theory}.
\newblock Master's thesis, University of Bonn, 2021.

\bibitem[Mal99]{Malliavin:Canonic_diffusion_above_the_diffeomorphism_group_of_the_circle}
Paul Malliavin.
\newblock The canonic diffusion above the diffeomorphism group of the circle.
\newblock {\em C. R. Acad. Sci. Paris S{\'e}r. I Math.}, 329(4):325--329, 1999.

\bibitem[Mil85]{Milnor:Remarks_on_infinite_dimensional_Lie_groups}
John~W. Milnor.
\newblock Remarks on infinite dimensional {L}ie groups.
\newblock In {\em 40th Les Houches Summer School on Theoretical Physics:
  Relativity, Groups and Topology}, volume~2, pages 1007--1057, 1985.

\bibitem[Nee05]{Neeb:Infinite-Dimensional_Lie_Groups}
Karl-Hermann Neeb.
\newblock {\em Infinite-dimensional {L}ie groups}.
\newblock 3rd cycle. Monastir (Tunisie), 2005.
\newblock \url{https://cel.hal.science/cel-00391789}.

\bibitem[Ner90]{Neretin:Holomorphic_extensions_of_representations_of_the_group_of_diffeomorphisms_of_the_circle}
Yu~A. Neretin.
\newblock Holomorphic extensions of representations of the group of
  diffeomorphisms of the circle.
\newblock {\em Math. USSR Sbornik}, 67(1):75--97, 1990.

\bibitem[Ner96]{Neretin:book}
Yu~A. Neretin.
\newblock {\em Categories of symmetries and infinite-dimensional groups}.
\newblock Number new ser., 16 in Oxford Science Publications. Clarendon Press;
  Oxford University Press, 1996.

\bibitem[NS95]{Nag-Sullivan:Teichmuller_theory_and_the_universal_period_mapping_via_quantum_calculus_and_H_half_space_on_the_circle}
Subhashis Nag and Dennis Sullivan.
\newblock Teichm\"uller theory and the universal period mapping via quantum
  calculus and the {$H^{1/2}$} space on the circle.
\newblock {\em Osaka J.~Math.}, 32(1):1--34, 1995.

\bibitem[NV90]{Nag-Verjovsky:DiffS1_and_Teichmuller_spaces}
Subhashis Nag and Alberto Verjovsky.
\newblock {$\mathrm{Diff}(S^1)$} and the {T}eichm\"uller spaces.
\newblock {\em Comm. Math. Phys.}, 130(1):123--138, 1990.

\bibitem[OPS88]{OPS:Extremals_of_determinants_of_Laplacians}
Brad Osgood, Ralph Phillips, and Peter Sarnak.
\newblock Extremals of determinants of {L}aplacians.
\newblock {\em J.~Funct. Anal.}, 80(1):148--211, 1988.

\bibitem[Pol81]{Polyakov:Quantum_geometry_of_bosonic_strings}
Alexander~M. Polyakov.
\newblock Quantum geometry of bosonic strings.
\newblock {\em Phys. Lett. B}, 103(3):207--210, 1981.

\bibitem[PW24]{Peltola-Wang:LDP}
Eveliina Peltola and Yilin Wang.
\newblock Large deviations of multichordal {$\mathrm{SLE}_{0+}$}, real rational
  functions, and zeta-regularized determinants of {L}aplacians.
\newblock {\em J.~Eur. Math. Soc.}, 26(2):469--535, 2024.

\bibitem[Rad03]{Radnell:PhD}
David Radnell.
\newblock {\em Schiffer variation in {T}eichm\"ueller space, determinant line
  bundles, and modular functors}.
\newblock PhD thesis, Rutgers University, New Brunswick, NJ, USA, 2003.

\bibitem[RS71]{Ray-Singer:R-Torsion_and_the_Laplacian_on_Riemannian_manifolds}
Daniel~B. Ray and Isadore~M. Singer.
\newblock R-{T}orsion and the {L}aplacian on {R}iemannian manifolds.
\newblock {\em Adv. Math.}, 7(2):145--210, 1971.

\bibitem[RS12]{Radnell-Schippers:The_semigroup_of_rigged_annuli_and_the_Teichmueller_space_of_the_annulus}
David Radnell and Eric Schippers.
\newblock The semigroup of rigged annuli and the {T}eichm\"uller space of the
  annulus.
\newblock {\em J.~London Math. Soc.}, 86(2):312--342, 2012.

\bibitem[RSS17]{RSS:Quasiconformal_Teichmuller_theory_as_analytical_foundation_for_CFT}
David Radnell, Eric Schippers, and Wolfgang Staubach.
\newblock Quasiconformal {T}eichm{\"u}ller theory as an analytical foundation
  for two-dimensional conformal field theory.
\newblock In {\em Proceedings of the Conference on Lie Algebras, Vertex
  Operator Algebras, and Related Topics (Univ. Notre Dame, Indiana)}, volume
  695 of {\em Contemp. Math.}, pages 205--238. American Mathematical Society,
  2017.

\bibitem[RSSS21]{RSSS:Schiffer_operators_and_calculation_of_a_determinant_line_in_conformal_field_theory}
David Radnell, Eric Schippers, Mohammad Shirazi, and Wolfgang Staubach.
\newblock Schiffer operators and calculation of a determinant line in conformal
  field theory.
\newblock {\em New York J.~Math.}, 27(1):253--271, 2021.

\bibitem[Sch08]{Schottenloher:Mathematical_introduction_to_CFT}
Martin Schottenloher.
\newblock {\em A mathematical introduction to conformal field theory}, volume
  759 of {\em Lecture Notes in Physics}.
\newblock Springer-Verlag, Berlin Heidelberg, 2nd edition, 2008.

\bibitem[Seg88]{Segal:Definition_of_CFT}
Graeme Segal.
\newblock The definition of conformal field theory.
\newblock In {\em Differential geometrical methods in theoretical physics
  (Como, 1987)}, volume 250 of {\em NATO Adv. Sci. Inst. Ser. C Math. Phys.
  Sci.}, pages 165--171. Kluwer Acad. Publ., Dordrecht, 1988.

\bibitem[SH62]{Schiffer-Hawley:Connections_and_conformal_mapping}
Menahem~M. Schiffer and Newton~S. Hawley.
\newblock Connections and conformal mapping.
\newblock {\em Acta Math.}, 107(3-4):175--274, 1962.

\bibitem[TT06]{Takhtajan-Teo:Weil-Petersson_metric_on_the_universal_Teichmuller_space}
Leon~A. Takhtajan and Lee-Peng Teo.
\newblock Weil-{P}etersson metric on the universal {T}eichm\"uller space.
\newblock {\em Mem. Amer. Math. Soc.}, 183:1--119, 2006.

\bibitem[TV15]{Teschner-Vartanov:SUSY_gauge_theories_quantization_of_M_flat_and_CFT}
J{\"o}rg Teschner and Grigorii~S. Vartanov.
\newblock Supersymmetric gauge theories, quantization of
  {$\mathcal{M}_{\textrm{flat}}$}, and conformal field theory.
\newblock {\em Adv. Theor. Math. Phys.}, 19(1):1--135, 2015.

\bibitem[Wan19]{Wang:Equivalent_descriptions_of_the_Loewner_energy}
Yilin Wang.
\newblock Equivalent descriptions of the {L}oewner energy.
\newblock {\em Invent. Math.}, 218:573--621, 2019.

\bibitem[Wer08]{Werner:The_conformally_invariant_measure_on_self-avoiding_loops}
Wendelin Werner.
\newblock The conformally invariant measure on self-avoiding loops.
\newblock {\em J.~Amer. Math. Soc.}, 21(1):137--169, 2008.

\bibitem[Zha21]{Zhan:SLE_loop_measures}
Dapeng Zhan.
\newblock {SLE} loop measures.
\newblock {\em Probab. Theory Related Fields}, 179(1-2):345--406, 2021.

\end{thebibliography}

\end{document}